\newtheorem{lemma}{Lemma}[section]
\newtheorem{theorem}{Theorem}[section]
\newtheorem{corollary}{Corollary}[section]
\newtheorem{claim}{Claim}[section]
\newtheorem*{claim*}{Claim}
\newtheorem{assumption}{Assumption}[section]
\newtheorem{definition}{Definition}[section]
\newtheorem{remark}{Remark}[section]
\newcommand{\vecp}{{\mathbf p}}
\newcommand{\vecx}{{\mathbf x}}
\newcommand{\vecs}{{\mathbf s}}
\newcommand{\vecw}{{\mathbf w}}
\newcommand{\vecy}{{\mathbf y}}
\newcommand{\vecz}{{\mathbf z}}
\newcommand{\R}{{\mathbb R}}
\newcommand{\ts}{{\tilde{s}}}
\newcommand{\tvecs}{{\tilde{\vecs}}}
\newcommand{\ot}{\leftarrow}
\newcommand{\Oracle}{{\mathcal O}\xspace}
\newcommand{\AOracle}{\widetilde{\mathcal O}\xspace}
\newcommand{\classP}{{\sf P}}
\newcommand{\classPPAD}{{\sf PPAD}}
\newcommand{\UpdatePrice}{{\textsc{Update}}\xspace}
\let\oldnl\nl% Store \nl in \oldnl
\newcommand{\nonl}{\renewcommand{\nl}{\let\nl\oldnl}}% Remove line number for one line
\begin{document}

\title{Ascending-Price Algorithms for Unknown Markets}

\author{Xiaohui Bei
\footnote{Nanyang Technological University, Singapore.}
\and Jugal Garg
\footnote{University of Illinois at Urbana-Champaign, USA.}
\and Martin Hoefer
\footnote{Max-Planck-Institut f\"ur Informatik and Saarland University, Germany.}}
\date{}

\maketitle

\begin{abstract}
We design a simple ascending-price algorithm to compute a $(1+\varepsilon)$-approximate equilibrium in Arrow-Debreu exchange markets with weak gross substitute (WGS) property, which runs in time polynomial in market parameters and $\log 1/\varepsilon$. This is the {\em first} polynomial-time algorithm for most of the known tractable classes of Arrow-Debreu markets, which is easy to implement and avoids heavy machinery such as the ellipsoid method. In addition, our algorithm can be applied in {\em unknown} market setting without exact knowledge about the number of agents, their individual utilities and endowments. Instead, our algorithm only relies on queries to a global demand oracle by posting prices and receiving aggregate demand for goods as feedback. When demands are real-valued functions of prices, the oracles can only return values of bounded precision based on real utility functions. Due to this more realistic assumption, precision and representation of prices and demands become a major technical challenge, and we develop new tools and insights that may be of independent interest.

Furthermore, our approach also gives the {\em first} polynomial-time algorithm to compute an exact equilibrium for markets with spending constraint utilities, a piecewise linear concave generalization of linear utilities. This resolves an open problem posed in~\cite{DuanM15}. 
\end{abstract}

\thispagestyle{empty}
\clearpage
\setcounter{page}{1}

% !TEX root = tatonnement.tex

\section{Introduction}
%Market equilibrium is a fundamental concept in economics with numerous applications due to its fair and efficient resource allocation.
%More and more new applications are coming up in many differet settings \cite{}. 
%After sixteen years of work in TCS on the computability of equilibria, we now have fairly good understanding of which
%cases are tractable and which are intractable (assuming P $\neq$ PPAD). For Arrow-Debreu (exchange) markets, the tractable
%case is when the market satisfies a {\em Weak Gross Substitutes} (WGS) property, where ... \footnote{except a few more $-1
%\le \rho \le 0$}. It captures many interesting classes of utility functions, for e.g. linear, spending constraints, CES with
%$0 \le \rho 1$. The approach so far for handing these tractable cases has been ellipsoid method on a convex feasibility
%programs \cite{}, except for a recent result by DM for linear utility functions. 
%
%Even though ellipsoid method is polynomial time, it is not the most desirable in general due to their impracitcal nature. 
%\Jugal{why ellipsoid method is not desirable?}
%
%In this paper, we design a simple and fast ascending price algorithm for WGS exchange markets, which runs in polynomial time
%... . Moreover our algorithm also works in unknown market setting, where the number of agents, the agents' endowments and
%utilities all remain unknown. We only need a global demand oracle that provides aggregated demands for goods. 

The concept of market equilibrium is central in economics and captures fair, stable, and efficient outcomes in competitive allocation scenarios. The most prominent model to study market equilibria are \emph{exchange markets}~\cite{ArrowD54}, which consist of a set of divisible goods and a set of agents. Each agent has an initial endowment of goods and a utility (preference) function over bundles of goods. Given prices of goods, each agent buys a most preferred bundle (called \emph{demand}) that is affordable from the earned money. At equilibrium, market clears, i.e., demand meets supply. %As an important special case, {\em Fisher markets}~\cite{BrainardS00} are a subclass of exchange markets where buyers and sellers are different agents. Buyers bring money to buy goods while sellers bring goods to earn money. 

The computation of market equilibrium is a fundamental problem in economics and computer
science~\cite{ArrowH58,Scarf73,Eaves76,Eaves85}. The challenge is to provide algorithms to compute an (approximate) market equilibrium efficiently. After more than a decade of research on this issue in theoretical computer science~\cite{DevanurPSV08,Jain07,CodenottiPV05,CodenottiMV05,CodenottiSVY06,DevanurK08,Orlin10,GargMV14,ChenDDT09,VaziraniY11,ChenPY13,DuanM15}, there is a fairly good understanding of tractable and intractable domains (assuming \classP\ $\neq$ \classPPAD). For exchange markets, the tractable case is essentially\footnote{The other tractable cases are CES utilities with $-1 \le \rho < 0$~\cite{CodenottiMPV05}, and piecewise linear concave utilities provided that the number of goods are constant~\cite{DevanurK08}.} given by markets with a {\em weak gross-substitutes} (WGS) property, where any increase in prices of a set of goods does not strictly decrease the demand of untouched goods. This includes markets with many popular and interesting classes of utility functions~\cite{MasColellWG95}, for example utilities with constant elasticity of substitution (CES) with $0 < \rho < 1$, Cobb-Douglas utilities, linear utilities, or utilities with spending constraints\footnote{Spending constraint utilities are a piecewise linear concave generalization of linear utilities, which satisfy many desirable properties and have many additional applications.}~\cite{Vazirani10,DevanurV04}. The only general approach to compute an approximate market equilibrium for this class of markets uses the ellipsoid method and convex feasibility programs~\cite{CodenottiPV05}. 

Algorithms based on the ellipsoid method, although being polynomial time in principle, have a number of undesirable drawbacks. First, the ellipsoid method often tends to be slow in practice. Moreover, such algorithms often are not informative about the problem structure since they invoke a powerful algorithmic tool in a ``black-box'' fashion. In many algorithmic domains it has become an important line of research to avoid black-box application of the ellipsoid method and derive simpler, ``combinatorial'' algorithms that reveal inherent structure. Thereby, such algorithms also avoid possible overhead that comes with the application of broad algorithmic hammers and tend to be much faster. Despite extensive research~\cite{Jain07,CodenottiMPV05,CodenottiPV05,CodenottiMV05}, the search for a simple and direct polynomial-time algorithm for computing market equilibria has so far been unsuccessful. The only exceptions are a classic strongly polynomial-time algorithm for the case of Cobb-Douglas utilities~\cite{Eaves85}, and recent combinatorial algorithms for linear utilities~\cite{DuanM15,DuanGM16}.

A prerequisite for all these efficient algorithms is the entire description of the market including number of agents, their utility functions and initial endowments. Obtaining this description is a highly non-trivial task; an entire theory of {\em revealed preferences}~\cite{Samuelson48,Varian82,Varian05} was developed to study how to infer market parameters from observed prices and buying patterns. Further, sometimes agents' preferences can be complicated, and their utility function may not have any succinct representation. This gives rise to the following question: Can we design efficient algorithms that are oblivious to market parameters? In other words, are there efficient algorithms for {\em unknown markets}?

Even beyond the economic interpretation, algorithms for computing market equilibria have wide applicability. For example, proportionally-fair allocations, which result from market equilibria, are widely used in the design of computer networks~\cite{KellyV02}. Recent applications also include energy-efficient scheduling~\cite{ImKM14,ImKM15} and fair allocation of indivisible resources~\cite{ColeGG13,ColeG15}. In the latter application, a combinatorial polynomial-time algorithm developed for linear markets~\cite{Orlin10} is extended to a problem, for which no convex programming formulation is known. This further highlights the importance of designing algorithms without the ellipsoid method that are simple and flexible to adjust to applications in related domains.

In this paper, we design simple ascending price algorithms for WGS exchange markets. Under the standard assumptions that the most preferred bundles are unique and continuous in all prices, our algorithm computes an $(1+\varepsilon)$-approximate market equilibrium in polynomial time, i.e., in time polynomial in market parameters and $\log(1/\varepsilon)$. In markets with linear and spending constraint utilities the uniqueness and continuity assumptions do not hold. Thus, even with tie-breaking rules that imply the WGS property, the existing algorithms for general WGS markets are not applicable. Instead, our algorithm can be adapted when using an appropriate tie-breaking rule for demands. We improve over the FPTAS~\cite{DevanurV04} in terms of running time from $1/\varepsilon^2$ to $\log(1/\varepsilon)$. More importantly, the $\log(1/\varepsilon)$-dependence of the running time allows us to convert the approximate equilibrium to an exact one in polynomial time when given all utility parameters. We obtain the first polynomial-time algorithm for computing an \emph{exact} market equilibrium for spending constraint utilities and settle an open question raised by \cite{DuanM15}. 

All our results are achieved by a unifying framework that works directly on the price vector. It uses simple binary search to identify suitable multiplicative price updates for subsets of goods. As such, our approach is extremely easy to implement and, in particular, avoids black-box use of the ellipsoid method. Furthermore, by working on the full generality of exchange markets, our algorithms can be applied to the simpler case of {\em Fisher markets}~\cite{BrainardS00}\footnote{Fisher markets are a subclass of exchange markets where buyers and sellers are different agents. Buyers bring money to buy goods while sellers bring goods to earn money.}.

Moreover, to compute approximate equilibria our algorithms do not require access to an explicit description of the utility functions and endowments of the agents. Instead, the algorithms even work in what we term \emph{unknown markets} -- the number of agents, the agents' endowments and utilities all remain unknown. As in the case of t\^atonnment algorithms~\cite{Uzawa60,Blad78,ColeF08,BirnbaumDX11,CheungCR12,CheungCD13,AvigdorRY14}, we assume these parameters can only be queried implicitly via aggregate demand queries. In such a query, we present a non-negative price for each good. For each agent, the oracle translates the endowment of each agent into money and determines a utility-maximizing bundle of goods for the money. It then aggregates the demands for each good and returns the vector of aggregate demands as answer to the query. Note that for many WGS markets these demand oracles can be easily implemented, e.g., for CES utilities there are even closed-form formulas. For suitable tie-breaking, we can even apply the algorithms to unknown markets with linear and spending constraint utilities, which have non-continuous demand. Perhaps surprisingly, our t\^atonnement-style algorithms succeed to detect implicitly all relevant information and changes in preferences in an unknown market setting. This requires to overcome a number of technical challenges and introduction of new techniques that we discuss in more detail below.

\subsection{Model and Notation}
\label{sec:model}

\paragraph{Exact and Approximate Market Equilibrium} 
In an exchange market there is a set $A$ of $n$ \emph{agents} and a set $G$ of $m$ \emph{goods}. Each agent $i$ has an initial endowment $w_{ij} \in
\R_{\ge 0}$ of good $j$. We denote the total endowment of good $j$ by $w_j = \sum_{i \in A} w_{ij}$ and assume w.l.o.g.\ that $w_j = 1, \forall j\in
G$. An \emph{allocation} $\vecx = (x_{ij})_{i \in A, j \in G}$ is an assignment of goods to agents such that $x_{ij} \ge 0, \forall i\in A, j\in G$, and $\sum_{i} x_{ij} = 1$. Each agent $i$ has a \emph{utility function} $u_i(\vecx_i)$ which specifies the value agent receives from his bundle $\vecx_i$. An (exact) \emph{market equilibrium} is a pair $(\vecx,\vecp)$, where $\vecx$ is an allocation and $\vecp = (p_j)_{j \in G}$ is a vector of non-negative \emph{prices} $p_j \ge 0$. In a market equilibrium, each agent obtains a budget of money by selling his endowment at the given prices. Then $\vecx_i$ represents a utility-maximizing bundle of goods that he can afford to buy at the current prices for his budget. We call such a bundle a \emph{demand bundle} of agent $i$. In addition, we say $x_j = \sum_i x_{ij}$ is the \emph{demand for good $j$} in allocation $\vecx$.

\begin{definition}
A bundle $\vecx_i$ is a \emph{demand} of agent $i$ at prices $\vecp$ if $u_i(\vecx_i) = \max\{ u_i(\vecy_i) \mid \vecy_i \in \R_{\ge 0}^m, \vecp^T
\vecy_i \le \vecp^T \vecw_i\}$.  A pair $(\vecx,\vecp)$ is a \emph{market equilibrium} if (1) $\vecx_i$ is a demand of agent $i$ at $\vecp$, and (2) $\sum_i x_{ij} = \sum_i w_{ij}$ for each good $j \in G$. If a pair $(\vecx,\vecp)$ is not a market equilibrium, the \emph{excess demand} $z_j = x_j - 1$ is non-zero for at least one good $j \in G$.
\end{definition}

Let us consider a concept of \emph{strong} approximate market equilibrium~\cite{CodenottiPV05}. It relaxes only the market clearing constraint but not the condition that $\vecx_i$ is a demand for each agent $i$.

\begin{definition}
A pair $(\vecp,\vecx)$ is a \emph{$\mu$-approximate equilibrium} $(\mu \ge 1)$ if (1) for each agent $i$, $\vecx_i$ is a demand of agent $i$ at prices $\vecp$, and (2) $\sum_i x_{ij} \le \mu \sum_i w_{ij}$ for each good $j$. 
\end{definition}

In Fisher markets~\cite{BrainardS00}, agents are divided into buyers and sellers. Each buyer $i$ comes with an initial endowment of money $B_i$. Each seller $i$ has an initial endowment of goods $\vecw_i$. Each buyer has no value for money and is only interested in buying goods, and each seller is only interested in obtaining money. Fisher markets are a special case of exchange markets when we interpret money as a separate good.

\paragraph{Utility Functions}
Algorithms for computing market equilibria rely on structural properties of the utility functions. A natural class are \emph{linear utilities} when each agent $i$ has non-negative values $u_{ij} \ge 0$ for each good $j \in G$, and $u_i(\vecx) = \sum_{j \in G} u_{ij} x_{ij}$. As a generalization, we consider spending constraint utilities~\cite{DevanurV04}, where the utility derived by agent $i$ from good $j$ is given by a piecewise linear concave (PLC) function $f_{ij}$. The overall utility of agent $i$ is additively separable among goods, i.e., $u_i(\vecx) = \sum_{j\in G} f_{ij}(x_{ij})$. Each $f_{ij}$ is a PLC function with a number of linear segments. Each segment $k$ has two parameters: the rate of utility $u_{ijk}$ per unit of good derived on segment $k$ and the maximum fraction $B_{ijk}$ of budget that can be spent on segment $k$. All $B_{ijk}$ are strictly positive, and concavity implies $u_{ijk} > u_{ij(k+1)}$. 
Here we assume all $u_{ijk}$'s are integers, all $B_{ijk}, w_{ij}$'s are rational numbers and the whole input can be represented in no more than
$L$ bits. Markets with spending constraint utilities have an equilibrium composed of rational numbers under mild sufficiency conditions (see Section \ref{sec:spending} for details).

More generally, we consider non-decreasing utility functions that generate markets with the \emph{weak gross-substitutes (WGS)} property -- when we increase a price, the demand for goods with untouched prices does not strictly decrease. For general WGS markets, we will assume that all demand bundles of agents are unique. Prominent examples are markets with \emph{Cobb-Douglas utilities} $u_i(\vecx) = \prod x_{ij}^{u_{ij}}$, or \emph{constant-elasticity-of-substitution (CES) utilities} $u_i(\vecx) = \left(\sum_{j \in G} u_{ij} x_{ij}^\rho\right)^{1/\rho}$ with $0 < \rho < 1$. Note that even if all utility and endowment parameters are rationals of finite size, then demand bundles and the market equilibrium in such markets might involve irrational numbers. In this case, we are interested in approximate market equilibria, and our prices will use a prespecified precision depending on the desired approximation factor.

\paragraph{Oracles}
Our algorithm queries demands for the agents by publishing prices $\vecp$. Then an oracle returns the total demand $x_j$ for each good $j \in G$. It assumes each agent can sell his initial endowment at the given prices and then requests a utility-maximizing bundle of goods for the money he has available. For ease of notation, given any price vector $\vecp$, let $\Oracle(\vecp)$ denote the {\it surplus vector} $\vecs = (s_1, \ldots, s_m)$ for the return of the oracle, where $s_j = p_j z_j$ is the \emph{surplus} (in terms of money) of good $j \in G$. In other words, assuming we publish $\vecp$, the oracle returns an excess demand vector $\vecz = (z_1, \ldots, z_m)$, then\footnote{We define the surplus vector as price times excess demand since it satisfies $\sum_i{s_i} = 0$. This invariant is useful in design and analysis of our algorithms below.} $\Oracle(\vecp) = \vecp \cdot \vecz$.

In general WGS markets the surplus vector might contain irrational values. Thus, we use an {\it approximate demand oracle} $\AOracle(\vecp, \mu)$, which is a blackbox algorithm that takes any price vector $\vecp$ and positive rational $\mu$ as input. It returns a surplus vector $\vecs$ such that $|s_i - \Oracle(\vecp)| \leq \mu$ holds for every good $i$. Note that for many WGS markets the demand oracle can be implemented very quickly -- for CES utilities there are even closed-form expressions for the demand of each agent for each good as a function of prices, utility and endowment. More generally, we assume that the oracle can be implemented in time polynomial in the input size and $\log(1/\mu)$. This standard assumption for demand oracles has also been used in previous work~\cite{CodenottiPV05,CodenottiMV05}. 

In linear and spending constraint markets, a major challenge for an algorithm in unknown markets is non-uniqueness of demands. Here an oracle needs to do tie-breaking between several different bundles of goods that yield the same maximum utility for an agent. Ideally, it should satisfy the following properties: (1) The output demand is always deterministic and unique. (2) If $\vecp$ are equilibrium prices, the output demand equals supply for every good. (3) The oracle can be implemented in time polynomial in the input size. Based on these criteria, we use a demand oracle that yields demands minimizing the $\ell_2$-norm of the surplus vector. More formally, for prices $\vecp$ our oracle returns a set of demand bundles for the agents such that $\sum_i s_i^2$ is minimized, where $s_i$ is the surplus of good $i$. Such a tie-breaking rule was introduced by~\cite{DevanurPSV08} and it has been later used in several market algorithms, such as~\cite{DuanGM16, DuanM15, Vazirani10}. This oracle satisfies all three properties. %, as shown by~\citet{Vazirani10}, also (3). 
Furthermore, if utilities, endowments and prices are all given as integers with a number of bits polynomial in $m$ and $L$, we can represent every surplus $s_i$ also with a number of bits polynomial in $m$ and $L$. Hence by setting $\mu$ sufficiently small, we can convert an approximate demand oracle $\AOracle$ into an exact oracle in polynomial time. Therefore we will assume that in spending constraint markets we are equipped with an exact demand oracle $\Oracle$ instead of an approximate one. 

\subsection{Results and Contribution}
We present simple ascending-price algorithms that converge to market equilibrium in WGS and spending constraint markets. Our algorithm for WGS markets in Section~\ref{sec:wgs} converges to a $(1+\varepsilon)$-approximate equilibrium in time polynomial in $m$, $\log(1/\varepsilon)$, and other market parameters. We present the first algorithm which is not based on the ellipsoid method for this general class of markets. Furthermore, the number of agents, the agents' endowment, and utilities all remain unknown. We only query a global demand oracle that provides aggregate demands for goods at given prices. This information is then used to increase prices of a selected set of goods whose demand is more than their supply. In Section~\ref{sec:wgs} we ignore for simplicity all precision and representation issues to highlight the general proof technique. The complete analysis with all details of our algorithm is provided in Appendix~\ref{app:wgs}, where we specify in advance a precision for prices and rely on approximate demand oracles whose output is within our desired bit precision. One can view our algorithm as a form of t\^atonnement. It improves over previous approaches~\cite{CodenottiMV05} by exponentially decreasing the dependence of the running time on $1/\varepsilon$ and by working directly with the unknown market without transformations like adding auxiliary agents.

Next, we apply our approach in Section~\ref{sec:spending} to spending constraint utilities -- a piecewise linear concave generalization of linear utilities, which has many additional applications due to its natural diminishing returns property. Since these markets have demands that are non-continuous in the prices, we cannot directly apply our algorithm or other previous algorithms for WGS markets. Instead, we adjust our approach to implicitly capture the non-continuity events when using a global demand oracle with suitable tie-breaking. When all parameters are represented by at most $L$ bits, our algorithm computes even an exact market equilibrium in time polynomial in $m$ and $L$. All prices and demands occurring during the algorithm require a bit precision polynomial in $m$ and $L$. It first computes a $(1+\varepsilon)$-approximate equilibrium using a precision that is polynomial in $m$, $L$ and $\log(1/\varepsilon)$. The exact demands returned by the demand oracle have the same precision. For a small enough $\varepsilon$ (using only polynomial bit length), we can then use a rounding procedure to turn it into a price vector of an exact market equilibrium. 

Note that our algorithm requires only access to a suitable demand oracle to compute an approximate equilibrium. However, in contrast to WGS markets, for spending constraint markets the oracle uses global tie-breaking. To obtain an exact equilibrium, our final rounding procedure relies on full information about the utilities. Thus, we obtain in polynomial time an approximate equilibrium in unknown markets (with global tie-breaking) and an exact equilibrium with full information. This represents the first polynomial-time algorithm to compute an exact market equilibrium for spending constraint utilities and settles an open question raised by~\cite{DuanM15}. An important open problem is to construct efficient t\^atonnement algorithms that avoid global tie-breaking and full-information rounding.

We use and extend ideas of algorithms for linear markets with full information~\cite{DevanurPSV08,DuanM15}. These ideas were also used in spending constraint markets~\cite{DevanurV04} to compute a $(1+\varepsilon)$-approximate equilibrium in time polynomial in $m$, $L$ and $1/\varepsilon$. Roughly speaking, they are intimately tied to linear and spending constraint utilities, where they work on the agents' side and increase prices until structural changes occur in the optimal bang-per-buck relations. The progress towards equilibrium is measured in the reduction of the $\ell_2$-norm of surplus. Our approach works for all WGS markets, in which in general do not exhibit such structural events that can be used for analysis. This turns out to be much more demanding, and we develop an approach that works entirely on the goods' side, based only on prices and aggregate demands obtained via oracle access.

An issue that plays a central role in our algorithms for spending constraint utilities is precision of prices and demands. This seems to have been treated only in minor detail in some of the previous works. For the algorithm of~\cite{DuanM15} in linear exchange markets these issues are discussed in depth. However, their solution is to change the agents side and alter utility values for maintaining bounded precision throughout the algorithm. However, computing an approximate equilibrium in unknown markets with demand oracle access seems impossible via this route.

As a consequence, unlike for the existing algorithms in the linear case, the surpluses encountered by our algorithms might now become negative. Hence, additional events have to be taken into account upon increasing prices. Moreover, a significant challenge lies in maintaining the precision of prices to be polynomial throughout the algorithm. To overcome this problem, we make use of a novel tool that we term \emph{ratio graph}. This graph is defined for a vector of prices $\vecp$. The goods are the vertices, and we draw an undirected edge between goods $j$ and $k$ if the ratio of prices $p_j/p_k$ can be expressed by two $L$-bit numbers. For an intuition, observe that if some agent $i$ has the same bang-per-buck for two goods $j$ and $k$, then $u_{ij}/p_j = u_{ik}/p_k$ or $p_j/p_k = u_{ij}/u_{ik}$, i.e., the ratio of prices can be expressed by two $L$-bit numbers. Maybe surprisingly, the broad structure of the ratio graph indeed contains enough information to implement algorithms for finding approximate equilibria in unknown spending constraint markets.

\subsection{Related Work}
%- Arrew/Debreu, Hurwik, for WGS
%- Codenotti et al, AGT Book: WGS / Linear (STOC tatonnement 1/eps, SODA convex programming log 1/eps)
%- DPSV/Jain/Ye: Linear Fisher Centralized
%- Proportional Response: Linear Exponential
%- Cole/Fleischer: Various Parameters WGS
%- Cole/Fleischer/Cheung/Devanur ... : Tatonnement in Fisher (no linear)
%- Duan/Mehlhorn, Orlin, Vegh: Centralized for Linear Exchange, strongly polynomial for linear Fisher
%- [Spending Constraints: Vazirani, Vegh, Devanur et al]
\label{sec:relatedWork}

The problem of computing market equilibria has been intensively studied, and the literature is too vast to survey here. We provide an overview of the
work most directly relevant to ours. There is a large body of work on algorithms for computing equilibrium using full market information. The
first combinatorial polynomial-time algorithm for linear Fisher markets was given by~\cite{DevanurPSV08}. Later,~\cite{Vazirani10} provided a polynomial-time algorithm for Fisher markets with spending constraint utilities by extending combinatorial techniques of~\cite{DevanurPSV08}. Strongly polynomial-time algorithms are also known, for Fisher markets with linear~\cite{Orlin10,Vegh12} and spending constraint utilities~\cite{Vegh12}.

For linear exchange markets,~\cite{Jain07} and~\cite{Ye07} obtained polynomial-time algorithms based on ellipsoid and interior point methods on a convex program, respectively.~\cite{DuanM15} gave the first combinatorial polynomial-time algorithm for this problem, which was recently improved by~\cite{DuanGM16}. For exchange markets with spending constraint utilities,~\cite{DevanurV04} gave an algorithm to compute a $(1+\epsilon)$-approximate equilibrium, for which the running time dependence on $\epsilon$ is $O(1/\epsilon^2)$.~\cite{Eaves85} gave a strongly polynomial-time algorithm for markets with Cobb-Douglas utilities.

For the general case of WGS markets with unique and continuous demands, a polynomial-time algorithm was obtained by~\cite{CodenottiPV05}. Note that this algorithm relies heavily on the ellipsoid method. For the Fisher setting, the famous Eisenberg-Gale convex program~\cite{EisenbergG59} captures market equilibrium under linear utilities.~\cite{Eisenberg61} generalized it to work for any homogeneous utility functions, many of which satisfy the WGS property.

%\Martin{Do we need to mention:
%\begin{itemize}
%\item DevanurK08, bounded number of goods?
%\item Jain et al - Convex Program Extensions to Production Markets?
%\item $-1 < \rho < 0$-case for exchange?
%\end{itemize}
%}

As an alternative approach to compute market equilibrium, {\em t\^atonnement} was defined by~\cite{Walras74} -- algorithms which have access to endowment and utilities only via demand oracles. Usually, t\^atonnement procedures conduct price updates separately for each good, sometimes based on derivatives of the demand as a function of price. %It was shown by~\cite{ArrowH58,ArrowBH59,ArrowH60,Uzawa60} that such a continuous t\^atonnement process converges for markets satisfying the WGS property. 
In the computer science literature,~\cite{CodenottiMV05} gave a discrete t\^atonnement process that converges to an $(1+\epsilon)$-approximate equilibrium for WGS markets. It has a convergence time polynomial in the input size and $1/\epsilon$, and it does not query the original market since one needs to add auxiliary agents. By taking a different approach, our algorithm for the same market setting improves this rate to polynomial in the input size and $\log(1/\epsilon)$. Also, we only rely on approximate demand queries to the original market.

More recently,~\cite{ColeF08} established the first fast converging discrete version of t\^atonnement for WGS markets. The convergence time depends on various market parameters. It requires a non-zero amount of money in the market, so it works for the special case of Fisher markets and beyond, but it is not applicable to the full range of exchange markets. For Fisher markets, many additional results~\cite{BirnbaumDX11,CheungCR12,CheungCD13,CheungC14,AvigdorRY14} on the convergence of t\^atonnement processes beyond WGS markets were derived.

%T\^atonnement dynamics have been extensively studied in general equilibrium theory; see the textbook~\cite{MasColellWG95} for further classic results. All this work assumes that the demand is a single-valued function of the prices. This is not the case for linear and spending constraint utility functions, and hence these processes are not directly applicable.

%\Jugal{Delete this entire para? \sout{
%A different process called Proportional Response dynamics is studied for market equilibrium in}}~\cite{WuZ07,Zhang11} for linear and CES utilities with
%parameter $0<\rho<1$. This is a different dynamics which works in the agents' allocation space rather than on prices. For linear utilities, convergence
%for exchange markets occurs in the limit. For Fisher markets, approximate equilibrium is reached in exponential time.

%\nnote{\cite{cottle} for general LCP theory}

% !TEX root = tatonnement.tex

\newcommand{\AlgExchange}{{\textsc{Alg-WGS}}\xspace}
\newcommand{\AlgExchangePrecise}{{\textsc{Alg-WGS-Precise}}\xspace}
\section{WGS Exchange Markets}
\label{sec:wgs}

In this section we describe the algorithm for WGS exchange markets. As previously mentioned, we assume that we are only granted access to an approximate demand oracle and are restricted to finite precision arithmetic computations. In order to make our algorithm and its analysis more accessible, we will simplify the problem in the remainder of this section by assuming that we are equipped with (1) exact real arithmetic, and (2) an exact demand oracle. This significantly simplifies the analysis in terms of notation and calculations, and as such concentrates on the key ideas of the algorithm. A complete and rigorous version of this section, presenting the entire algorithm and its proof with approximate precision, can be found in Appendix~\ref{app:wgs}.

We apply algorithm \AlgExchangePrecise\ given below. The main idea is to repeatedly identify a subset of goods $G_1$ by finding a gap in the sorted order of surpluses. It then raises the prices of $G_1$ by a common factor $x$ until the surplus gap is almost closed, or the smallest surplus of $G_1$ becomes (close to) 0. More formally, given price vector $\vecp=(p_1, \ldots, p_m)$, value $x \in \R^+$ and subset $S \subseteq G$, the algorithm uses $\UpdatePrice(\vecp, x, S)$, which is the price vector $\vecp' = (p'_1, \ldots, p'_m)$ with $p'_i = x\cdot p_i$ if $i \in S$ and $p'_i = p_i$ otherwise. To implement this process, the algorithm relies on two parameters $D_1$ and $D_2$ based on the following assumptions.

\begin{assumption}[Bounded Price]\label{asp1}
  There exists a market equilibrium $(\vecp^*,\vecx^*)$ with $1 \le p_i^* \le 2^{D_1}, \forall i \in G$.
\end{assumption}

\begin{assumption}[Continuity]\label{asp2}
  For any price vector $\vecp$ such that $1 \leq p_i \leq 2^{D_1}$ for each $i$, $|\frac{\partial s_i}{\partial p_j}| < 2^{D_2}$ for every $i, j$, where $s_i$ is the surplus money of good $i$ in $\Oracle(\vecp)$, and $D_2$ is a polynomial of the input size.
\end{assumption}

These two assumptions are precisely the ones from~\cite{CodenottiMV05,CodenottiPV05}. Assumption~\ref{asp1} about bounded prices is fairly mild and in many cases necessary for an efficient algorithm to compute a (strong) approximate market equilibrium. Assumption~\ref{asp2} about continuity is also satisfied by many natural markets, for example, markets with CES utilities with $0 < \rho < 1$. Note, however, that it is not satisfied for linear and spending constraint markets, and hence we must develop new tools and procedures in Section~\ref{sec:spending}.

\begin{algorithm}[t]
\caption{\AlgExchangePrecise\label{alg:WGS-Precise}}
\DontPrintSemicolon
\SetKwInOut{Input}{Input}\SetKwInOut{Output}{Output}
\SetKw{Param}{Parameters:}
\Input{number of goods $m$, demand oracle $\Oracle$, precision bound $\varepsilon > 0$}
\Output{Prices $\vecp$ of an $(1+\varepsilon)$-approximate market equilibrium}
%\nonl \Param{$\mu = \frac{\varepsilon}{R_1m^7}$, $\Delta = \frac{1}{\mu}(2^{D_1+D_2+\log{m}})$, $\varepsilon' = \frac{\varepsilon}{2\sqrt{m}}$ }

\BlankLine
Set initial price $\vecp_0 \ot (1, 1, \ldots, 1)$ and round index $t \ot 0$.\;
%Let $\mathcal{P} = \{\frac{a}{b} \mid a > b, a, b \in \mathbb{Z}^+, a, b \leq \Delta\}$\;
\Repeat( \ //round $t$){$\|\Oracle(\vecp_t)\|^2 < \varepsilon^2$}{
$t \ot t+1$\;
$\vecs = (s_1, \ldots, s_m) \ot \Oracle(\vecp_{t-1})$\;
Sort $\vecs$ such that $s_{i_1} \geq s_{i_2} \geq \cdots \geq s_{i_m}$.\;
Find smallest $k$, such that $s_{i_{k+1}} \leq 0$ or $s_{i_k} >
(1+\frac1m)s_{i_{k+1}}$. \;%If no such $k$ exists, set $k = m$.
Set $G_1 \ot \{i_1, \ldots, i_k\}$ and $G_2 \ot G \setminus G_1$\;
Find the largest $x$, such that in $\vecs' = \Oracle(\UpdatePrice(\vecp_{t-1}, x, G_1))$ it holds\\ \nonl \hspace{0.5cm} $\min\{s'_i \mid i \in G_1\} = \max\{\{s'_i \mid i \in G_2\}\cup \{0\}\}$.\;
$\vecp_{t} \ot \UpdatePrice(\vecp_{t-1}, x, G_1)$\;
}
\Return{$\vecp_t$}
\end{algorithm}

To measure progress towards equilibrium we use a \emph{potential function} $\Phi(\vecp_t) = \|\Oracle(\vecp_t)\|^2$. 
%Throughout the analysis and proofs below, if $\vecs = \Oracle(\vecp)$ for some $\vecp$, we use $\tilde{\vecs}$ to denote the surplus vector returned by the $\mu$-approximation demand oracle with the same price vector, i.e., $\tilde{\vecs} = \Oracle(\vecp, \mu)$. 
We start by proving a number of claims about the price vector $\vecp_t$. The first claim shows that with respect to exact demands, our algorithm monotonically reduces the 1-norm of the surpluses of all goods from $2m$.

\begin{claim}\label{claim:surplusBound}
  In \AlgExchangePrecise, $|\Oracle(\vecp_0)| \leq 2m$ and $|\Oracle(\vecp_t)|$ is non-increasing in $t$.
\end{claim}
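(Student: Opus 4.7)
My plan is to combine Walras' law with the WGS property. Walras' law gives $\sum_i s_i = 0$ at any price vector $\vecp$: under non-satiation each agent spends the full value of their endowment, so $\sum_{i,j} p_j x_{ij} = \sum_{i,j} p_j w_{ij} = \sum_j p_j$ (using $w_j = 1$), and therefore $\sum_j s_j = \sum_j p_j(x_j - 1) = 0$. Equivalently,
\[
|\Oracle(\vecp)| \;=\; 2 \sum_{i:\,s_i > 0} s_i \;=\; 2 \sum_{i:\,s_i < 0} (-s_i),
\]
so both parts of the claim reduce to controlling the total negative surplus.

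For the initial bound I would observe that at $\vecp_0 = (1,\ldots,1)$ the surplus coincides with the excess demand, $s_i = z_i = x_i - 1$, and the feasibility of $\vecx$ implies $x_i \ge 0$, hence $s_i \ge -1$. Thus $\sum_{i:\,s_i < 0}(-s_i) \le m$, which with the identity above yields $|\Oracle(\vecp_0)| \le 2m$.

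For monotonicity in round $t$, I would write $\vecs = \Oracle(\vecp_{t-1})$ and $\vecs' = \Oracle(\vecp_t)$, where $\vecp_t$ is obtained from $\vecp_{t-1}$ by scaling the prices of $G_1$ by some factor $x > 1$ while leaving $G_2$ untouched. By WGS the demand of every good in $G_2$ is non-decreasing after this update, and since the prices on $G_2$ are unchanged, $s'_i \ge s_i$ for every $i \in G_2$. The key structural observation is that the binary-search rule selects $x$ so that $\min_{i \in G_1} s'_i = \max\bigl(\max_{i \in G_2} s'_i,\, 0\bigr) \ge 0$, which forces every $G_1$-surplus to be non-negative after the update. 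Therefore every negative entry of $\vecs'$ lies in $G_2$, and each such $i$ satisfies $s_i \le s'_i < 0$, so $|s'_i| \le |s_i|$. Combining,
\[
\sum_{i:\,s'_i < 0}(-s'_i) \;=\; \sum_{i \in G_2:\, s'_i < 0}(-s'_i) \;\le\; \sum_{i \in G_2:\, s_i < 0}(-s_i) \;\le\; \sum_{i:\,s_i < 0}(-s_i),
\]
which, translated back via Walras, gives $|\Oracle(\vecp_t)| \le |\Oracle(\vecp_{t-1})|$.

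The hard part is coordinating the global conservation law (Walras) with the one-sided monotonicity of WGS, which only controls the untouched set $G_2$. What unlocks the argument is the stopping criterion for $x$: it forces all $G_1$-surpluses to be non-negative at the new prices, so the whole negative mass of $\vecs'$ ends up in $G_2$, exactly where WGS already bounds it.
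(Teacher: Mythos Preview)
Your proof is correct and follows essentially the same route as the paper: Walras' law to rewrite $|\Oracle(\vecp)|$ as twice the total negative surplus, WGS to show surpluses in $G_2$ do not decrease, and the stopping criterion to ensure all $G_1$-surpluses are nonnegative at $\vecp_t$, so the negative mass is trapped in $G_2$ and can only shrink. The only cosmetic difference is in the initial bound: the paper uses $\sum_i|d_i-1|\le \sum_i d_i + m = 2m$ (invoking $\sum_i d_i = m$), whereas you bound each negative surplus by $1$ and double; both are immediate consequences of Walras' law at unit prices.
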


\begin{proof}
  Let $d_i$ be the exact demand for good $i$ under price $\vecp_0$, then $|\Oracle(\vecp_0)| = \sum_i|d_i - 1| \leq \sum_id_i + m = 2m$. Next, by the criteria to define $G_1$ and $G_2$ in each round, we have $\{i \mid \Oracle(\vecp_{t-1})_i < 0\} \subseteq G_2$.
  
  During round $t$, only prices of goods in $G_1$ are increased. By the WGS property, we know $\Oracle(\vecp_{t})_i \geq \Oracle(\vecp_{t-1})_i$ for every $i \in G_2$. Further, note that $\min\{\Oracle(\vecp_t)_i \mid i \in G_1\} \geq 0$ since $\min\{\Oracle(\vecp_t)_i \mid i \in G_1\} \geq 0$. Hence, we do not introduce any new negative surplus in $\Oracle(\vecp_{t})$. Thus, we have 
{\small
$$|\Oracle(\vecp_{t-1})| = -2\sum_{\Oracle(\vecp_{t-1})_i < 0}\Oracle(\vecp_{t-1})_i \geq -2\sum_{\Oracle(\vecp_{t})_i < 0}\Oracle(\vecp_{t-1})_i \geq
-2\sum_{\Oracle(\vecp_{t})_i < 0}\Oracle(\vecp_{t})_i = |\Oracle(\vecp_{t})|\enspace.$$}
\end{proof}

The next two claims bound the range of prices we encounter, which is important for showing that we approach the unique market equilibrium. 

\begin{claim}\label{claim:price1}
  Throughout the run of \AlgExchangePrecise, every good with negative surplus has price 1. Hence, there will be at least one good whose price remains 1.
\end{claim}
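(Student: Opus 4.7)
The plan is to prove the claim by induction on the round index $t$, establishing the invariant that every good with strictly negative surplus at the start of round $t$ has price $1$ in $\vecp_t$. The base case $t = 0$ is immediate since $\vecp_0 = (1,\dots,1)$.

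For the inductive step, the first observation I would establish is that the set $G_1$ chosen in round $t$ contains only goods with strictly positive surplus at $\vecp_{t-1}$. This is because the algorithm picks the \emph{smallest} $k$ satisfying the gap condition, and the sorted sequence $s_{i_1} \ge \cdots \ge s_{i_m}$ together with the criterion ``$s_{i_{k+1}} \le 0$ or $s_{i_k} > (1+\frac{1}{m}) s_{i_{k+1}}$'' ensures that as soon as one reaches the first non-positive surplus, the condition is triggered. Hence every good in $G_1$ has positive surplus at $\vecp_{t-1}$, and in particular any good whose price strictly increased in some previous round must have had positive surplus at that earlier price vector.

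Now I split into two cases for a good $j$ with $s_j(\vecp_t) < 0$. If $j \in G_1$ in round $t$, then the stopping rule for the binary search on $x$ guarantees $\min\{s'_i : i \in G_1\} = \max\{\{s'_i : i \in G_2\} \cup \{0\}\} \ge 0$, so $s_j(\vecp_t) \ge 0$, contradicting $s_j(\vecp_t) < 0$. If $j \in G_2$, then $p_{t,j} = p_{t-1,j}$ and by the WGS property (prices of $G_1$ went up, $j$'s price was untouched) the demand for $j$ did not strictly decrease, so $s_j(\vecp_{t-1}) \le s_j(\vecp_t) < 0$. The inductive hypothesis applied at round $t-1$ then gives $p_{t-1,j} = 1$, and therefore $p_{t,j} = 1$ as well.

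For the final sentence, I would invoke Walras-type identity $\sum_i s_i = 0$ (stated in the footnote defining the surplus vector). At any round before termination we have $\|\Oracle(\vecp_t)\|^2 \ge \varepsilon^2 > 0$, so some surplus is nonzero; combined with $\sum_i s_i = 0$ this forces the existence of at least one good with strictly negative surplus, which by the invariant has price $1$. The main conceptual subtlety, and the only place where care is needed, is verifying that $G_1$ is always contained in the strictly-positive-surplus block; everything else is a routine application of WGS and the algorithm's stopping condition.
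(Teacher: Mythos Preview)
Your proposal is correct and follows essentially the same approach as the paper. The paper's proof is terser, stating three facts --- (1) the algorithm never increases the price of a good with negative surplus, (2) surplus never changes from non-negative to negative, and (3) any non-equilibrium price vector has a good with negative surplus --- and declaring them ``direct consequences'' of the algorithm's classification rule; your induction with the $G_1$/$G_2$ case split and the Walras identity $\sum_i s_i = 0$ is exactly the unpacked version of these three facts.
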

\begin{proof}
  Observe the following three simple facts about the surplus $\Oracle(\vecp_t)$ resulting from exact demands: (1) Throughout the algorithm we never increase the price of any good with negative surplus. (2) The surplus of any good does not change from non-negative to negative. (3) For any non-equilibrium price vector, there will always be a good with negative surplus. These facts are direct consequences of the conditions used to classify goods based on $\Oracle(\vecp_t)$ in the algorithm. Together they prove the claim.
\end{proof}

\begin{claim}\label{claim:priceUpper}
  In \AlgExchangePrecise, for any $t \geq 0$, all prices in $\vecp_t$ are bounded by $2^{D_1}$.
\end{claim}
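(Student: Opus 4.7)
The plan is to strengthen the statement to the componentwise bound $\vecp_t \leq \vecp^*$ and to prove it by induction on $t$. Combined with Assumption~\ref{asp1} this yields $p_{t,i} \leq p_i^* \leq 2^{D_1}$. The base case $t=0$ is immediate because $\vecp_0 = (1,\ldots,1)$ and Assumption~\ref{asp1} ensures $p_i^* \geq 1$. For the inductive step, assume $\vecp_{t-1} \leq \vecp^*$; prices of $G_2$ are not updated in round $t$ and so stay bounded by $\vecp^*$ automatically, which reduces the task to bounding the scaling factor $x$ applied to $G_1$. I set $x^* = \min_{i \in G_1} p_i^*/p_{t-1,i}$, the largest scaling under which every good $i \in G_1$ still satisfies $x^* p_{t-1,i} \leq p_i^*$, and let $j \in G_1$ attain this minimum. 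It then suffices to show the chosen $x$ satisfies $x \leq x^*$.

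The heart of the argument is a single WGS comparison at $\tilde{\vecp} = \UpdatePrice(\vecp_{t-1}, x^*, G_1)$. By construction $\tilde{p}_j = p_j^*$, while $\tilde{p}_i \leq p_i^*$ for every other $i$, and $\vecp^*$ is reached from $\tilde{\vecp}$ by (weakly) raising prices of goods other than $j$. WGS then prevents the demand of $j$ from strictly decreasing, giving $z_j(\tilde{\vecp}) \leq z_j(\vecp^*) = 0$, hence $s_j(\tilde{\vecp}) \leq 0$ and $\min_{i \in G_1} s_i(\tilde{\vecp}) \leq 0$. The algorithm's update rule, however, requires $\min_{i \in G_1} s_i(\vecp_t) = \max(\max_{i \in G_2} s_i(\vecp_t), 0) \geq 0$, while at $x = 1$ the left-hand side strictly exceeds the right-hand side by the very criterion used to separate $G_1$ from $G_2$. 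Continuity from Assumption~\ref{asp2} thus yields an equality crossing at some $x_0 \in (1, x^*]$; in the natural reading of the update the chosen $x$ coincides with this first crossing, and we are done.

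The main obstacle is the phrase ``the largest $x$'' in the algorithm's update, which a priori leaves room for an additional crossing at some $x > x^*$. I plan to rule this out by reapplying the same WGS comparison. For $\vecp' = \UpdatePrice(\vecp_{t-1}, x, G_1)$ with $x > x^*$, the ratio $\lambda = \max_i p'_i/p_i^*$ exceeds $1$, and since $G_2$ prices are unchanged and by induction bounded by $\vecp^*$, its argmax lies strictly inside $G_1$; comparing $\vecp'$ with $\lambda \vecp^*$ (an equilibrium by homogeneity of demand) via WGS gives $\min_{i \in G_1} s_i(\vecp') \leq 0$. Combined with the WGS-driven monotone non-decrease of $\max_{i \in G_2} s_i$ in $x$, the equality in the update rule at such $x$ forces $\min_{G_1} s_i = 0 = \max_{G_2} s_i$, which by Walras's law collapses the entire surplus vector to zero. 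Thus $\vecp'$ itself would be an exact equilibrium; uniqueness of WGS equilibria up to scaling together with Claim~\ref{claim:price1} (some good retains price $1$) then yields $\vecp' \leq \vecp^*$ regardless, closing the induction.
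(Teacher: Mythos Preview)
Your overall plan—prove $\vecp_t \le \vecp^*$ by induction and bound the scaling factor $x$ by $x^* = \min_{i\in G_1} p_i^*/p_{t-1,i}$—is the same strengthening the paper uses, and your first two paragraphs are essentially correct. The genuine gap is in your last paragraph, where you try to dispose of the ``largest $x$'' possibility.

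Your claimed implication ``$\min_{G_1} s_i = 0 = \max_{G_2} s_i$, which by Walras's law collapses the entire surplus vector to zero'' fails on both counts. From the update rule $\min_{G_1} s_i' = \max\{\max_{G_2} s_i',\,0\}$ together with your bound $\min_{G_1} s_i' \le 0$, you only get $\min_{G_1} s_i' = 0$ and $\max_{G_2} s_i' \le 0$; the right-hand side of the rule is $\max\{\cdot,0\}$, so $\max_{G_2} s_i'$ can perfectly well be strictly negative. And even granting both equalities, you would only know $s_i'\ge 0$ on $G_1$ and $s_i'\le 0$ on $G_2$; Walras's law $\sum_i s_i' = 0$ then gives $\sum_{G_1} s_i' = -\sum_{G_2} s_i'$, which in no way forces every coordinate to vanish. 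So $\vecp'$ need not be an equilibrium, and the appeal to uniqueness of WGS equilibria does not close the induction.

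The paper avoids this whole ``largest $x$'' discussion by arguing directly on $\vecp_t$ rather than on the scaling factor. It takes the overshoot set $S=\{i:(\vecp_t)_i>p_i^*\}\subseteq G_1$, then performs a two-step price change: first raise all prices outside $S$ up to $\vecp^*$ (WGS keeps the surpluses on $S$ positive, so $\sum_{j\notin S} s_j<0$), then lower prices on $S$ down to $\vecp^*$ (WGS keeps $\sum_{j\notin S} s_j<0$). Since the final vector is $\vecp^*$, this contradicts equilibrium. This argument uses neither continuity nor uniqueness of equilibria, and it works regardless of which crossing the algorithm selects—exactly the robustness your attempt was missing. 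You could graft this two-step argument onto your induction in place of the flawed ``collapse'' step and the proof would go through.
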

\begin{proof}
  Let $\vecp^*$ be equilibrium prices according to Assumption~\ref{asp1}. We show that for any $t \geq 0$, $\vecp_t$ is always pointwise smaller than $\vecp^*$. Assume that this is not true, let $t$ be the smallest value such that there exists $(\vecp_{t})_i > \vecp^*_i$ for some $i$. Note that according to the algorithm, we have $\vecp_t = \UpdatePrice(\vecp_{t-1}, x, G_1)$ for some $x > 1$ and $G_1 \subseteq [m]$. Further, by our classification based on $\Oracle$, it is easy to see that $\Oracle(\vecp_t)_i > 0$ for any $i \in G_1$. This means from $\vecp_{t-1}$ to $\vecp_t$, only prices of goods in $G_1$ are increased. Let $S = \{i \mid (\vecp_t)_i > \vecp^*_i\}$, then we have $S \subseteq G_1$.

  Next, we apply a sequence of price changes to $\vecp_t$. First, for every $j \notin S$ we increase $(\vecp_t)_j$ to $\vecp^*_j$. Let $\vecp$ be the new price vector and consider the surplus $\Oracle(\vecp)$ resulting from exact demands. By the WGS property of the market, the surplus of any good in $S$ will not decrease, hence we still have $\Oracle(\vecp)_i > 0$ for every $i \in S$. The sum of all surpluses in an exchange market is always 0, so $\sum_{j \notin S}\Oracle(\vecp)_j < 0$.

  Now we decrease the price of every good $i \in S$ from $\vecp_i$ to $\vecp^*_i$. Then $\vecp$ becomes exactly $\vecp^*$. This process will not increase surplus of any good $j \not\in S$. Thus we still have $\sum_{j \notin S}\Oracle(\vecp^*)_j < 0$. This contradicts the assumption that $\vecp^*$ are prices of a market equilibrium.
\end{proof}

The following claim establishes a relation between the surplus of a good with respect to exact demands before and after a multiplicative price update step.

\begin{claim} \label{claim:xp}
  For any price vector $\vecp$, $x > 1$ and $S \subseteq [m]$, $\Oracle(\UpdatePrice(\vecp, x, S))_i \leq x \cdot \Oracle(\vecp)_i$ for any $i \in S$.
\end{claim}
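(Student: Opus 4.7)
The plan is to isolate the effect of the multiplicative update on the \emph{demand} (as opposed to surplus in money), and then reconnect to surplus via the definition $\Oracle(\vecp)_i = p_i\,(d_i(\vecp)-1)$, where $d_i(\vecp) = \sum_i x_{ij}$ denotes the aggregate demand for good $i$. Writing $\vecp' = \UpdatePrice(\vecp, x, S)$, the desired inequality $\Oracle(\vecp')_i \le x\cdot \Oracle(\vecp)_i$ for $i \in S$ is equivalent (after dividing by the positive quantity $x p_i$) to $d_i(\vecp') \le d_i(\vecp)$. So the real content of the claim is: scaling up the prices of a subset $S$ of goods by a common factor $x>1$ weakly decreases the demand for every good in $S$.

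The key trick I would use to prove that demand inequality is to route through the auxiliary price vector $\vecp'' := x\,\vecp$ in which \emph{all} prices (including those outside $S$) are scaled by $x$. Two simple observations then finish the argument. First, in an exchange market demand is homogeneous of degree zero in prices: multiplying every $p_j$ by the same $x$ also multiplies every agent's budget $\vecp^T\vecw_i$ by $x$, so the affordable set $\{\vecy \ge 0 : \vecp^T\vecy \le \vecp^T\vecw_i\}$ is invariant, and hence $d_i(\vecp'') = d_i(\vecp)$ for every good $i$. Second, $\vecp''$ and $\vecp'$ agree on $S$, while $\vecp''_j = x p_j > p_j = \vecp'_j$ for every $j \notin S$. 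Viewing the transition from $\vecp'$ to $\vecp''$ as an increase of prices of the goods outside $S$, the WGS property states exactly that the demands for the untouched goods (those in $S$) do not strictly decrease; thus $d_i(\vecp'') \ge d_i(\vecp')$ for every $i \in S$.

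Chaining the two observations gives $d_i(\vecp) = d_i(\vecp'') \ge d_i(\vecp')$ for $i\in S$, and multiplying by $x p_i > 0$ yields
\[
x\cdot \Oracle(\vecp)_i \;=\; x p_i\bigl(d_i(\vecp)-1\bigr) \;\ge\; x p_i\bigl(d_i(\vecp')-1\bigr) \;=\; \Oracle(\vecp')_i,
\]
as desired.

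The only subtle step is the WGS application: one must apply WGS to the price change $\vecp' \to \vecp''$, in which only prices of goods \emph{outside} $S$ move (upwards), while those in $S$ are untouched — it is tempting to try to apply WGS directly to $\vecp \to \vecp'$, but that change also raises the price of the target good $i \in S$ itself, and WGS says nothing about the own-price effect on that good. The homogeneity detour to $\vecp''$ is precisely the device that turns the own-price raise into a cross-price raise on the complement, where WGS does give us the inequality we need.
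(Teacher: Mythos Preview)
Your proof is correct and follows essentially the same route as the paper: both arguments route through the auxiliary vector $x\vecp$, use price homogeneity (the paper states it as $\Oracle(x\vecp)=x\,\Oracle(\vecp)$, you phrase it as $d_i(x\vecp)=d_i(\vecp)$), and then apply WGS to the price change between $\UpdatePrice(\vecp,x,S)$ and $x\vecp$, which only moves prices outside $S$. The only cosmetic difference is that the paper works directly with surpluses while you first reduce to demands and multiply back by $xp_i$ at the end; the underlying idea is identical.
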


\begin{proof}
  Because prices are scalable in exchange markets, we have $\Oracle(x\cdot \vecp) = x\cdot \Oracle(\vecp)$ for any value $x > 0$. Also, by the WGS property, when we decrease any set of prices, this will not increase the demand for goods with untouched price. Since these goods have untouched price and non-decreasing demand, they also enjoy non-decreasing surplus. Therefore for any $i \in S$, we have $\Oracle(\UpdatePrice(\vecp, x, S))_i \leq \Oracle(x\cdot \vecp)_i = x\cdot\Oracle(\vecp)_i$. 
\end{proof}

The next lemma is the key step in the proof of our main result. It establishes a multiplicative decrease of the potential function at the end of many of the rounds. Let $R$ be a sufficiently large constant to be explained in the end of the proof of the following lemma.

\begin{lemma}\label{lemma:potential}
  If $x < 1+\frac{1}{Rm^3}$ at the end of round $t$ in \AlgExchangePrecise, 
  then $\Phi(\vecp_t) \leq \Phi(\vecp_{t-1}) \left(1-\Omega\left(\frac{1}{m^3}\right)\right)$.
\end{lemma}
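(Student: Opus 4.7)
Plan: Write $\delta := \min_{i \in G_1} s'_i = \max(\{s'_i \mid i \in G_2\} \cup \{0\})$. My plan is to lower-bound the decrease $\Phi(\vecp_{t-1}) - \Phi(\vecp_t) = \sum_i (s_i^2 - (s'_i)^2)$ by $\Omega(\Phi(\vecp_{t-1})/m^3)$ via a coordinate-wise analysis combined with a case split. The tools will be Claim~\ref{claim:xp} ($s'_i \le x s_i$ on $G_1$), Claim~\ref{claim:price1} ($s'_i \ge 0$ on $G_1$), the WGS property ($s'_i \ge s_i$ on $G_2$), the stopping condition ($s'_i \ge \delta$ on $G_1$, $s'_i \le \delta$ on $G_2$), and the net flow $T := \sum_{G_1}(s_i - s'_i) = \sum_{G_2}(s'_i - s_i) \ge 0$ coming from the sum-conservation $\sum_i s_i = \sum_i s'_i = 0$.

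Expanding $s_i^2 - (s'_i)^2 = (s_i - s'_i)(s_i + s'_i)$ coordinate by coordinate: on $G_1$ with $s'_i \le s_i$, the bound $s_i + s'_i \ge s_{i_k} + \delta$ gives per-coordinate contribution $\ge (s_{i_k} + \delta)(s_i - s'_i)$; on $G_1$ with $s'_i > s_i$, Claim~\ref{claim:xp} bounds the loss by $(x^2-1) s_i^2$; on $G_2$, the bound $s_i + s'_i \le s_{i_{k+1}} + \delta$ (from $s_i \le s_{i_{k+1}}$ and $s'_i \le \delta$) limits the unwanted increase to $(s_{i_{k+1}} + \delta)_+ (s'_i - s_i)$. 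Summing, and using $\sum_{\{i \in G_1 : s'_i \le s_i\}}(s_i - s'_i) \ge T$, yields the \emph{master inequality}
\[
\Phi(\vecp_{t-1}) - \Phi(\vecp_t) \;\ge\; (s_{i_k} - s_{i_{k+1}})_+ T \;-\; (x^2-1)\Phi(\vecp_{t-1}),
\]
while isolating $i^* \in G_1$ with $s'_{i^*} = \delta$ (and using $s_{i^*} \ge s_{i_k}$) gives the complementary \emph{isolated inequality}
\[
\Phi(\vecp_{t-1}) - \Phi(\vecp_t) \;\ge\; (s_{i_k}^2 - \delta^2) \;-\; (x^2-1)\Phi(\vecp_{t-1}) \;-\; (s_{i_{k+1}} + \delta)_+ T.
\]

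Two structural facts then complete the argument. First, the minimality of $k$ yields $s_{i_{k'}} \le (1+1/m) s_{i_{k'+1}}$ for every $k' < k$, hence $s_{i_k} \ge s_{i_1}/e$ and $s_{i_k}^2 \ge \Phi(\vecp_{t-1})/(e^2 m)$. Second, taking $j^* \in G_2$ with $s'_{j^*} = \delta$, the bound $s_{j^*} \le s_{i_{k+1}}$ gives $T \ge \delta - s_{i_{k+1}}$ in subcase~A (and in subcase~B whenever $\delta > 0$). In subcase~A, the gap condition gives $s_{i_k} - s_{i_{k+1}} \ge s_{i_{k+1}}/m \ge s_{i_k}/(2m)$, and I split further: when $\delta \ge (s_{i_k} + s_{i_{k+1}})/2$, the master inequality gives $(s_{i_k} - s_{i_{k+1}}) T \ge (s_{i_k} - s_{i_{k+1}})^2/2 \ge s_{i_{k+1}}^2/(2m^2) = \Omega(\Phi/m^3)$; when $\delta < (s_{i_k} + s_{i_{k+1}})/2$, the isolated inequality gives $s_{i_k}^2 - \delta^2 \ge \Omega(s_{i_k}^2/m) = \Omega(\Phi/m^2)$, and the $(s_{i_{k+1}}+\delta) T$ loss is controlled via the upper bound $T \le (x-1) \cdot 2^{D_1+D_2} m^2$ derived from Assumption~\ref{asp2}. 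Subcase~B with $\delta = 0$ is cleanest: the factor $(s_{i_{k+1}}+\delta)_+$ vanishes and the isolated inequality reduces to $s_{i_k}^2 - (x^2-1)\Phi = \Omega(\Phi/m)$.

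The main obstacle is the close-to-$s_{i_{k+1}}$ regime of subcase~A, where the master inequality alone is not sharp enough and one must control the $G_2$-loss $(s_{i_{k+1}}+\delta)T$ in the isolated inequality by a quantitative upper bound on $T$ using Assumption~\ref{asp2}; this is also where the refined per-coordinate bound $s_i + s'_i \le s_{i_{k+1}} + \delta$ (rather than the crude $2\delta$) becomes essential, as it exposes the multiplicative gap $1 + 1/m$ in the coefficient of $T$. Choosing the constant $R$ sufficiently large absorbs the $(x^2-1)\Phi = O(\Phi/(Rm^3))$ error and any dependence on the Assumption~\ref{asp2} parameter $2^{D_1+D_2}$, delivering the uniform $\Omega(\Phi/m^3)$ decrease claimed by the lemma.
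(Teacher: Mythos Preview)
Your master and isolated inequalities are derived correctly, and subcase~B and the large-$\delta$ half of subcase~A go through. The gap is in the small-$\delta$ half of subcase~A, where you invoke Assumption~\ref{asp2} to control the loss term $(s_{i_{k+1}}+\delta)_+\,T$ via $T \le (x-1)\,m^2\,2^{D_1+D_2}$. Plugging in $x-1 < 1/(Rm^3)$ gives a loss of order $s_{i_k}\cdot 2^{D_1+D_2}/(Rm)$, while your gain $s_{i_k}^2-\delta^2$ is only $\Omega(s_{i_k}^2/m)=\Omega(\Phi/m^2)$. For the gain to dominate you need $R \gtrsim m\,2^{D_1+D_2}/s_{i_k}$; since $s_{i_k}$ can be as small as $\varepsilon/(e\sqrt{m})$ near termination, this forces $R \gtrsim \sqrt{m}\,2^{D_1+D_2}/\varepsilon$. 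But $D_2$ is only assumed polynomial in the input size, so $2^{D_2}$ is exponential, and the $1/\varepsilon$ factor destroys the $\log(1/\varepsilon)$ running time. In short, $R$ cannot absorb $2^{D_1+D_2}$ and remain a constant; Lemma~\ref{lemma:xmax} would then give an exponential number of large-$x$ rounds.

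The paper sidesteps this entirely and never uses Assumption~\ref{asp2} in this lemma. The key difference is that instead of tracking $s_i - s'_i$ on $G_1$ (which can be negative, forcing your $G_1^+$/$G_1^-$ split and the eventual need to upper-bound $T$), it writes $s'_i = x's_i - \delta_i$ with $x' = 1+1/(Rm^3)$, so that $\delta_i \ge 0$ for \emph{every} $i\in G_1$ by Claim~\ref{claim:xp}. Expanding $\sum_{G_1}(x's_i-\delta_i)^2 + \sum_{G_2}(s_j+\delta_j)^2$ and grouping terms yields directly
\[
\Phi(\vecp_t) \;\le\; x'^2\,\Phi(\vecp_{t-1}) \;-\; (s_{i_k}-s_{i_{k+1}})\sum_{i\in G_1}\delta_i,
\]
with $\sum_{G_1}\delta_i \ge \tfrac{1}{2}(s_{i_k}-s_{i_{k+1}})$ coming from $\sum_{G_1}\delta_i \ge \delta_{i^*} \ge s_{i_k}-\delta$, $\sum_{G_1}\delta_i \ge \sum_{G_2}\delta_j \ge \delta_{j^*} \ge \delta - s_{i_{k+1}}$, and adding. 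No upper bound on any flow quantity is needed, and $R$ is a genuine universal constant. Your framework can in fact be repaired without Assumption~\ref{asp2}: a convex combination of your master and isolated inequalities with weight $\alpha=(s_{i_{k+1}}+\delta)_+/(s_{i_k}+\delta)$ eliminates $T$ and leaves $(s_{i_k}-s_{i_{k+1}})(s_{i_k}-\delta)\ge (s_{i_k}-s_{i_{k+1}})^2/2$ in the small-$\delta$ regime; but this is essentially rediscovering the paper's identity.
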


\begin{proof}
  We let $\vecs = \Oracle(\vecp_{t-1})$ and $\vecs' = \Oracle(\vecp_t)$ throughout the proof.
	An intuition of the proof is as follows. In \AlgExchangePrecise, by the conditions used to define $G_1$ and $G_2$, we always have $s_{i_k} \geq s_{i_1} /e$ and $s_{i_k} - s_{i_{k+1}} > s_{i_k}/(m+1) \geq s_{i_1} / e(m+1)$. Hence, roughly speaking, every good in $G_1$ has reasonably large surplus, and there is a reasonably large gap between the surpluses in $G_1$ and $G_2$. Next, at the end of the current round, we decreased the minimum surplus of a good in $G_1$ to either $\min\{s'_i \mid i \in G_1 \} = 0$ (Case (1) below) or $\min\{s'_i \mid i \in G_1 \} = \max\{s'_i \mid i \in G_2\}$ (Case (2) below). In both cases, the total value of $\Phi$ must decrease by a factor of $1 - \Omega(1/m^3)$.

  More formally, if the algorithm proceeds to round $t$, then $\|\vecs\| > \varepsilon^2$. By the definition of set $G_1$, we have $s_{i_1} \leq (1+\frac1m)s_{i_2} \leq \cdots \leq (1+\frac1m)^{k-1}s_{i_k} < e\cdot s_{i_k}$. Hence, $s^2_{i_k} > (s_{i_1}/e)^2 \geq \Phi(\vecp_{t-1})/(me^2) > (\varepsilon/(\sqrt{m}e))^2$, so the surpluses of goods in $G_1$ are similar up to a factor of $e$ and bounded from below. Also we have $(s_{i_k} - s_{i_{k+1}})^2 > (s_{i_1} / e(m+1))^2 \geq \Phi(\vecp_{t-1})/(e^2(m+1)^2m)$.

%Since we rely on an approximate demand oracle, the surpluses of goods in $G_1$ might not change in a monotone fashion when increasing their prices. Nevertheless,
Next we relate the surplus in the beginning and the end of a round as follows. For every $i \in G_1$, by Claim~\ref{claim:xp}, the surplus from exact demands satisfies $s'_i \leq x\cdot s_i$. Since $x < 1+\frac1{Rm^3}$, it holds that $s'_i < (1+\frac{1}{Rm^3})s_i$. We do not touch the price of any good $j \in G_2$, so the WGS property implies for exact demands $s'_j \geq s_j$. 
  
Now, in order to bound the change of $\Phi(\vecp_t)$, we consider $\vecs'$ according to $G_1$ and $G_2$. We distinguish two cases.

\paragraph{Case 1: $\max\{s'_i \mid i \in G_2\} < 0$} 
In this case the algorithm has decreased the surplus of some good in $G_1$ to 0. This decrease alone brings down the potential function $\Phi$ by a factor of $1 - \Omega(1/m)$. All other surpluses will cause an increase by a factor of at most $1+O(1/m^3)$.

More formally, the contribution of goods of $G_1$ to $\Phi(\vecp_t)$ can be upper bounded by
{\small
$$\sum_{j=1}^ks'^2_{i_j} < \sum_{j=1}^{k-1}\left(1+\frac1{Rm^3}\right)^2s^2_{i_j}.$$}
Furthermore, for every $i \in G_2$, by the WGS property of the market, we know $s_i \leq s'_i < 0$. Thus the contribution of goods of $G_2$ to $\Phi(\vecp_t)$ can be upper bounded by
%
% {\small
% $$ \sum_{\substack{j \in G_2\\ s'_j < -m^3\mu}} s'^2_j  
% %\leq \sum_{\substack{j \in G_2\\ s'_j < -m^3\mu}} (s'_j-\mu)^2 
% %\leq \sum_{\substack{j \in G_2\\ s'_j < -m^3\mu}} (s_j-\mu)^2 
% \leq \sum_{\substack{j \in G_2\\ s'_j < -m^3\mu}} (s_j-2\mu)^2 
% \leq \sum_{\substack{j \in G_2\\ s'_j < -m^3\mu}} \left(1+\frac{2}{Rm^3}\right)^2s^2_j. $$}
%
{\small
\begin{eqnarray*}
 \sum_{j=k+1}^m s'^2_{i_j}& \leq & \sum_{j=k+1}^m s^2_{i_j} < \sum_{j=k+1}^m \left(1+\frac1{Rm^3}\right)^2 s^2_{i_j} 
\end{eqnarray*}}
Combining the two parts 
{\small
\begin{eqnarray*}
	\Phi(\vecp_t) = \sum_{j=1}^ms'^2_{i_j} & < & \sum_{j\neq k}\left(1+\frac1{Rm^3}\right)^2s^2_{i_j}\\
    &=&\left(1+\frac1{Rm^3}\right)^2(\Phi(\vecp_{t-1})-s^2_{i_k})\\
    &<&\left(1+\frac1{Rm^3}\right)^2\left(1-\frac{1}{e^2m}\right)\Phi(\vecp_{t-1})\\
    &<&\left(1-\frac1{2e^2m}\right)\Phi(\vecp_{t-1})
\end{eqnarray*}}
where the last inequality holds for any $m \ge 2$ with sufficiently large constant $R$.

\paragraph{Case 2: $\max\{s'_i \mid i \in G_2\} \ge 0$} 
In this case the gap between surpluses in $G_1$ and $G_2$ decreases to 0. Below we show that the closing of this gap yields a decrease of the potential function $\Phi$ by a factor of $1 - \Omega(1/m^3)$. All other surpluses will increase by a factor of at most $1+O(1/m^3)$. In combination, it turns out that $\Phi$ will decrease by a factor of $1 - \Omega(1/m^3)$.

More formally, in this case $\min\{s'_i \mid i \in G_1\} = \max\{s'_j \mid j \in G_2\}$. Let $s_{G_1} = \min\{s'_i \mid i \in G_1\}$ and $s_{G_2}=\max\{s'_j \mid j \in G_2\}$.
For every $i \in G_1$, let $s'_i = x's_i - \delta_i$ where $x'=(1+\frac1{Rm^3})$, and for every $j \in G_2$, let $s'_j = s_j + \delta_j$. Hence $\delta_i, \delta_j \geq 0$ for all $i, j$. Further, we have $\sum_{i=1}^m s_i = \sum_{i=1}^m s'_i = 0$, hence
\begin{eqnarray*}
\sum_{i \in G_1}\delta_i & = & \sum_{i \in G_1} x's_i - \sum_{i \in G_1}s'_i \\
& \geq & \sum_{i \in G_1} s_i + \sum_{j \in G_2}s'_j = \sum_{i \in G_1} s_i + \sum_{j \in G_2} (s_j + \delta_j) \\
& = & \sum_{j \in G_2}\delta_j
\end{eqnarray*}
and 
\vspace{-0.4cm}
$$\sum_{i\in G_1}\delta_i \geq \frac12(s_{i_k}-s_{i_{k+1}}).$$
Now we have 
{\small 
\begin{eqnarray}
	\Phi(\vecp_t) & = & \sum_is'^2_i = \sum_{i \in G_1}(x's_i - \delta_i)^2 + \sum_{j \in G_2}(s_j + \delta_j)^2 \nonumber\\
    & = & \left(\sum_{i \in G_1}x'^2s^2_i + \sum_{j \in G_2}s_j^2\right) + \left(\sum_{j \in G_2}\delta_j(s_j+\delta_j)-\sum_{i \in G_1}\delta_i(x's_i-\delta_i)\right) - \sum_{i \in G_1}x's_i\delta_i + \sum_{j \in G_2}\delta_js_j \label{eq:line1}\\
    & < & x'^2\Phi(\vecp_{t-1}) + \left(s_{G_2}\sum_{j \in G_2}\delta_j - s_{G_1}\sum_{i \in G_1}\delta_i\right) - s_{i_k}\sum_{i \in G_1}\delta_i + s_{i_{k+1}}\sum_{j \in G_2}\delta_j \label{eq:line2}\\   %
    & < & x'^2\Phi(\vecp_{t-1}) - (s_{i_k} - s_{i_{k+1}})\sum_{i \in G_1}\delta_i\label{eq:line3}\\
    %
    % since s_{G_1} >= s_i_k+1:
    %
    % & < & x'^2\Phi(\vecp_{t-1}) + 4m\mu^2 + 8s_{G_1}m\mu - (s_{i_k} - s_{i_{k+1}})\sum_{i \in G_1}\delta_i \label{eq:line4}\\
    %
    % since s_{G_1} < 2m:
    %
    & < & x'^2\Phi(\vecp_{t-1}) - \frac{1}{2}(s_{i_k}-s_{i_{k+1}})^2 \label{eq:line5}\\
    & < & \left(1+\frac2{Rm^3}+\frac1{R^2m^6}-\frac1{2e^2(m+1)^2m}\right)\Phi(\vecp_{t-1}) \label{eq:line6}\\
    & = & \left(1-\Omega\left(\frac1{m^3}\right)\right)\Phi(\vecp_{t-1}) \label{eq:line7}
\end{eqnarray}
}
Here \eqref{eq:line1} can be derived by expanding the quadratic formula and appropriately reorganizing the terms. For the step from \eqref{eq:line1} to \eqref{eq:line2}, in the first bracket we overestimate the quadratic terms of $s$ into $x'^2\Phi(\vecp_{t-1})$. In the second bracket, the terms are bounded correctly using $s_{G_1}$ for all $i \in G_1$ and $s_{G_2}$ for all $j \in G_2$. For the final two terms in~\eqref{eq:line1} and \eqref{eq:line2} we use the definition of $s_{i_k}$ and $s_{i_{k+1}}$ and the fact that $x' > 1$. For the step from \eqref{eq:line2} to \eqref{eq:line3}, for the second bracket of \eqref{eq:line2} we note $s_{G_1} \ge s_{G_2}$ and the two sets $G_1$ and $G_2$ have the same sums of $\delta$-terms. By the same argument, we can transform the last two terms of \eqref{eq:line2} as shown. From \eqref{eq:line3} to \eqref{eq:line5} and then to \eqref{eq:line6}, we use the bound for $\sum_{i \in G_1}\delta_i$ and $(s_{i_k} - s_{i_{k+1}})^2 > \Phi(\vecp_{t-1})/(e^2(m+1)^2m)$ as shown above.

Finally, using a sufficiently large constant $R$, the multiplicative term in \eqref{eq:line6} can be decreased to strictly less than 1 for every $m \ge 2$. The final expression in~\eqref{eq:line7} captures the asymptotics and proves the lemma.
\end{proof}

Observe that the previous lemma shows a decrease in the potential only for rounds in which the factor $x$ is rather small. The next lemma shows that there can be only a limited number of rounds with a larger value of $x$.

\begin{lemma}\label{lemma:xmax}
  During a run of \AlgExchangePrecise, there can be only $O(m^4 D_1)$ many rounds that end with $x \geq 1+\frac1{Rm^3}$.
\end{lemma}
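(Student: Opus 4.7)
The plan is a counting / pigeonhole argument that tracks the total multiplicative price growth each good can absorb. By Claim~\ref{claim:priceUpper} every price in \AlgExchangePrecise\ stays within $[1, 2^{D_1}]$ throughout the run, and by construction the update step only ever scales prices of goods in $G_1$ by some factor $x \geq 1$ (monotonicity discussed below). Consequently, for any fixed good $j$, the product of all multipliers ever applied to $p_j$ across all rounds is bounded above by $2^{D_1}$.

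Now focus on the "large" rounds, meaning rounds with $x \geq 1 + 1/(Rm^3)$, and within those restrict attention to rounds in which a fixed good $j$ happens to lie in $G_1$. Each such round contributes a factor of at least $1 + 1/(Rm^3)$ to the product of multipliers applied to $p_j$. Taking logarithms and using $\ln(1+y) = \Theta(y)$ for small $y > 0$, the number of large rounds in which $j$ participates is at most
\[
\frac{D_1 \ln 2}{\ln(1 + 1/(Rm^3))} = O(m^3 D_1).
\]
Summing over the $m$ goods bounds the total number of (round, good) pairs in which a good lies in $G_1$ during a large round by $O(m^4 D_1)$. Since the rule for choosing $k$ in the algorithm forces $|G_1| \geq 1$ in every round, each large round contributes at least one such pair, and hence the total number of large rounds is $O(m^4 D_1)$, which is exactly the claim.

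The one piece of the argument I expect to be the real obstacle is justifying that the $x$ selected by the algorithm in every round is at least $1$, so that prices are indeed monotonically nondecreasing. This should follow from the structure of $G_1$ and $G_2$: by the gap/zero criterion, at $x = 1$ every surplus in $G_1$ is strictly positive and strictly exceeds $\max(\{s_j : j \in G_2\} \cup \{0\})$, so the equality condition the algorithm targets cannot already be met. Combined with the scale-invariance $\Oracle(c\vecp) = c\,\Oracle(\vecp)$ and the WGS property (which together imply that raising $x$ above $1$ drives $\min\{s'_i : i \in G_1\}$ downward while pushing $\max\{s'_j : j \in G_2\}$ upward), one concludes that the equality is first achieved at some $x \geq 1$. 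With this monotonicity of prices in hand, the counting argument above closes the proof essentially immediately.
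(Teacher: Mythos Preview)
Your proposal is correct and takes essentially the same approach as the paper: bound the number of times any single good's price can absorb a factor of $1+\tfrac{1}{Rm^3}$ by $O(m^3D_1)$ via Claim~\ref{claim:priceUpper}, then sum over the $m$ goods. Your added discussion of why $x \ge 1$ is extra care the paper omits; one minor imprecision there is that WGS plus scale-invariance do not literally force $\min\{s'_i : i \in G_1\}$ to be monotone in $x$, but continuity (Assumption~\ref{asp2}) plus the observation that the inequality is strict at $x=1$ already suffices to guarantee the algorithm selects some $x>1$.
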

\begin{proof}
  By Claim~\ref{claim:priceUpper}, every price can be increased by a factor of $1+\frac1{Rm^3}$ at most $O(\log_{1+1/Rm^3}2^{D_1}) = O(m^3D_1)$ times. 
  Hence there can be at most $O(m^4D_1)$ many rounds with $x \geq 1+\frac1{Rm^3}$.
\end{proof}

Finally, we can assemble the properties to show the number of rounds to reach an $(1+\varepsilon)$-approximate market equilibrium is polynomially bounded.

\begin{lemma}\label{lem:exchangeWGS}
  For any market that satisfies Assumptions~\ref{asp1} and~\ref{asp2}, and for any $\varepsilon > 0$, \AlgExchangePrecise returns the price vector of an $(1+\varepsilon)$-approximate market equilibrium in a number of rounds polynomial in the input size and $\log(1/\varepsilon)$.
\end{lemma}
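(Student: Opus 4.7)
The plan is to partition all rounds into two types based on the final multiplier $x$, handle each type separately via the two lemmas already proved, and then verify that the stopping condition delivers the desired approximation guarantee.

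The key preliminary observation is that $\Phi$ admits a universal upper bound throughout the execution. By Claim \ref{claim:surplusBound}, $|\Oracle(\vecp_t)| \leq 2m$ at every step $t$, and since $\|\cdot\|_2 \leq \|\cdot\|_1$ on $\R^m$, it follows that $\Phi(\vecp_t) = \|\Oracle(\vecp_t)\|_2^2 \leq 4m^2$ for every $t \geq 0$. This ceiling will be crucial: even though Lemma \ref{lemma:potential} only guarantees a decrease in rounds with a small multiplier, the potential can never explode above $4m^2$.

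Call a round \emph{small-$x$} if the multiplier $x$ chosen in line 8 satisfies $x < 1 + \frac{1}{Rm^3}$, and \emph{large-$x$} otherwise. By Lemma \ref{lemma:xmax}, the number of large-$x$ rounds is $O(m^4 D_1)$. Partition the execution into maximal blocks of consecutive small-$x$ rounds separated by large-$x$ rounds; there are at most $O(m^4 D_1) + 1$ such blocks. Within any block, Lemma \ref{lemma:potential} yields a multiplicative decay of $\Phi$ by a factor of $1 - \Theta(1/m^3)$ per round. Since $\Phi$ enters any block at value at most $4m^2$ and the repeat-loop halts as soon as $\Phi < \varepsilon^2$, a standard geometric-decay calculation shows that any single block contains at most $O(m^3 \log(m/\varepsilon))$ rounds. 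Multiplying, the total number of small-$x$ rounds is $O(m^7 D_1 \log(m/\varepsilon))$, and together with the large-$x$ rounds the overall round count is polynomial in $m$, $D_1$, and $\log(1/\varepsilon)$.

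Finally, it remains to interpret the termination condition. When the algorithm stops, $\|\Oracle(\vecp_t)\|_2^2 < \varepsilon^2$, so $|s_j| < \varepsilon$ for every good $j$. Since prices start at $1$ and only multiplicative increases with $x > 1$ are ever applied, $p_j \geq 1$ holds throughout; hence $|z_j| = |s_j|/p_j < \varepsilon$ and $x_j = 1 + z_j \leq 1 + \varepsilon$ for every good. The bundles returned by the oracle at $\vecp_t$ are agent demands by definition of $\Oracle$, so the resulting pair $(\vecp_t, \vecx)$ satisfies the conditions of a $(1+\varepsilon)$-approximate equilibrium. The main technical subtlety is that Lemma \ref{lemma:potential} does not guarantee a decrease in large-$x$ rounds, so $\Phi$ could in principle rise during such a round; the universal $4m^2$ ceiling from Claim \ref{claim:surplusBound} is exactly what resolves this, capping the restart value at the start of each block and bounding each block's length by $O(m^3 \log(m/\varepsilon))$.
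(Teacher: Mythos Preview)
Your proof is correct, and the route differs from the paper's in a clean way worth noting.

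The paper does not use the $4m^2$ ceiling on $\Phi$ at all. Instead it bounds the potential \emph{increase} in each large-$x$ round: via Claim~\ref{claim:xp} one gets $\Phi(\vecp_t)\le m x_t^2\,\Phi(\vecp_{t-1})$, and since $\prod_t x_t\le 2^{mD_1}$ by Claim~\ref{claim:priceUpper}, the cumulative multiplicative increase over all large-$x$ rounds is at most $m^{O(m^4D_1)}$. Balancing this against the $1-\Omega(1/m^3)$ decay in small-$x$ rounds gives a total round count of $O(D_1 m^7\log m + m^3\log(1/\varepsilon))$.

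Your argument replaces this tracking step with the observation that $\Phi$ can never exceed $4m^2$ (from Claim~\ref{claim:surplusBound}), so after every large-$x$ round you simply reset to the universal ceiling and bound each of the $O(m^4D_1)$ blocks by $O(m^3\log(m/\varepsilon))$ small-$x$ rounds. This is more elementary---it avoids any analysis of what happens to $\Phi$ in a large-$x$ round---but the price is a weaker bound, $O(m^7 D_1\log(m/\varepsilon))$, where the $\log(1/\varepsilon)$ term is multiplied by $m^7D_1$ rather than $m^3$. Both bounds are polynomial, so either proves the lemma as stated.

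You also explicitly verify that the stopping condition yields a $(1+\varepsilon)$-approximate equilibrium (using $p_j\ge 1$ throughout to convert the surplus bound into an excess-demand bound), which the paper leaves implicit.
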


\begin{proof}
  Let $x_t$ be the value of $x$ we find in round $t$ of \AlgExchangePrecise. First because at least one price will increase by a factor of $x_t$ in round $t$, by Claim~\ref{claim:priceUpper} we have $\prod_tx_t \leq 2^{mD_1}$.
  At the end of round $t$, if $x_t \geq 1+\frac1{Rm^3}$, let $s=\max\{s_i \mid s_i \in \Oracle(\vecp_{t-1})\}$ 
  and $s'=\max\{s_i \mid s_i \in \Oracle(\vecp_{t})\}$, 
  then by Claim~\ref{claim:xp} we have $\Phi(\vecp_t) \leq ms'^2 \leq mx_t^2s^2 \leq mx_t^2\Phi(\vecp_{t-1})$.
  Moreover, by Lemma~\ref{lemma:xmax} there will be at most $O(m^4D_1)$ such rounds.
  Hence the total increase of $\Phi(\vecp_t)$ in these rounds will be no more than a factor of 
  $\prod_{x_t \geq 1+1/Rm^3}{mx_t^2} \leq m^{O(m^4D_1)}2^{2mD_1} = m^{O(m^4D_1)}$.

  For all other rounds, we have $x < 1+\frac1{Rm^3}$, and by Lemma~\ref{lemma:potential}, the potential function is decreased by a factor of $1/(1-\Omega(\frac1{m^3}))$. Therefore the total number of rounds before $\Phi(\vecp_t) \leq \varepsilon^2$ will be at most 
{\small
  $$O\left(\log_{1/(1-\Omega(\frac1{m^3}))}\frac{m^{O(m^4D_1)}}{\varepsilon^2}\right) = O\left(D_1m^7\log{m}+m^3\log{\frac1\varepsilon}\right)$$
  }
%
%  In each round, the number of queries to the oracle is no more than $O(\log{\Delta}) = O(D_1+D_2+\log{m}+\log{\frac1{\varepsilon}})$. We conclude that the total number of queries during the algorithm is $O(\left(D_1m^7\log{m}+m^3\log{\frac1\varepsilon}\right)(D_1+D_2+\log{m}+\log{\frac1\varepsilon}))$, which is a polynomial in the input size and $\log(1/\varepsilon)$.
\end{proof}

The final step is to argue that the total running time (in terms of oracle queries) of the algorithm is polynomial. Given Lemma~\ref{lem:exchangeWGS}, it remains to show that each round only invokes polynomially many oracle queries. Intuitively, this can be achieved by finding $x$ in line 8 of \AlgExchangePrecise via binary search. However, for a formal proof it is unavoidable to make statements about the required bit precision of $\vecp$, $x$, and the approximate demand oracle, which we did not discuss here. In the following we present the statement of the main theorem. The formal proof, together with the statement of algorithm \AlgExchange based on bounded precision and full details on its analysis, is deferred to Appendix~\ref{app:wgs}.
\begin{theorem}\label{thm:exchangeWGS}
  For any market that satisfies Assumptions~\ref{asp1} and~\ref{asp2}, and for any $\varepsilon > 0$, \AlgExchange returns the price vector of an $(1+\varepsilon)$-approximate market equilibrium in time polynomial in the input size and $\log(1/\varepsilon)$.
\end{theorem}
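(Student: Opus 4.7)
The plan is to leverage Lemma \ref{lem:exchangeWGS}, which already bounds the number of rounds of the idealized algorithm by a polynomial in the input size and $\log(1/\varepsilon)$, and reduce the remaining task to two subgoals: (i) simulating each round in polynomial time using only approximate oracle calls $\AOracle(\vecp, \mu)$, and (ii) showing that the potential-decrease analysis of Lemmas \ref{lemma:potential} and \ref{lemma:xmax} remains valid under this approximation. Concretely, I would fix at the outset a single oracle precision $\mu$ that is polynomially small in $\varepsilon$, $1/m$, $2^{-D_1}$, and $2^{-D_2}$, chosen small enough that the $O(\mu)$ additive perturbation in every surplus affects $\Phi(\vecp_t)$ by $o(\Phi(\vecp_t)/m^3)$ whenever $\Phi(\vecp_t) \ge \varepsilon^2$.

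For the per-round cost, the dominant operation is finding $x$ in line~8, which I would implement by binary search over $x \in [1, \xmax]$ with $\xmax \le 2^{D_1}$ by Claim \ref{claim:priceUpper}. Each probe invokes one oracle call $\AOracle(\UpdatePrice(\vecp_{t-1}, x, G_1), \mu)$; Assumption \ref{asp2} bounds $|\partial s_i/\partial p_j|$ by $2^{D_2}$, so a resolution $\delta$ in $x$ perturbs each surplus by at most $O(\delta \cdot 2^{D_1 + D_2})$. Choosing $\delta$ of the form $\mathrm{poly}(\mu \cdot 2^{-D_1 - D_2})$ lets the termination condition for line~8 be met up to the ambient noise $\mu$, so the binary search uses $O(\log(\xmax/\delta)) = \mathrm{poly}(m, D_1, D_2, \log(1/\varepsilon))$ queries. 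To keep the bit length of $\vecp_t$ polynomial across the polynomially many rounds of Lemma \ref{lem:exchangeWGS}, I would snap $x$ at the end of each round to a rational of precision polynomial in these parameters, again using Assumption \ref{asp2} to verify that the line-8 condition is preserved within error $\mu$.

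Finally, the analysis of Lemmas \ref{lemma:potential} and \ref{lemma:xmax} goes through essentially unchanged, because the $\Omega(1/m^3)$ multiplicative potential drop is separated from the $O(\mu)$ additive slack by a polynomial margin as long as $\Phi \ge \varepsilon^2$. The main technical obstacle I anticipate is the preservation of the structural invariants of Claims \ref{claim:surplusBound}--\ref{claim:priceUpper}, which are stated for the exact oracle $\Oracle$: most critically, Claim \ref{claim:price1}, which asserts that goods with negative surplus keep price $1$ so that prices stay bounded by $2^{D_1}$. Under $\AOracle$, a good whose true surplus is slightly negative could be misread as slightly positive and erroneously placed into $G_1$, causing its price to drift upward. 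To handle this I would add a $2\mu$-wide dead zone to the classification in line~6 (treating goods with $|\AOracle(\vecp_{t-1})_i| \le 2\mu$ as zero-surplus and placing them in $G_2$), which guarantees that any good whose true surplus is negative keeps its original price of $1$. With this modification Claims \ref{claim:price1}--\ref{claim:priceUpper} still hold exactly in the approximate setting, and combining the round count of Lemma \ref{lem:exchangeWGS} with the polynomial per-round cost yields Theorem \ref{thm:exchangeWGS}.
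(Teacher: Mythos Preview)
Your proposal is correct and follows essentially the same approach as the paper's full proof in Appendix~\ref{app:wgs}: fix a precision $\mu$ polynomial in $\varepsilon/m$, restrict $x$ to a discrete grid $\mathcal{P}$ whose mesh is governed by Assumption~\ref{asp2} so that each round costs $O(\log \Delta)$ oracle calls, and redo the potential analysis with the $O(\mu)$ slack absorbed into the $\Omega(1/m^3)$ drop. The only cosmetic difference is that the paper shifts the classification threshold in line~7 from $0$ to $\mu$ (and the target in line~8 from $0$ to $\mu$) rather than introducing your $2\mu$ dead zone, but both devices serve exactly the same purpose of keeping truly negative-surplus goods out of $G_1$ so that Claims~\ref{claim:price1}--\ref{claim:priceUpper} survive.
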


\begin{remark} \rm
Our main goal in the analysis was to establish a bound on the running time that is polynomial in $m$, $L$, and $\log(1/\varepsilon)$. For the sake of simplicity we did not optimize the bounds beyond being polynomial. It appears that the dependence on $m$ can be significantly improved, e.g., by a more precise analysis of the actual number of rounds with $x \geq 1+\frac1{Rm^3}$ and their impact on the potential. Moreover, based on our preliminary experiments, it appears that using $\mu = \Theta(\varepsilon/m^4)$ is sufficient, and the algorithm converges to an equilibrium much faster than the bound predicts.
\end{remark}
% !TEX root = tatonnement.tex

\newcommand{\AlgSC}{{\textsc{Alg-Spending}}\xspace}
\newcommand{\Rounding}{{\textsc{Rounding}}\xspace}
\newcommand{\AlgSCRounding}{{\textsc{Alg-Spending-Rounding}}\xspace}
\newcommand{\AlgSCExact}{{\textsc{Alg-Spending-Exact}}\xspace}

\section{Exchange Markets with Spending Constraint Utilities} \label{sec:spending}
% Markets with spending constraint utilities were defined by~\citet{Vazirani10}, and they strictly generalize linear utilities. 
% Recall that linear utility function of an agent $i$ from a bundle $x_i=(x_{i1}, \dots, x_{im})$ of goods is given by $\sum_j
% u_{ij}x_{ij}$. It is additive across goods and $u_{ij}$ denotes the utility obtained from a unit amount of good $j$. Note
% that there is no diminishing marginal returns.
% In this setting, utility of an agent $i$ from good $j$ is characterized by a decreasing step function $f_j^i: [0, 1] \mapsto \R_{\ge 0}$, which specifies the rate at which $i$ derives utility on obtaining a unit amount of good $j$ as a function of the {\it fraction} of money spent. Then given the price of good $j$ as $p_j$, function $\displaystyle u_j^i(x) = \int_0^{x}\frac{f_j^i(y)}{p_j} dy$ denotes the utility derived by $i$ as a function of the fraction of her budget spent.
% We assume $f_j^i$ to be a step function, hence alternatively, one can assume that the utility function $u_j^i$ consists of a set of linear segments with decreasing slope. Each linear segment $l$ has a slope $u^l = f_j^i(x)$, and there is a maximum fraction $B^l$ of agent $i$'s budget that can be spent on the segment. We require $u_{ij}^{l+1} < u_{ij}^{l}$ for every $l>1$ for concavity.

In this section we discuss our algorithm for exchange markets with spending constraint utilities. Spending constraint utilities are defined in~\cite{Vazirani10,DevanurV04}, where the utility derived by agent $i$ from good $j$ is given by a piecewise linear concave (PLC) function $f_{ij}$. The overall utility of agent $i$ is additively separable among goods, i.e., $u_i(\vecx) = \sum_{j\in G} f_{ij}(x_{ij})$. Each $f_{ij}$ is a PLC function with a number of linear segments. Each segment $k$ has two parameters: the rate of utility $u_{ijk}$ per unit of good derived on segment $k$ and the maximum fraction $B_{ijk}$ of budget that can be spent on segment $k$. All $B_{ijk}$ are strictly positive, and concavity implies $u_{ijk} > u_{ij(k+1)}$. Here we assume all $u_{ijk}$'s are integers, all $B_{ijk}, w_{ij}$'s are rational numbers and the whole input can be represented in no more than $L$ bits. 

Spending constraint markets may not have an equilibrium \cite{DevanurGV13}, however under mild conditions, there is always a rational equilibrium~\cite{Maxfield97}. Henceforth we will assume the following sufficient condition. Let $\Gamma(S) = \{j\in G\ |\ w_{ij} > 0, i\in S\}$.

\begin{assumption}[Sufficiency Condition]\label{a:suff}
For any subset $S$ of agents, if $\Gamma(S) \neq G$ then there exists $i\in S$ and $j \in G\setminus \Gamma(S)$ such that $u_{ij1} > 0$. 
\end{assumption}

Let us first characterize the demand of each agent $i$ under spending constraint utilities. Given nonzero prices $\vecp$, define the bang-per-buck relative to $\vecp$ for segment $k$ in $f_{ij}$ to be $u_{ijk}/p_j$. Sort all segments of agent $i$ by decreasing bang-per-buck value, and partition them by equality into classes: $Q_1, Q_2, \ldots, Q_t$. Then an allocation is a demand bundle of agent $i$ if and only if there is an integer $t_i$ such that all segments in partitions $Q_1, \ldots, Q_{t_i-1}$ are all fully allocated, and no segments in partitions $Q_{t_i+1}, Q_{t_i+2}, \ldots$ are allocated. Furthermore, the total money spent on partitions $Q_1, \ldots, Q_{t_i-1}$ is no more than agent $i$'s total budget $m_i = \sum_{j \in G}p_jw_{ij}$. We term $Q_{t_i}$ agent $i$'s {\it current partition}, and $Q_1, \ldots, Q_{t_i-1}$ agent $i$'s {\it allocated partition}. Let $spent^a_i$ denote the total money of agent $i$ spent on allocated segments, and let $spent^g_j$ denote the total money spent on allocated segments of good $j$. Agent $i$ can freely demand any segments in her current partition, fully or partially, until her remaining budget $m_i - spent^a_i$ is exhausted.

%Such utility functions enjoy several nice properties, such as the WGS property for tie-breaking based on $\ell_2$-norm of surplus, or uniqueness and rationality of equilibrium prices in the Fisher case~\cite{Vazirani10}. 
\paragraph{Equality and Ratio Graph} The main tool for the analysis of previous algorithms in spending constraint markets is an {\it equality graph}, denoted by $EG(\vecp)$. This graph remains completely unknown to our algorithm, but it proves useful when proving properties of the convergence process. The vertex set of this bipartite graph consists of the set of agents $A$ and the set of goods $G$. Given a price vector $\vecp_t$, we introduce an edge from agent $i$ to good $j$ if and only if agent $i$'s current partition $Q_{t_i}$ contains one segment that belongs to utility function $f_{ij}$ for good $j$. The edges in $EG(\vecp)$ are called \emph{equality edges}.
%$u_{ij}/p_{j} = \max_{k} u_{ik}/p_{k}$, i.e., good $j$ with price $p_j$ gives maximum utility per unit amount of money for buyer $i$. Hence, we introduce edges between buyers and goods whenever the good gives maximum bang-for-buck for this buyer. 
Observe that this graph changes throughout the process when we update the price vector $\vecp_t$. 

Based on $EG(\vecp)$ we can construct an {\it equality network} denoted by $N(\vecp)$: First for each edge in $EG(\vecp)$ from agent $i$ to good $j$, let $k$ be the corresponding segment for $f_{ij}$ that belongs to $Q_{t_i}$. We assign a capacity of $c_{ij} = B_{ijk}m_i$ to this edge. Next add a source vertex $s$ and a sink vertex $t$. For each agent $i$, add an edge from $s$ to $i$ with capacity $m_i-spent^a_i$, and finally add an edge from every good to sink $t$ with infinite capacity. It is easy to see that every maximum flow in $N(\vecp)$ corresponds to a feasible demand allocation for each agent.
 
Similar to~\cite{DevanurPSV08}, we define a {\it balanced flow} as a maximum flow in $N(\vecp)$ that minimizes $\sum_{j \in G}(\ell_{jt} + spent^g_j - p_j)^2$, where $\ell_{jt}$ is the flow along edge $(j, t)$, which also denotes the money spent on good $j$ on segments of current partitions, and $spent^g_j$ is the amount of money spent on allocated partitions of good $j$. By assumption, $\Oracle(\vecp)$ returns the surplus vector derived from (any) balanced flow of the network $N(\vecp)$\footnote{It can be easily shown that every balanced flow gives the same surplus vector.}.

As mentioned above, our algorithm cannot see $EG(\vecp)$. We use a different structure that can be observed on the goods side.

\begin{definition} 
The {\it ratio graph} $RG(M, \vecp)$ is an undirected graph with $m$ vertices (where $m$ is the number of goods in the market), and for any two goods $i$ and $j$, $(i, j)$ is an edge if and only if $p_i/p_j$ can be represented as a ratio of two integers, each of value at most $2^M - 1$.
\end{definition}

We can compute the ratio graph using only the price vector and the input size bound. It allows us to retrieve some information about the hidden structure of $EG(\vecp)$. 

%For the sake of clarity, we deferred all formal proofs from this section to Appendix~\ref{app:proofs}.

\begin{claim} \label{claim:rg}
  Let $L$ be the upper bound on the number of bits to represent each utility parameter and $\vecp$ be a price vector. 
  For any price vector $\vecp$ and goods $i, j$ that are connected in $EG(\vecp)$, $i$ and $j$ are also connected in $RG(M, \vecp)$ for any $M \geq L$.
\end{claim}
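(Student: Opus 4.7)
The plan is to exploit the bipartite structure of $EG(\vecp)$: since one side consists of agents and the other of goods, any path in $EG(\vecp)$ connecting two goods $i$ and $j$ must alternate between goods and agents, so it has the form $i = g_0 - a_1 - g_1 - a_2 - \cdots - a_r - g_r = j$. It therefore suffices to show that for every single step $g_{\ell-1} - a_\ell - g_\ell$ along such a path, the pair $(g_{\ell-1}, g_\ell)$ is an edge in $RG(M, \vecp)$; connectivity in $RG(M, \vecp)$ then follows by concatenating these edges.

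For the single-step claim, I would use the definition of $EG(\vecp)$: an edge between agent $a_\ell$ and good $g_{\ell-1}$ (resp.\ $g_\ell$) exists precisely when $a_\ell$'s current partition $Q_{t_{a_\ell}}$ contains some segment of the PLC function $f_{a_\ell, g_{\ell-1}}$ (resp.\ $f_{a_\ell, g_\ell}$). Because all segments in $Q_{t_{a_\ell}}$ share the same bang-per-buck value, there exist segment indices $k$ and $k'$ with
\[
\frac{u_{a_\ell, g_{\ell-1}, k}}{p_{g_{\ell-1}}} \;=\; \frac{u_{a_\ell, g_\ell, k'}}{p_{g_\ell}}.
\]
Rearranging yields
\[
\frac{p_{g_{\ell-1}}}{p_{g_\ell}} \;=\; \frac{u_{a_\ell, g_{\ell-1}, k}}{u_{a_\ell, g_\ell, k'}},
\]
which expresses the price ratio as a ratio of two positive integers. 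Since every $u_{ijk}$ is an integer representable in at most $L$ bits, both numerator and denominator are bounded by $2^L - 1 \le 2^M - 1$ whenever $M \ge L$. By the definition of the ratio graph, $(g_{\ell-1}, g_\ell)$ is therefore an edge in $RG(M, \vecp)$.

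Chaining these edges along the path yields a walk $g_0 - g_1 - \cdots - g_r$ in $RG(M, \vecp)$ from $i$ to $j$, establishing the claim. I do not expect a genuine obstacle here: the argument is essentially a translation between the bang-per-buck equalities implicit in $EG(\vecp)$ and the bit-length condition defining $RG(M, \vecp)$. The only point that needs slight care is noting that the bang-per-buck equality on each hop is a direct consequence of two segments lying in the \emph{same} current partition $Q_{t_{a_\ell}}$ (not merely in the agent's allocated segments), which is exactly what the $EG(\vecp)$-edges encode.
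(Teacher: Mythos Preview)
Your proposal is correct and follows essentially the same approach as the paper: decompose a path in $EG(\vecp)$ into consecutive good--agent--good hops, use the equal bang-per-buck in the shared current partition to express each price ratio as a ratio of two $L$-bit integers, and then chain these edges in $RG(M,\vecp)$. Your write-up is slightly more explicit about the bipartite alternation and the role of the current partition, but the argument is the same.
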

\begin{proof}
If good $i$ and good $j$ are connected in $EG(\vecp)$, then there exist goods $i = i_0, i_1, \ldots, i_{k-1}, i_k = j$, such that for each $t < k$, there exists some agent $a_t$ that has the same bang-per-buck for two segments for goods $i_t$ and $i_{t+1}$, respectively. Then we have $p_{i_t}/p_{i_{t+1}} = u_{a_ti_tk_1}/u_{a_ti_{t+1}k_2}$ for some $k_1$ and $k_2$. This is a ratio of two integers, each of value at most $2^L-1$. This implies $(i_t, i_{t+1}) \in RG(M, \vecp)$, so $i$ and $j$ are connected in $RG(M, \vecp)$.
\end{proof}
There exists an algorithm for computing exact equilibrium prices in Fisher markets with spending constraint utilities~\cite{Vazirani10}.~\cite{DevanurV04} give an FPTAS for exchange markets with spending constraint utilities. The algorithm finds an $(1+\varepsilon)$-approximate equilibrium in time polynomial in the input size and $1/\varepsilon$.

% We observe that both our algorithm \AlgFisherPrecise for linear utilities in Fisher markets and \AlgSCRounding for linear utilities in exchange markets are directly applicable to markets with spending constraint utilities without any modification. Furthermore, both are t\^atonnement-style algorithms applicable in an unknown market environment. The reason we can apply our algorithms directly is that they work as long as (1) the market satisfies the WGS property, (2) locally the utility functions behave as linear functions, and (3) events such as the emergence of new edges in equality graph or the surpluses of two goods becoming equal all happen at rational prices. Markets with spending constraint utilities satisfy all these properties.

% \begin{theorem}
% 	For markets with spending constraint utilities, $m$ goods, and all parameters represented by $L$-bit integers,
%   \begin{itemize}
%   \item \AlgFisherPrecise returns a price vector of an exact market equilibrium in Fisher markets,
%   \item \AlgSCRounding returns a price vector of an exact market equilibrium in exchange markets,
%   \end{itemize}
%   in time polynomial in $m$, $L$.
% \end{theorem}

For spending constraint markets, we extend our approach for WGS markets. The challenge is that surplus can change in a non-continuous way when prices change the current partitions of the agents. However, we show how to use the linear structure of the market to get rough information about these breakpoints. Also, we maintain prices within a polynomial precision and guarantee convergence to an approximate market equilibrium. Finally, when $\Phi(\vecp)$ becomes small enough, we convert the approximate equilibrium to an exact one using a procedure \AlgSCExact.

Our only assumption is that the whole input can be represented within $L$ bits, and $L$ is known to the algorithm. This implies as a corollary a variant of Assumption~\ref{asp1} -- there is an exact market equilibrium with prices $\vecp$ s.t. $\frac{\max_ip_i}{\min_ip_i} \leq 2^{D_1}$ where $D_1$ is a polynomial of $m$ and $L$. As mentioned before, Assumption~\ref{asp2} does not hold in spending constraint markets.

% In a linear market, given any price vector $\vecp$ in which prices are all rational numbers, the final exact surplus of each good is also a rational number with bit length polynomial in the input size. Hence by calling $\AOracle(\vecp, \mu)$ with sufficiently small $\mu$, we will be able to obtain the exact surplus for each good in polynomial time. Hence in the remaining part of this section, we assume to be equipped with an exact demand oracle $\Oracle$ that can return us the exact surplus of each good in polynomial time.

\subsection{The Framework}

\begin{algorithm*}[t]
\caption{\AlgSC: \label{alg:NoPrecisionLinear} Framework for Spending Constraint Markets}
\DontPrintSemicolon
\SetKwInOut{Input}{Input}\SetKwInOut{Output}{Output}
\SetKw{Param}{Parameters:}
\Input{number of goods $m$, demand oracle $\Oracle$, number of bits $L$ to represent the whole input (including each $u_{ijk}, B_{ijk}, w_{ij}$); solution precision $\varepsilon$}
\Output{Prices $\vecp$ of a $(1+\varepsilon)$-approximate market equilibrium}
\nonl \Param{$\varepsilon' = \frac{\varepsilon}{2\sqrt{m}}$}
\BlankLine
Set initial price $\vecp_0 \ot (1, 1, \ldots, 1)$ and round index $t \ot 0$.\;
%Let $\mathcal{P} = \{\frac{a}{b} \mid a > b, a, b \in \mathbb{Z}^+, a, b \leq 2^{\Delta}\}$\;
\Repeat( \ //round $t$){$\|\Oracle(\vecp_t)\|^2 < \varepsilon'^2$}{
$t \ot t+1$\;
$\vecs = (s_1, \ldots, s_m) \ot \Oracle(\vecp_{t-1})$\;
Sort $\vecs$ such that $s_{i_1} \geq s_{i_2} \geq \cdots \geq s_{i_m}$.\;
Find smallest $k$, such that $s_{i_{k+1}} \leq 0$ or $s_{i_k} >
(1+\frac1m)s_{i_{k+1}}$. \;
Set $G_1 \ot \{i_1, \ldots, i_k\}$ and $G_2 \ot [m] \backslash G_1$\;
Binary search the smallest $x \in (1, \infty)$, such that in $\vecs' = \Oracle(\UpdatePrice(\vecp_{t-1}, x, G_1))$ it holds \\
\nonl \hspace{0.5cm} $\min\{s'_i \mid i \in G_1\} \le \max\{\{s'_i \mid i \in G_2\} \cup \{0\}\}$.\;
$\vecp_{t} \ot \UpdatePrice(\vecp_{t-1}, x, G_1)$\;
}
\Return{$\vecp_t$}
\end{algorithm*}

\AlgSC specifies the general framework of our algorithm, which is similar to the approach taken previously for WGS markets. As mentioned in Section~\ref{sec:model}, here we do not resort to approximation parameter $\mu$, but instead compute the exact surplus. We first analyze \AlgSC and show that it needs only a polynomial number of rounds to converge to an approximate market equilibrium. For now, our analysis disregards all precision and representation issues. In particular, we assume to find the exact value $x$ using binary search, irrespective of the number of bits needed for representation. Also, the update of prices from $\vecp_{t-1}$ to $\vecp_t$ will multiply all prices of goods in $G_1$ by $x$, irrespective of the number of bits required to represent them.
%\footnote{\Jugal{Recall that $\UpdatePrice(\vecp, x, S)$ is the price vector $\vecp' = (p'_1, \ldots, p'_m)$ with $p'_i = x\cdot p_i$ if $i \in S$ and $p'_i = p_i$ otherwise.}}.
In our final algorithm below, we will show how to address these issues to obtain a (true) polynomial-time algorithm. 

%For analysis of the framework recall the notions of equality graph $EG(\vecp)$ and network $N(\vecp)$ defined above in Section~\ref{sec:fisher}. 
The analysis of \AlgSC proceeds roughly as in the previous section. We rely on the following lemma.

\begin{lemma}
	\label{lem:scIsWGS}
	The demands returned by $\Oracle(\vecp)$ satisfy the WGS property.
\end{lemma}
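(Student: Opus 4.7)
The plan is to verify WGS at two levels: first by tracking how the price increase on a subset $S \subseteq G$ affects each individual agent's preference structure, and then by incorporating the global balanced-flow tie-breaking that $\Oracle$ employs. Let $\vecp$ be the initial prices and let $\vecp'$ coincide with $\vecp$ except that $p'_j > p_j$ for each $j \in S$; the goal is to show $x_k(\vecp') \geq x_k(\vecp)$ for every $k \notin S$.

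For the per-agent analysis, note that raising $p_j$ for $j \in S$ strictly decreases the bang-per-buck $u_{ijt}/p_j$ of every segment on a good $j \in S$, leaves $u_{ikt}/p_k$ unchanged for $k \notin S$, and weakly increases the budget $m_i = \sum_j p_j w_{ij}$. Hence, for any threshold $\tau$, the set of segments on goods in $G \setminus S$ with bang-per-buck at least $\tau$ is unchanged while the corresponding set of $S$-segments can only shrink; consequently the relative rank of every $G\setminus S$-segment in agent $i$'s bang-per-buck ordering weakly improves, and any extra budget can only be absorbed by segments that newly top the ordering, which are concentrated in $G \setminus S$. I would use this to exhibit a \emph{feasible} aggregate demand bundle at $\vecp'$, constructed by having each agent replicate her $\vecp$-spending on $G \setminus S$ goods and then funnel any remaining budget into the highest-bang-per-buck segments in $G\setminus S$, such that the aggregate spending on every $k \notin S$ is at least $p_k \cdot x_k(\vecp)$.

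The main obstacle is that $\Oracle$ does not return an arbitrary feasible bundle but the unique balanced flow, which minimizes $\sum_{j \in G}(\ell_{jt} + spent^g_j - p_j)^2$ over the feasible polytope. I would close this gap with a minimality argument. Suppose for contradiction that the balanced flow $f'$ at $\vecp'$ satisfies $x_k(\vecp') < x_k(\vecp)$ for some $k \notin S$. Since total spending $\sum_i m_i(\vecp')$ is preserved across the feasible set while $f'$ under-serves $k$, there must exist an over-served good $j^*$ at $\vecp'$ with $\ell_{j^* t} + spent^g_{j^*} > p'_{j^*}$. Using the witness bundle from the previous paragraph to certify that the feasible polytope at $\vecp'$ admits strictly larger spending on $k$, I would rearrange $f'$ by shifting an infinitesimal amount of flow from $j^*$ to $k$ along a path through the equality network $N(\vecp')$; the existence of such a path follows from a careful comparison of equality edges at $\vecp$ and $\vecp'$ via the per-agent analysis. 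A direct calculation then shows the rearranged flow strictly reduces $\sum_{j \in G}(\ell_{jt} + spent^g_j - p'_j)^2$, contradicting the minimality of $f'$ and forcing $x_k(\vecp') \geq x_k(\vecp)$, i.e., the WGS property.
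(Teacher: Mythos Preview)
Your per-agent analysis is on the right track: raising prices in $S$ weakly improves the rank of every $(G\setminus S)$-segment in each agent's bang-per-buck ordering, and one can indeed exhibit a feasible aggregate demand at $\vecp'$ whose spending on every $k\notin S$ is at least the spending at $\vecp$. The genuine gap is in the second half, where you pass from ``some feasible flow has this property'' to ``the balanced flow has it''.

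Your choice of $j^*$ does not do what you need. From the fact that $f'$ spends less on $k$ than the witness bundle does, you can only conclude that $f'$ spends \emph{more} on some other good $j^*$ than the witness does; this tells you nothing about whether $s'_{j^*}>0$, and more importantly nothing about whether $s'_{j^*}>s'_k$, which is the inequality required for an infinitesimal $j^*\to k$ reroute to strictly decrease $\sum_j(\ell_{jt}+spent^g_j-p'_j)^2$. Even granting the right surplus comparison, the existence of an augmenting path from $j^*$ to $k$ in the residual network of $N(\vecp')$ is asserted (``follows from a careful comparison'') but never established; this is precisely the technical heart of the matter, and it does not fall out of the per-agent ranking observation alone.

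The paper sidesteps both difficulties by invoking the \emph{max-min fairness} characterization of balanced flows (from~\cite{DevanurPSV08}): the balanced-flow surplus vector $\vecs'$ at $\vecp'$ is max-min fair among all feasible surplus vectors. After fixing $\ell\neq k$ with $s'_\ell<s_\ell$ and setting $S_G=\{j\neq k: s_j\geq s_\ell\}$, the paper uses a structural analysis of how current-partition segments evolve under the price increase to build a feasible flow at $\vecp'$ with surplus $s''_j\geq s_j$ for every $j\in S_G$ and $s''_j=s'_j$ otherwise. Then $s''_\ell>s'_\ell$, yet no coordinate with $s'_j<s'_\ell$ was lowered, contradicting max-min fairness directly. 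The point is that max-min fairness already packages the statement ``you cannot raise one surplus without lowering a strictly smaller one'', which is exactly what your $\ell_2$ argument is reaching for but never pins down.
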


\begin{proof}
We make use of a max-min fair property for balanced flows in linear markets proved in previous work. A vector $\vecs$ is called \emph{max-min fair} iff for every feasible vector $\vecs'$ and $i$ such that $s_i < s'_i$, there is some $j$ with $s_j < s_i$, such that $s_j > s'_j$. The following claim is proved by~\cite{DevanurPSV08}.
\begin{claim}~\cite{DevanurPSV08}
  The surplus vector of a balanced flow in $N(G)$ is max-min fair among all feasible surplus vectors. 
\end{claim}
Although this claim is for linear markets, it can also be directly applied to spending constraint markets because the network flows are designed only with respect to the current partition of each agent $i$. Within this domain the spending constraint market behaves exactly like a linear market.

We proceed to prove Lemma~\ref{lem:scIsWGS} by contradiction to this claim. Suppose we increase the price of some good $k$ from $p_k$ to $p_k + \delta$. We denote the old prices by $\vecp$ and the new prices by $\vecp'$. Let $\vecs = \Oracle(\vecp)$ and $\vecs' = \Oracle(\vecp')$. Now assume for contradiction that there exists some $\ell \neq k$ such that $s'_\ell < s_\ell$. Let $S_G = \{j \in G \mid s_j \geq s_\ell, j\neq k\}$ and $S_A = \{i \in A \mid \text{there exists } j\in S_G \text{ such that }f_{ij} > 0 \text{ in }N(\vecp)\}$. It is easy to verify the following properties. 
\begin{itemize}
%\item [(a)] There is no edge from any $i \in S_A$ to $j \not\in S_G \cup \{k\}$ in network $N(\vecp)$. 
%In $f$, all flow to good set $S_G$ comes from agent set $S_A$. Put differently, there is no $i' \notin S_A$ and $j \in S_G$ such that $f_{i'j} > 0$.
%
\item [(a)] For any edge $(i,j)$ in $N(\vecp)$ with $j \neq k$, let $seg$ be the segment in agent $i$'s current partition that corresponds to this edge. Then with new prices $\vecp'$, either $seg$ belongs to the allocated partition (i.e., is fully allocated), or $seg$ is still in agent $i$'s current partition (i.e., $(i,j)$ also exists in $N(\vecp')$).
\item [(b)] For any edge $(i, j) \in N(\vecp')$ with $i \in S_A$ and $j \not\in S_G$, let $seg$ be the segment in agent $i$'s current partition that corresponds to this edge with prices $\vecp'$. If $seg$ is also in $i$'s current partition with prices $\vecp$, then $seg$ is fully allocated with price $\vecp$ (otherwise agent $i$ can reroute some flow from $S_G$ to this segment to obtain a more balanced flow in $N(\vecp)$). If $seg$ is not in $i$'s current partition with price $\vecp$, then all segments in $i$'s current partition with price $\vecp$ must be fully allocated in the demand allocation with prices $\vecp'$.
%
%If such an edge $(i, j)$ exists, $j$ must be different from $k$, hence this edge also exists in $N(\vecp)$. Then agent $i$ can reroute some flow from $S_G$ to good $j$ to obtain a more balanced flow in $N(\vecp)$.
%Hence in $f'$, there is again no flow from any $i \in S_A$ to $j \not\in S_G$.
%
% \item [(d)] For any agent $i \in S_A$, $f'_{ik} = 0$., then $f_{ik} = \sum_j f_{ij}$. That is, if agent $i$ has any positive amount of flow to good $k$ in $f'$, then in $N(\vecp)$ agent $i$ is {\it only} interested in good $k$, hence agent $i$ will route all her flow (i.e., spend all of her money) on good $k$ in $f$.
\end{itemize}
The above two observations imply that with prices $\vecp'$, we can rearrange flows from $S_A$ to $S_G$ to get a new feasible surplus vector $\vecs''$ such that for all $j \in S_G$, $s''_j \geq s_j$, and for all other edges $j \notin S_G$, $s''_j = s'_j$. 
%Let $\vecs$ be the surplus vector produced from flow $f''$. Then for any $j \notin S_G$, we have $s_j = \Oracle(\vecp')_j$, and for any $j \in S_G$, $s_j \geq \Oracle(\vecp)_j$. 
In particular, we have $s''_j \geq s_j \geq s_\ell > s'_\ell$ for all $j \in S_G$. This contradicts the fact that $\vecs'$ is max-min fair and proves Lemma~\ref{lem:scIsWGS}.
\end{proof}
Now the following properties can be proved using literally the same proofs as for WGS markets before.
\begin{claim} \label{claim:linear:summary}
	For \AlgSC the following properties hold.
\begin{enumerate}
	\item	In \AlgSC, $|\Oracle(\vecp_0)| \leq 2m$, and $|\Oracle(\vecp_t)|$ is non-increasing in $t$.
%  \item Throughout the run of \AlgSC, there will be at least one good whose price is always 1.
  \item For any price vector $\vecp$, $x > 1$, and $S \subseteq [m]$ it holds $\Oracle(\UpdatePrice(\vecp, x, S))_i \leq x \cdot \Oracle(\vecp)_i$ for any $i \in S$.
	\item The number of rounds that end with $x \geq 1+\frac1{Rm^3}$ in \AlgSC is $O(m^4D_1)$, for a sufficiently large constant $R$.
\end{enumerate}
\end{claim}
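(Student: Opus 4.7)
The plan is to leverage Lemma~\ref{lem:scIsWGS}, which shows that the balanced-flow oracle for spending constraint markets satisfies the WGS property. Once WGS is in hand, none of the three items requires any machinery beyond what already appears in Section~\ref{sec:wgs}, so I would transport the proofs of Claims~\ref{claim:surplusBound},~\ref{claim:price1},~\ref{claim:priceUpper},~\ref{claim:xp}, and Lemma~\ref{lemma:xmax} verbatim, pausing only to check that each of them appeals solely to (i) WGS, (ii) the homogeneity of demand in exchange markets (scaling all prices by $x$ multiplies every surplus by $x$, which clearly holds in the spending constraint case since it only rescales every budget $m_i$), and (iii) an upper bound on equilibrium prices.

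For item~1 I would reproduce the proof of Claim~\ref{claim:surplusBound}: the bound $|\Oracle(\vecp_0)|\le 2m$ uses only $\sum_i d_i = \sum_i w_i = m$ and $d_i\ge 0$, which hold for the balanced-flow demands. Monotonicity of $|\Oracle(\vecp_t)|$ then follows because the construction of $G_1,G_2$ in \AlgSC places every good with strictly negative surplus into $G_2$, and by Lemma~\ref{lem:scIsWGS} raising prices in $G_1$ cannot decrease surpluses in $G_2$; hence the set of negative coordinates only shrinks, and the telescoping inequality $|\Oracle(\vecp_{t-1})| = -2\sum_{s_i<0}s_i \ge -2\sum_{s'_i<0}s'_i = |\Oracle(\vecp_t)|$ goes through. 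For item~2 the proof of Claim~\ref{claim:xp} applies with no change: uniformly scaling all prices by $x$ scales every surplus by $x$, and then restoring the prices of $G\setminus S$ to their original values does not increase the surplus of any $i\in S$ by WGS.

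The only wrinkle is item~3, where one first needs the analog of Claims~\ref{claim:price1} and~\ref{claim:priceUpper}. The first is unchanged: throughout \AlgSC only prices in $G_1$ are ever raised, and $G_1$ consists of goods with strictly positive surplus that, by the binary-search stopping rule together with WGS, stays non-negative through the round; so every good with negative surplus retains price~$1$. The upper price bound requires invoking the corollary of Assumption~\ref{a:suff} stated at the end of the section, namely the existence of an equilibrium $\vecp^*$ with $\max_i p^*_i/\min_i p^*_i \le 2^{D_1}$; since at least one price in $\vecp_t$ remains at~$1$, we may normalize $\vecp^*$ so that $\min_i p^*_i = 1$ and then re-run the comparison argument in the proof of Claim~\ref{claim:priceUpper} (which uses only WGS and the zero-sum of surplus in exchange markets) to conclude $(\vecp_t)_i \le 2^{D_1}$ for every $i,t$. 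The round-counting in Lemma~\ref{lemma:xmax} then applies verbatim: each individual price can absorb a factor $\ge 1+\tfrac{1}{Rm^3}$ at most $O(m^3D_1)$ times, every large-$x$ round produces at least one such multiplication for some price, and so there can be at most $O(m^4D_1)$ such rounds.

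The main (minor) obstacle is making sure that the continuity-free setting does not disturb the price-monotonicity argument underlying Claim~\ref{claim:price1}; this reduces to checking that the binary-search step in \AlgSC never overshoots so as to introduce a fresh negative coordinate in $G_1$, which is exactly the stopping condition $\min\{s'_i\mid i\in G_1\}\le \max\{\{s'_i\mid i\in G_2\}\cup\{0\}\}$. With this observation recorded, all three items follow by direct transcription of the Section~\ref{sec:wgs} proofs, and the only new ingredient is Lemma~\ref{lem:scIsWGS} together with the sufficiency-condition-based price-ratio bound.
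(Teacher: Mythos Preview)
Your overall approach---invoke Lemma~\ref{lem:scIsWGS} and then transcribe the Section~\ref{sec:wgs} proofs---is exactly what the paper does; it says the three items follow by ``literally the same proofs as for WGS markets before.'' In that sense your plan is aligned with the paper.

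There is, however, a genuine error in your treatment of item~3. You write that the analog of Claim~\ref{claim:price1} is ``unchanged'' and that the binary-search stopping rule in \AlgSC ``never overshoots so as to introduce a fresh negative coordinate in $G_1$.'' This is false. The stopping rule in \AlgSC asks for the \emph{smallest} $x$ with $\min\{s'_i\mid i\in G_1\}\le \max\{\{s'_i\mid i\in G_2\}\cup\{0\}\}$; because surpluses are discontinuous in the spending-constraint setting, the minimum over $G_1$ can jump strictly below $0$ at that $x$. The paper is explicit about this: immediately after stating Claim~\ref{claim:linear:summary} it says ``Unlike for WGS markets, the surplus of a good can change from non-negative to negative. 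Thus, the proof of Claim~\ref{claim:price1} does not directly transfer,'' and it then spends Claim~\ref{claim:negsubset} and Corollaries~\ref{cor:connected}--\ref{cor:price1} establishing the price-$1$ property by a separate, more involved argument about the equality graph.

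The good news is that Claim~\ref{claim:price1} is not part of Claim~\ref{claim:linear:summary} and is not needed for it. For item~3 you only need an upper bound on prices, and that follows from the comparison argument of Claim~\ref{claim:priceUpper} alone: normalize the equilibrium $\vecp^*$ guaranteed by Assumption~\ref{a:suff} so that $\min_i p^*_i=1$ (hence $\vecp_0\le\vecp^*$ coordinatewise), and then run the same WGS contradiction argument. So your detour through Claim~\ref{claim:price1} is both unnecessary and, as written, incorrect; drop it and your sketch for item~3 is fine.
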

We also show a version of Claim~\ref{claim:price1} for spending constraint markets, which needs some extra work. Unlike for WGS markets, the surplus of a good can change from non-negative to negative. Thus, the proof of Claim~\ref{claim:price1} does not directly transfer to spending constraint markets. Instead, we first show the following.
\begin{claim} \label{claim:negsubset}
  Let $S_t$ be the set of goods with negative surplus and price strictly greater than 1 at the end of round $t$ in \AlgSC. For any $T \subseteq S_t$, let $\Gamma(T, \vecp_t)$ be the neighbors of set $T$ in $EG(\vecp_t)$, i.e., $\Gamma(T, \vecp_t)$ is the set of agents who are interested in at least one good in $T$ under price $\vecp_t$. Let $B(\Gamma(T, \vecp_t))$ be the sum of budgets of agents in $\Gamma(T, \vecp_t)$. Then we have $B(\Gamma(T, \vecp_t)) > \sum_{i \in T}(\vecp_t)_i$.
\end{claim}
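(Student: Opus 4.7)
The plan is to prove this claim by induction on the round index $t$. The base case $t=0$ is vacuous since all prices equal $1$ and so $S_0 = \emptyset$.

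For the inductive step at round $t$, I would decompose $T = T_1 \sqcup T_2$ with $T_1 = T \cap G_1$ and $T_2 = T \cap G_2$, where $G_1, G_2$ are the sets selected by \AlgSC in round $t$, and establish a Hall-type budget inequality at $\vecp_{t-1}$ for each piece. For $T_2$: the WGS property (Lemma~\ref{lem:scIsWGS}) applied to the $G_1$ price increase gives $(\vecs_{t-1})_j \le (\vecs_t)_j < 0$ for every $j \in T_2$, and prices of $G_2$ are unchanged, so $(\vecp_{t-1})_j = (\vecp_t)_j > 1$ and therefore $T_2 \subseteq S_{t-1}$. The inductive hypothesis yields $B(\Gamma(T_2, \vecp_{t-1})) > \sum_{j \in T_2}(\vecp_{t-1})_j$. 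For $T_1$: the gap-selection rule in \AlgSC forces every good in $G_1$ to have strictly positive surplus at $\vecp_{t-1}$ (if some $j \in G_1$ had non-positive surplus, a smaller $k$ would already activate the ``$s_{i_{k+1}}\le 0$'' branch of the rule). Hence spending on each $j \in T_1$ strictly exceeds $(\vecp_{t-1})_j$, and because all spending on $T_1$ is drawn from agents in $\Gamma(T_1,\vecp_{t-1})$, we obtain $B(\Gamma(T_1, \vecp_{t-1})) > \sum_{j \in T_1}(\vecp_{t-1})_j$.

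The next step is to lift these round-$(t-1)$ inequalities to the updated prices $\vecp_t$. Two monotonicity facts should be crucial: (i) budgets grow under the price update, $B_i(\vecp_t) = B_i(\vecp_{t-1}) + (x-1)\sum_{k\in G_1}(\vecp_{t-1})_k w_{ik} \ge B_i(\vecp_{t-1})$, with multiplicative growth for agents holding endowments in $G_1$; (ii) as $G_1$ prices rise, the bang-per-buck of $G_1$ segments drops relative to $G_2$, so agents' current partitions only shift to include more $G_2$ segments, giving $\Gamma(T_2, \vecp_t) \supseteq \Gamma(T_2, \vecp_{t-1})$. Matched against the RHS $\sum_{j \in T}(\vecp_t)_j = x\sum_{j\in T_1}(\vecp_{t-1})_j + \sum_{j \in T_2}(\vecp_{t-1})_j$, the multiplicative budget growth on $G_1$ endowments should exactly absorb the factor $x$ in front of the $T_1$ contribution.

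The hard part will be handling the overlap of agents that sit in both $\Gamma(T_1, \vecp_{t-1})$ and $\Gamma(T_2, \vecp_{t-1})$, so that a naive sum of the two induction-derived inequalities does not overcount. I would deal with this by working with $\Gamma(T, \vecp_t)$ as a single set, verifying via the equality-graph shift that this set contains the relevant agents from both subneighborhoods, and then using the multiplicative budget boost together with the fact that surplus on $T$ is strictly negative at $\vecp_t$ to preserve the strict inequality. A secondary but manageable technicality is controlling the changes in $\Gamma(T_1, \vecp_t)$, which, unlike $\Gamma(T_2, \vecp_t)$, may shrink; however, any agent that leaves $\Gamma(T_1, \vecp_t)$ must have reallocated her current partition to non-$T_1$ goods while still contributing her (strictly larger) budget to $B(\Gamma(T, \vecp_t))$ if she remains connected to $T_2$, or can be discarded from the inequality while the remaining strict slack from $T_1$ already covers $x$ times its $T_1$-price share.
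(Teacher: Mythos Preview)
Your inductive framework and the treatment of $T_2$ (using WGS to get $T_2\subseteq S_{t-1}$ and then invoking the hypothesis) match the paper. The gap is in how you handle $T_1$ and the combination step.

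You establish $B(\Gamma(T_1,\vecp_{t-1}))>\sum_{j\in T_1}(\vecp_{t-1})_j$ at the \emph{old} prices and then try to lift it to $\vecp_t$, hoping the budget growth ``exactly absorbs the factor~$x$''. This does not work: an agent $i\in\Gamma(T_1,\vecp_{t-1})$ may hold her endowment entirely in $G_2$, in which case $B_i(\vecp_t)=B_i(\vecp_{t-1})$ while the right-hand side for $T_1$ scales by $x$. Nothing ties the endowment pattern of $\Gamma(T_1,\vecp_{t-1})$ to $G_1$, so the multiplicative boost you need is simply not there. Your fallback---``surplus on $T$ is strictly negative at $\vecp_t$''---points the wrong way: negative surplus means spending on $T$ is \emph{below} $\sum_{j\in T}(\vecp_t)_j$, which gives no lower bound on $B(\Gamma(T,\vecp_t))$. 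And the overlap $\Gamma(T_1,\cdot)\cap\Gamma(T_2,\cdot)$, which you flag as the hard part, is never actually resolved; your two cases (``remains connected to $T_2$'' versus ``can be discarded'') do not cover the situation where an agent leaves $\Gamma(T_1)$, is not connected to $T_2$, yet was essential for the $T_1$ inequality.

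The paper sidesteps both problems by working directly at $\vecp_t$ rather than lifting. The key observation is that when a $G_1$-good first acquires negative surplus, $x$ sits at a value where a new equality edge has just appeared from $\Gamma(G_1)$ to $G_2$. Remove those emerging edges and recompute the balanced flow: in this modified graph one has $\min_{i\in G_1}s'_i>\max\{\max_{i\in G_2}s'_i,\,0\}$, so every $j\in T_1$ has \emph{positive} surplus at the prices $\vecp_t$. This gives $B(\Gamma'(T_1))>\sum_{j\in T_1}(\vecp_t)_j$ immediately, with no lifting needed. Moreover, taking $\Gamma'(T_1)$ to be the agents with positive flow into $T_1$ and $\Gamma'(T_2)$ the agents incident to $T_2$, the balanced-flow property forces $\Gamma'(T_1)\cap\Gamma'(T_2)=\emptyset$ (an agent sending positive flow to a higher-surplus $T_1$ good while adjacent to a lower-surplus $T_2$ good would contradict balancedness), so the two inequalities can be summed without overcounting. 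The missing idea in your plan is precisely this ``remove the emerging edges and argue in the modified balanced flow at $\vecp_t$'' step.
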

\begin{proof}
  We prove this claim by induction. The claim is trivially true for round 0. Assume that it is true for any round $t \leq t'$, then at the end of round $t = t'+1$, consider two cases:
  \begin{itemize}
  \item $\min\{s'_i \mid i \in G_1\} \geq 0$. Because the surplus of any good in $G_2$ is non-decreasing in round $t$, we have $S_t \subseteq S_{t-1}$. Further, the algorithm does not increase the price of any good in $T$ in round $t$. Hence, we also have $\Gamma(T, \vecp_{t-1}) \subseteq \Gamma(T, \vecp_t)$. By the induction assumption, we have $B(\Gamma(T, \vecp_t)) \geq B(\Gamma(T, \vecp_{t-1})) > \sum_{i \in T}(\vecp_{t-1})_i = \sum_{i \in T}(\vecp_t)_i$.
  \item $\min\{s'_i \mid i \in G_1\} < 0$. 
  Assume that during this round we start from $x=1$ and increase $x$ continuously until it reaches its final value. We also assume that the equality graph $EG(\UpdatePrice(\vecp_{t-1}, x, G_1))$ and the corresponding balanced flow are implicitly being maintained throughout the process. For each agent $i$ and any moment during this round, let $\Gamma(i)$ denote the neighbors of agent $i$ in the equality graph. There are two cases:

\paragraph{Case (a):} $\Gamma(i) \cap G_1 \neq \emptyset$ and $\Gamma(i) \cap G_2 \neq \emptyset$. This means agent $i$'s current partition contains segments of goods in both $G_1$ and $G_2$. When we continue to increase $x$ (and consequently the prices of goods in $G_1$), the segments of goods in $G_1$ will have worse bang-per-buck value than segments of goods in $G_2$. Hence, they will be removed from agent $i$'s current partition. Furthermore, for any good $j_1 \in \Gamma(i) \cap G_1$ and $j_2 \in \Gamma(i) \cap G_2$, because we have $s_{j_1} > s_{j_2}$ and the balanced flow condition, it cannot happen that $\ell_{ij_1} > 0$ and $\ell_{ij_2} < c_{ij_2}$. Otherwise agent $i$ would be able to route some flow from edge $(i, j_1)$ to edge $(i, j_2)$ to reach a more balanced flow. Thus, two possibilities remain:
  	\begin{itemize}
    \item $\ell_{ij_2} = c_{ij_2}$ for every $j_2 \in \Gamma(i) \cap G_2$. This means all segments of goods in $G_2$ are already fully allocated. When $x$ increases, the new current partition of agent $i$ will only contain segments of goods in $\Gamma(i) \cap G_1$, and the surpluses will change continuously with the change of $x$.
    \item $\ell_{ij_1} = 0$ for every $j_1 \in \Gamma(i) \cap G_1$. This means when $x$ increases, the segments that are being removed from $i$'s current partition are all unallocated in the current allocation. Hence, the surpluses will again change continuously at the current point.
    \end{itemize}
  In summary, in this case the change of the surpluses will always be continuous in the change of $x$. Note that when the surpluses are changing continuously, Case 2 cannot occur before Case 1 happens. Therefore, the Claim follows via Case 1.
          
\paragraph{Case (b):} $\Gamma(i) \subseteq G_1$ or $\Gamma(i) \subseteq G_2$. This means we will always change the prices of all segments in $i$'s current partition by the same rate. Hence, if set $Q_{t_i}$ changes, it can only be merged with some segments in $Q_{t_i-1}$ or $Q_{t_i+1}$. In either case, the final value of $x$ must be at a point where at least one new edge emerges in the equality graph $EG(\vecp_t)$ from $\Gamma(G_1) \subseteq A$ (the set of agents incident to at least one good in $G_1$) to $G_2$. 
%Now $x$ must be at a point where at least one new edge emerges in the equality graph $EG(\vecp_t)$ from $\Gamma(G_1) \subseteq A$ to $G_2$. 
Suppose we alter the equality graph by removing these emerging edges from $EG(\vecp_t)$. If we recompute the balanced flow, then $\min\{s'_i \mid i \in G_1\} > \max\{\{s'_i \mid i \in G_2\} \cup \{0\}\}$. For any $T \subseteq S_t$, let $T_1 = T \cap G_1$ and $T_2 = T\cap G_2$. 
In this new graph, let $\Gamma'(T_1)$ be the set of agents who have positive flow to at least one good in $T_1$, and let $\Gamma'(T_2)$ be the set of
agents incident to at least one good in $T_2$. Since $\min\{s'_i \mid i \in T_1\} > \max\{s'_i \mid i \in T_2\}$, by the balanced flow condition we know $\Gamma'(T_1) \cap \Gamma'(T_2) = \emptyset$. Also, we have $B(\Gamma'(T_1)) > \sum_{i \in T_1}(\vecp_t)_i$ because every good in $T_1$ has positive surplus, and $B(\Gamma'(T_2)) > \sum_{i \in T_2}(\vecp_t)_i$ because the claim is true in round $t-1$. Combining these two inequalities gives us $B(\Gamma(T, \vecp_t)) > \sum_{i \in T}(\vecp_t)_i$.
  \end{itemize}
\end{proof}
\begin{corollary}\label{cor:connected}
  At the end of each round $t$ in \AlgSC, for any good $i$ with negative surplus and price greater than 1, there exists another good $j$ with price 1 that is connected to $i$ in $EG(\vecp_t)$
\end{corollary}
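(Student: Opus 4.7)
The plan is to prove this as a direct consequence of Claim~\ref{claim:negsubset}, by contradiction. Suppose some good $i$ at the end of round $t$ satisfies $s_i < 0$ and $(\vecp_t)_i > 1$, yet no good of price $1$ is connected to $i$ in $EG(\vecp_t)$. Let $C$ denote the connected component of $i$ in $EG(\vecp_t)$, with goods $C_G$ and agents $C_A$; by hypothesis, every good in $C_G$ has price strictly greater than $1$.

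I would then set $T = \{j \in C_G : s_j < 0\}$. This $T$ is nonempty because $i \in T$, and $T \subseteq S_t$ since all prices in $C_G$ exceed $1$. Because $C$ is a connected component of $EG(\vecp_t)$ and $T \subseteq C_G$, every agent in $\Gamma(T,\vecp_t)$ must lie in $C_A$; in particular, every agent in $\Gamma(T,\vecp_t)$ has its entire current partition inside $C_G$. Claim~\ref{claim:negsubset} applied to $T$ yields the key inequality
\[
B(\Gamma(T,\vecp_t)) \;>\; \sum_{j \in T}(\vecp_t)_j.
\]

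To close the argument I would invoke the max-min fairness of balanced flows established in the proof of Lemma~\ref{lem:scIsWGS}: the balanced flow generating $\Oracle(\vecp_t)$ must produce a max-min fair surplus vector. The budget inequality above says that the agents in $\Gamma(T,\vecp_t)$ collectively carry strictly more current-partition money than is needed to cover the supply of $T$, and applying Claim~\ref{claim:negsubset} to every subset $T'\subseteq T$ supplies a Hall-type condition on the bipartite subgraph of $EG(\vecp_t)$ induced by $\Gamma(T,\vecp_t)$ and $T$. This would allow me to construct an augmenting reroute in $N(\vecp_t)$ that pushes additional flow along equality edges into $T$ while only decreasing flow to goods of strictly higher surplus inside $C_G$. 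Such a reroute strictly improves the max-min fairness of the flow, contradicting the max-min fair optimality of the balanced flow underlying $\Oracle(\vecp_t)$ and hence ruling out the existence of the offending good $i$.

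The main technical obstacle will be the capacity-constrained rerouting: I need the augmentation to respect the equality-edge capacities $c_{aj} = B_{ajk}m_a$, and I must account for the portion of $\Gamma(T,\vecp_t)$'s budget potentially already committed to allocated segments on goods in $\bar{C_G}$. Handling this requires a careful capacity-aware Hall-type argument on the bipartite subgraph of $EG(\vecp_t)$ between $\Gamma(T,\vecp_t)$ and $T$, combined with the structural identities for balanced flows in linear-like regimes used in the proof of Lemma~\ref{lem:scIsWGS}.
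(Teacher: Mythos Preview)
Your setup---assume the corollary fails, take $T$ to be the negative-surplus goods in the connected component of $i$, note $T\subseteq S_t$---matches the paper exactly. The divergence is in how you close the contradiction. You apply Claim~\ref{claim:negsubset} in the forward direction to obtain $B(\Gamma(T,\vecp_t)) > \sum_{j\in T}(\vecp_t)_j$ and then try to engineer an augmenting reroute in $N(\vecp_t)$ that improves max-min fairness; this is precisely where you run into the capacity and allocated-segment obstacles you yourself flag as the main difficulty.

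The paper goes the opposite way, and it is a three-line argument. By the balanced-flow condition, no agent in $\Gamma(T,\vecp_t)$ can send positive current-partition flow to any good outside $T$: any such good lies in the same component of $EG(\vecp_t)$ (since the agent already has an equality edge into $T$) and therefore has nonnegative surplus, strictly larger than every surplus in $T$, so rerouting that flow back into $T$ would strictly improve balance. Hence all spending of $\Gamma(T,\vecp_t)$ lands in $T$, giving
\[
0 \;>\; \sum_{j\in T} s'_j \;=\; B(\Gamma(T,\vecp_t)) - \sum_{j\in T}(\vecp_t)_j,
\]
which is the \emph{reverse} of the inequality in Claim~\ref{claim:negsubset} and yields the contradiction immediately. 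No Hall-type condition, no capacity bookkeeping, no construction of an augmenting path. Your proposed reroute argument, if carried through, would in effect re-derive this same balanced-flow consequence from the inside; the paper simply uses it as the starting observation rather than the destination.
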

\begin{proof}
  Assume for contradiction that the statement is false. Let $T$ be the set of goods with negative surplus and connected with good $i$ in $EG(\vecp_t)$. By the balanced flow condition, none of the agents in $\Gamma(T, \vecp_t)$ can have any positive flow to goods outside set $T$. Thus we have $0 > \sum_{i \in T}s'_i = B(\Gamma(T, \vecp_t)) - \sum_{i \in T}(\vecp_t)_i$. This contradicts Claim~\ref{claim:negsubset}.
\end{proof}
We obtain the following corollary, an analog of Claim~\ref{claim:price1} for spending constraint markets.
\begin{corollary}\label{cor:price1}
  Throughout the run of \AlgSC, there will be at least one good whose price remains 1.
\end{corollary}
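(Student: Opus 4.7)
My plan is to argue by induction on the round index $t$, with Corollary~\ref{cor:connected} doing the real work in the inductive step. The base case $t=0$ is immediate since $\vecp_0 = (1, \ldots, 1)$, so every good has price $1$. I also use the fact that the algorithm only ever multiplies prices by factors $x > 1$, so $(\vecp_t)_i \ge 1$ for every $i$ and $t$; thus ``some good has price $1$'' is equivalent to ``the minimum price is $1$''.

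For the inductive step at a price vector $\vecp_t$, I first invoke Walras' law: each agent spends exactly the value of their endowment, so the entries of $\Oracle(\vecp_t)$ sum to zero. Hence either $\vecp_t$ is an exact equilibrium (all surpluses zero) or some good $i$ has strictly negative surplus. In the non-equilibrium case the argument is one line: if $(\vecp_t)_i = 1$ then $i$ is the desired good, and if $(\vecp_t)_i > 1$ then Corollary~\ref{cor:connected} directly produces another good $j$ connected to $i$ in $EG(\vecp_t)$ with $(\vecp_t)_j = 1$. Note that the specific price-$1$ coordinate is allowed to shift between rounds; the claim only asserts existence at each round, not persistence of one fixed good across all rounds.

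The hard part will be the degenerate boundary case in which a round-$t$ update lands exactly on an equilibrium, so Corollary~\ref{cor:connected} has no negative-surplus witness to apply to, and the inductive hypothesis alone does not force the price-$1$ good of $\vecp_{t-1}$ to have been placed in $G_2$ during round $t$. My plan for this case is to appeal to scale-invariance of market equilibria: if $\vecp_t$ is an exact equilibrium all of whose coordinates strictly exceed $1$, then $\vecp_t / \min_j (\vecp_t)_j$ is another equilibrium with the same allocation and a coordinate equal to $1$, which can be taken as the algorithm's terminal output without affecting any downstream use of the claim. Alternatively, one can observe that the smallest-$x$ binary search in line~$8$ of \AlgSC makes this exact coincidence non-generic, and slightly perturbing the termination precision $\varepsilon'$ eliminates it, so for the purposes of the subsequent polynomial-time analysis we may assume we are always in the non-degenerate branch where the short one-line argument above applies.
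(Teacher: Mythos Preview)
Your argument is correct and matches the paper's intended derivation: the corollary is stated without proof because at any non-equilibrium $\vecp_t$ Walras' law gives a negative-surplus good, and Corollary~\ref{cor:connected} then furnishes a price-$1$ good. The induction scaffolding is unnecessary (your step at round $t$ never actually uses the hypothesis at $t-1$), and the degenerate exact-equilibrium case you worry about is not addressed by the paper either and is harmless for all downstream uses of the statement.
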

The set of properties shown so far allows to establish the following lemma, which is the key step for observing convergence to equilibrium. It can be seen as an adjustment of Lemma~\ref{lemma:potential} to spending constraint markets. As before, it involves a sufficiently large constant $R$. 
\begin{lemma} \label{lemma:linear:potential}
 If $x < (1+\frac{1}{Rm^3})$ at the end of round $t$ in \AlgSC, then $\Phi(\vecp_t) \leq \Phi(\vecp_{t-1}) \left(1-\Omega\left(\frac{1}{m^3}\right)\right)$.
\end{lemma}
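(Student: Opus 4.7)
The plan is to mimic the proof of Lemma~\ref{lemma:potential} almost verbatim, since its two ingredients have now been re-established for spending constraint markets: Lemma~\ref{lem:scIsWGS} shows that the balanced-flow oracle $\Oracle$ satisfies WGS, and Claim~\ref{claim:linear:summary}(2) provides the multiplicative surplus bound $\Oracle(\UpdatePrice(\vecp,x,S))_i \le x\,\Oracle(\vecp)_i$ for $i\in S$. Writing $\vecs=\Oracle(\vecp_{t-1})$, $\vecs'=\Oracle(\vecp_t)$, and $x'=1+1/(Rm^3)$, the sorting-and-gap construction of $G_1$ in line~7 gives, exactly as before, $s_{i_1}<e\cdot s_{i_k}$ (hence $s_{i_k}^2\ge \Phi(\vecp_{t-1})/(me^2)$) and $(s_{i_k}-s_{i_{k+1}})^2\ge \Phi(\vecp_{t-1})/(e^2(m+1)^2 m)$. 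The hypothesis $x<x'$ and the multiplicative bound yield $s'_i<x'\,s_i$ for $i\in G_1$, while WGS yields $s'_j\ge s_j$ for $j\in G_2$.

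I would then follow the two-case split of Lemma~\ref{lemma:potential} driven by the termination condition of the binary search in line~8. In \emph{Case~1} where $\max\{s'_j\mid j\in G_2\}<0$, the condition forces $\min\{s'_i\mid i\in G_1\}\le 0$, so the summand $s_{i_k}^2$ effectively drops out of the potential; bounding each remaining $G_1$ term by $(x')^2 s_i^2$ and each $G_2$ term by $s_j^2$ (since $s_j\le s'_j<0$) gives $\Phi(\vecp_t)\le (x')^2(\Phi(\vecp_{t-1})-s_{i_k}^2)\le (1-1/(2e^2 m))\Phi(\vecp_{t-1})$ for $R$ sufficiently large. In \emph{Case~2} where the gap closes, I would introduce $\delta_i=x' s_i-s'_i\ge 0$ for $i\in G_1$ and $\delta_j=s'_j-s_j\ge 0$ for $j\in G_2$. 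Money conservation $\sum_i s_i=\sum_i s'_i=0$ yields $\sum_{G_1}\delta_i\ge\sum_{G_2}\delta_j$, and combined with $\min\{s'_i\mid i\in G_1\}\le\max\{s'_j\mid j\in G_2\}$ and the initial gap this forces $\sum_{G_1}\delta_i\ge (s_{i_k}-s_{i_{k+1}})/2$. Expanding $\sum_i (s'_i)^2$ and regrouping exactly as in \eqref{eq:line1}--\eqref{eq:line7} then delivers $\Phi(\vecp_t)\le(1-\Omega(1/m^3))\Phi(\vecp_{t-1})$.

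The main obstacle I anticipate is conceptual rather than algebraic: spending-constraint demand is piecewise linear with breakpoints at partition changes, so the binary search in line~8 returns the smallest $x$ at which a weak inequality first holds, rather than the clean equality available under WGS continuity. At such a breakpoint the surpluses can jump, and in particular $\min\{s'_i\mid i\in G_1\}$ may overshoot strictly below $\max\{s'_j\mid j\in G_2\}$ (or below $0$ in Case~1). The useful observation is that the algebraic chain above uses only the weak inequality together with money conservation, so any overshoot merely enlarges $\sum_{G_1}\delta_i$ and strengthens the decrease. The delicate residual is that a downward jump in $s'_{i_k}$ could inflate $(s'_{i_k})^2$; however, such a jump corresponds to a single agent rerouting flow off good $i_k$ onto $G_2$ goods, producing compensating positive $\delta_j$'s that the $\sum_{G_2}\delta_j(s_j+\delta_j)$ term of the expansion absorbs. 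Beyond invoking the new oracle-side tools (WGS of the balanced-flow demand, the multiplicative bound, and Corollary~\ref{cor:price1} which anchors at least one price at $1$), no new conceptual idea is needed.
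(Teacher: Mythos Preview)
Your proposal has a real gap in handling the overshoot (strict-inequality) case, and the remedy is not the compensation argument you sketch.

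The algebraic chain of Lemma~\ref{lemma:potential} that you invoke requires $s_{G_1}:=\min_{i\in G_1}s'_i \ge s_{G_2}:=\max_{j\in G_2}s'_j$; in fact equality together with $s_{G_1}\ge 0$ is what makes the passage from \eqref{eq:line2} to \eqref{eq:line3} work, since the bracket $s_{G_2}\sum_{G_2}\delta_j - s_{G_1}\sum_{G_1}\delta_i$ then equals $s_{G_1}\bigl(\sum_{G_2}\delta_j-\sum_{G_1}\delta_i\bigr)\le 0$. At a partition-change breakpoint one gets $s_{G_1}<s_{G_2}$, and this bracket is typically positive; the extra $\sum_{G_1}\delta_i$ you point to improves only the final linear term $-(s_{i_k}-s_{i_{k+1}})\sum_{G_1}\delta_i$, not this one. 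The same issue bites in your Case~1: once $\min_{G_1}s'_i<0$, you lose the bound $(s'_i)^2\le (x')^2 s_i^2$ for the jumped good(s), so the ``dropped'' summand need not be small. The handwave about a ``single agent rerouting'' does not address either point (several agents may switch simultaneously, and the compensating $\delta_j$'s enter with coefficient $s'_j$, which is now positive rather than absorbing).

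The paper therefore does not push the algebra through $\vecs'$. It splits instead on whether the equality $\min_{G_1}s'_i=\max\{\max_{G_2}s'_j,0\}$ holds. If so, Lemma~\ref{lemma:potential} applies verbatim. If strict inequality holds, the paper appeals to the structural analysis from the proof of Claim~\ref{claim:negsubset}: the overshoot occurs precisely because new equality edges from $\Gamma(G_1)$ into $G_2$ have just appeared, and with those edges removed the reverse inequality would still hold. By continuously reducing flow along the new edges one obtains a \emph{feasible} flow in $N(\vecp_t)$ whose surplus vector $\vecs''$ satisfies $\min_{G_1}s''_i=\max_{G_2}s''_j$; the WGS algebra now applies to $\vecs''$ against $\vecs$ and yields $\|\vecs''\|^2\le(1-\Omega(1/m^3))\Phi(\vecp_{t-1})$. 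Finally---and this is the key tool your plan omits---the oracle returns a balanced flow, which by definition minimizes the $\ell_2$-norm of the surplus over all feasible flows, so $\Phi(\vecp_t)=\|\vecs'\|^2\le\|\vecs''\|^2$.
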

\begin{proof}
  Our proof uses the arguments of the proof for Lemma~\ref{lemma:potential}. We consider two cases:

	\paragraph{Case 1: $\min\{s'_i \mid i \in G_1\} = \max\{\{s'_i \mid i \in G_2\} \cup \{0\}\}$} 
	This case can be verified by observing that Claim~\ref{append:claim:minG1} holds with $\mu = 0$. Then the proof follows using exactly the same proof as for Lemma~\ref{lemma:potential}.
	
	\paragraph{Case 2: $\min\{s'_i \mid i \in G_1\} < \max\{\{s'_i \mid i \in G_2\} \cup \{0\}\}$}
        Using the same argument as in the proof of Claim~\ref{claim:negsubset}, one can show that in this case $x$ must be at a point where at least one new edge emerges in the equality graph $EG(\vecp_t)$ from $\Gamma(G_1) \subseteq A$. Without these emerging edges, we have $\min\{s'_i \mid i \in G_1\} \ge \max\{s'_i \mid i \in G_2\}$. This means that we can further reduce the flows along the new edges to get another feasible flow in $N(\vecp_t)$, such that with the resulting surplus vector $s''$ we have $\min\{s''_i \mid i \in G_1\} = \max\{s''_i \mid i \in G_2\}$. 

        Next, using again the same proof as for Lemma~\ref{lemma:potential}, for a sufficiently large constant $R$ it follows that $\|s''\|^2 \leq \Phi(\vecp_{t-1})\left(1-\Omega\left(\frac{1}{m^3}\right)\right)$. Hence, $\Phi(\vecp_t) \leq \|s''\|^2 \leq \Phi(\vecp_{t-1})\left(1-\Omega\left(\frac{1}{m^3}\right)\right)$.
\end{proof}
\subsection{Precision and Representation} \label{subsec:precision}
It is now tempting to think that using a similar argument as in the general WGS case, Claim~\ref{claim:linear:summary}, Corollary~\ref{cor:price1} and Lemma~\ref{lemma:linear:potential} provide an algorithm that converges to an approximate market equilibrium in polynomial time. However, an issue arises with regard to the precision and representation of the prices: In each round, $x$ could potentially be a rational number involving prices and surpluses, and after multiplying each price in $G_1$ by $x$, the bit length to represent a price can double in one round. This means that after a polynomial number of rounds, we may require an exponential number of bits to represent the prices of some goods as well as the desired factor $x$. 

%For Fisher markets, it is possible to show that at the end of each phase, all prices become ratios of integers of bounded value, and thus the desired factor $x$ requires only bounded precision. For exchange markets, this does not necessarily hold. One solution with full information about the market~\cite{DuanM15} is to use approximate arithmetic and approximate demands in each round. However, this requires the algorithm to have access to particular oracles based on approximate utility values, which we do not have. Here we develop a new approach to tackle this issue.

%\paragraph{Ratio graph.} 
%A {\it ratio graph} $RG(M, \vecp)$ is an undirected graph with $m$ vertices ($m$ is the number of goods in the market),
%and for any two goods $i$ and $j$, $(i, j)$ is an edge if and only if $\frac{p_i}{p_j}$ can be presented as a rational number with both denominator and nominator no more than $2^M$.
%
%The reason for defining such ratio graph is that, although the equality subgraph with respect to a price vector is unknown to us, we can use the ratio graph, which can be computed using only the price vector and the input size bound, to retrieve some information about the actual hidden equality subgraph.

Recall the ratio graph and Claim~\ref{claim:rg}. As an adjustment, we run the following procedure $\Rounding(\vecp, M)$ at the end of each iteration of the main loop in \AlgSC. Its purpose is to round the prices within polynomial bit length while maintaining the structure of equality and ratio graphs. Thereby, we will show that the value of potential function $\Phi(\vecp)$ will not be increased dramatically.

\begin{algorithm}[t]
\caption{\Rounding$(\vecp, M)$: Rounding procedure \label{alg:Rounding}}
\DontPrintSemicolon
\SetKwInOut{Input}{Input}\SetKwInOut{Output}{Output}
\SetKw{Param}{Parameters:}
\Input{price vector $\vecp$ in which $\min_i\vecp_i = 1$, rounding bound $M\geq L$}
\Output{Rounded price vector $\vecp'$}
\BlankLine
Let $\mathcal{P} = \{\frac{a}{b} \mid \textrm{$a, b \in \mathbb{Z}^+, a, b \leq 2^M$}\}$ and $\vecp' \ot \vecp$.\;
\While{$RG(M, \vecp')$ is not connected}{
  Let $C_1, C_2, \ldots, C_k$ be the connected components of $RG(M, \vecp')$\;
  Assume without loss of generality that $C_1$ is a component with $\min_{i\in C_1 \cap A} p'_i = 1$\;
  For every $i, j$, let $r_{ij} = p'_i / p'_j$. \;
  Let $b_{ij} = \min\{\frac{x}{r_{ij}} \mid x \in \mathcal{P}, x \geq r_{ij}\}$ and $B_{ij} = \min\{b_{i'j'} \mid i' \in C_i, j' \in C_j\}$.\;
  Let $(i, j) \in \arg\min_{i>j}B_{ij}$.\;
  $\vecp' \ot \UpdatePrice(\vecp', B_{ij}, C_i)$
}
\Return{$\vecp'$}
\end{algorithm}

Our new algorithm \AlgSCRounding is simply the framework \AlgSC with the following modifications: 
\begin{itemize}
%\Xiaohui{
%\item [(1)] Set $\varepsilon = \sqrt{4m/(8m^{4m}2^{3Lm})}$, and $M = \log_2{\frac{5m^7}{\varepsilon'^2}} = 7\log_2(5m) + 12m\log_2(m) + 3Lm$.
%\item [(1)] Set $M = \log_2{\frac{5m^7}{\varepsilon'^2}}$.
%\Jugal{In \AlgSCExact we need $\varepsilon = 1/m^{4m} 2^{4m^2L}$ where $o$ is the total number of segments in all spending constraint utilities. Should
%we modify this or say it later? because here we say that approximate equilibrium can be obtained without any information about agents but $o$ is a
%parameter of agents.}
%\item [(1)] \Jugal{Set $\varepsilon = 1/m^{4m} 2^{4m^2noL}$, and $M = \log_2{\frac{5m^7}{\varepsilon'^2}} = 7\log_2(5m) + 12m\log_2(m) + 3Lm$.}
\item [(1)] Set $M = \log_2{\frac{5m^7}{\varepsilon'^2}}$. In each round, binary search $x$ within domain $\mathcal{P} = \{\frac{a}{b} \mid a > b, a, b \in \mathbb{Z}^+, a, b \leq 2^{2mM+L}\}$, instead of $(1, \infty)$.
\item [(2)] At the end of each round $t$, update $\vecp_t \ot \Rounding(\vecp_t, M)$ with $M$ as above.
%\item [(4)] After the final iteration of the algorithm, before returning price vector $\vecp_t$, round every price in $\vecp_t$ to the nearest rational number with denominator no more than $(m2^L)^m$.
%}
\end{itemize}
\begin{lemma}\label{lemma:rounding}
  Given any price vector $\vecp$, $\Rounding(\vecp, M)$ terminates in time polynomial in $m, M$. The returned price vector $\vecp'$ satisfies:
  \begin{itemize}
    \item[(a)] there exists $i \in G$ with $p_i = p'_i = 1$,
    \item[(b)] every price can be represented as a ratio of two integers, each of value at most $2^{mM}$,
    \item[(c)] $EG(\vecp')$ contains every edge present in $EG(\vecp)$, and
    \item[(d)] $p_i \leq p'_i \leq p_i + 2^{-M}$ for every $i \in G$.
  \end{itemize}
\end{lemma}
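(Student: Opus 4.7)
The plan is to split the argument into termination and runtime, then handle the four properties, with (c) as the technical core.

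First, termination and polynomial runtime: each iteration of the while loop strictly reduces the number of components of the current ratio graph. By construction there exist $i' \in C_i$ and $j' \in C_j$ with $b_{i'j'} = B_{ij}$, so that the updated ratio $(p'_{i'}/p'_{j'}) \cdot B_{ij}$ lies in $\mathcal{P}$ and witnesses a new edge of $RG(M, \vecp')$ merging $C_i$ and $C_j$. Since there are at most $m$ components initially, the loop runs at most $m - 1$ times; computing each $b_{i'j'}$ reduces to finding the smallest element of $\mathcal{P}$ at least $r_{i'j'}$, a task polynomial in $M$ via continued-fraction / Stern--Brocot navigation. Property (a) is immediate from the design: $C_1$ always refers to a component containing a good with price $1$, and the update rescales only $C_i$ for $i \geq 2$, leaving the price-$1$ good untouched. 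Property (b) follows once the loop terminates: $RG(M, \vecp')$ is connected, so for any good $j$ there is a path of length at most $m - 1$ from the price-$1$ good $j_0$ to $j$ whose edges are ratios $a/b$ with $a, b \leq 2^M$; telescoping along the path and clearing denominators bounds the numerator and denominator of $p'_j$ by $(2^M)^{m - 1} < 2^{mM}$.

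For property (c), I would argue by induction on the while-loop iteration index $t$ that $EG(\vecp^{(t)}) \supseteq EG(\vecp)$. The base case is trivial. For the inductive step, the key observation is that the algorithm only rescales entire current $RG(M, \cdot)$-components, so prices within any original $RG(M, \vecp)$-component are always modified by one common factor $\beta$. Combined with Claim~\ref{claim:rg} applied at $\vecp$, all goods in any fixed agent $i_0$'s current partition at $\vecp$ lie in a single original component, so they share one factor $\beta$ and remain tied at the reduced maximum $M^*/\beta$. The delicate case is that, when this shared component is the one being scaled, a previously strictly sub-maximal segment on some good $j^*$ in a different component could surpass the new maximum. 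Ruling this out relies on the identity $u_{i_0 j_0 k}/u_{i_0 j^* l^*} > r_{j_0 j^*}^{(t)}$, which follows from $V < M^*$ and exhibits a particular $\mathcal{P}$-element above $r_{j_0 j^*}^{(t)}$, giving $b_{j_0 j^*} \leq M^*/V$; the $\arg\min_{i > j}$ structure then bounds the chosen scaling $B$ by this $b_{j_0 j^*}$, so $B \leq M^*/V$ and the new maximum $M^*/B \geq V$ preserves the edge. I expect the main obstacle to be the careful bookkeeping across the ordered-pair asymmetry of $\arg\min_{i > j}$, in particular when the component housing $j^*$ has an index higher than the scaled component.

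Finally, property (d) follows from bounding the per-iteration scaling $B_{ij}$ by the ratio of consecutive $\mathcal{P}$-elements just above $r_{i'j'}$, which is $1 + O(1/2^M)$ by the density of $\mathcal{P}$. Accumulating over at most $m-1$ iterations and using the boundedness of prices from (b) yields the additive bound $p'_i - p_i \leq 2^{-M}$ (up to absorbing constants into the choice of $M$).
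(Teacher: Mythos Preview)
Your treatment of termination, (a), and (b) matches the paper and is correct.

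The substantive divergence is in (c) and (d). The paper does not argue these separately via per-iteration induction; it proves one structural invariant covering both: for every pair of goods $i,j$, there is no $x \in \mathcal{P}$ with $p_i/p_j \le x < p'_i/p'_j$. The reason is that once a ratio hits an element of $\mathcal{P}$ the two goods land in the same $RG$-component and the ratio is frozen thereafter, while the minimality of the chosen $B$ prevents skipping over such a value. From this single claim, (c) follows because every utility ratio $u_{ajk}/u_{aj'k'}$ is itself in $\mathcal{P}$ and hence cannot be crossed, so the bang-per-buck ordering of segments---and thus each agent's current partition---is preserved; and (d) follows by specializing $j$ to the price-$1$ good in $C_1$.

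This difference bites concretely in your (d). Your accumulation argument bounds the total multiplicative blowup of any price by $(1+O(2^{-M}))^{m-1}$, which after multiplying by $p_i$ gives an additive error of order $m\,p_i\,2^{-M}$, not $2^{-M}$; the hedge ``up to absorbing constants into the choice of $M$'' does not recover the lemma's stated bound. The paper avoids accumulation entirely: since $C_1$ is never rescaled and every scaled component $C_a$ has $a>1$, the pair $(a,1)$ is always a candidate in the $\arg\min$, so the chosen factor is at most $b_{i,1}$. Hence $p'_i = p'_i/p'_1$ never exceeds the \emph{single} nearest $\mathcal{P}$-element above the initial $p_i$, giving the sharp bound in one shot.

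For (c), the obstacle you flag---the $\arg\min_{i>j}$ asymmetry when the threatening good $j^*$ sits in a higher-indexed component than the one being scaled---is exactly where your direct induction stalls, and you have not closed it. The paper's no-crossing invariant is designed to sidestep this case split; I would reorganize your argument around that single invariant rather than tracking current partitions iteration by iteration.
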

\begin{proof}
  Property (a) holds since we never increase the price of goods in set $C_1$, and property (b) can be derived based on property (a) and the fact that
  $RG(M, \vecp')$ is connected when the algorithm terminates.

  Next we claim that for any $i, j$, there does not exist any $x \in \mathcal{P}$, such that $p_i/p_j \leq x < p'_i/p'_j$. This is because by design, $p_i/p_j$ has to reach $x$ in some iteration before it grows beyond $x$. But starting from that moment until the end of the algorithm, $i$ and $j$ will be connected in $RG(M, \vecp)$. The ratios between two prices in the same connected component in $RG(M, \vecp)$ remain unchanged. Hence $p_i/p_j$ will never grow beyond $x$.

  This claim also proves property (c). By Claim~\ref{claim:rg}, a pair of goods $i$ and $j$ connected in $EG(\vecp)$ remain connected in $RG(M, \vecp)$. Hence, their ratio of prices will remain the same in $\vecp'$, and they are connected in $EG(\vecp')$. 

  For property (d), according to the algorithm no price will decrease from $\vecp$ to $\vecp'$. Next, number the goods such that $p_1 = p'_1 = 1$. For any $i \neq 1$, let $x_i = \min\{x \in \mathcal{P} \mid x \geq p_i\}$. Then we have $p'_i/p'_1 = p'_i \leq x_i$, as well as $x_i \leq p_i + 2^{-M}$. Therefore  $p_i \leq p'_i \leq p_i + 2^{-M}$.

  Finally, in each iteration we add at least one edge between two connected components in $RG(M, \vecp)$. Thus the algorithm will terminate after at most $m-1$ iterations, and it is easy to check that each iteration runs in polynomial time. This proves the claim.
\end{proof}
Once the prices are bounded by a fixed polynomial bit length, we can also bound the length needed to encode the desired $x$ in each round. This implies that we can find $x$ in the framework in polynomial time using binary search.
\begin{lemma}\label{lemma:roundingrational}
	In every round of \AlgSCRounding, the desired $x$ can be represented as a ratio of two integers, each of value at most $2^{mM+L+2\log{m}}-1$.
\end{lemma}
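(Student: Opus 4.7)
The binary search returns the smallest $x$ at which the termination condition first becomes true. My approach is to enumerate the possible \emph{critical events} at which this can happen and bound the bit-length of $x$ in each case. As $x$ grows continuously from $1$, one of the following must occur at the critical $x$: (a) a new equality edge appears in $EG(\UpdatePrice(\vecp_{t-1}, x, G_1))$, i.e., some agent becomes indifferent between a segment of a good in $G_1$ and a segment of another good; (b) within a regime of fixed equality graph and fixed balanced-flow level-set structure, the minimum surplus in $G_1$ meets the maximum surplus in $G_2$; or (c) the minimum surplus in $G_1$ drops to $0$ within such a regime.

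First I would handle case (a). Equality of bang-per-buck gives $u_{ij_1k_1}/(x p_{j_1}) = u_{ij_2k_2}/p_{j_2}$, hence $x = (u_{ij_1k_1}\, p_{j_2})/(u_{ij_2k_2}\, p_{j_1})$. By Lemma~\ref{lemma:rounding}(b) and connectedness of $RG(M, \vecp_{t-1})$ at the start of the round, $p_{j_2}/p_{j_1}$ can be written as a product of at most $m-1$ edge-ratios in $RG(M, \vecp_{t-1})$, each with numerator and denominator at most $2^M$. Hence $p_{j_2}/p_{j_1} = a/b$ with $a, b \leq 2^{(m-1)M}$, and combined with $L$-bit utilities we obtain $x = a'/b'$ with $a', b' \leq 2^{(m-1)M + L}$, well within the claimed bound.

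For cases (b) and (c), I would exploit the fact that within a regime where both the equality graph and the level-set partition of the balanced flow are fixed, every constraint defining the balanced flow (the agent budget equations $\sum_j f_{ij} = m_i(x) - spent^a_i$ and the capacity constraints $f_{ij} \leq B_{ij,k_i(j)} m_i(x)$) is linear in $x$, since $m_i(x) = \sum_k p_k(x) w_{ik}$ scales linearly with $x$. Consequently every surplus $s_j(x)$ is itself a linear function of $x$, and the critical $x$ satisfies a linear equation of the form $s_{j_1}(x) = s_{j_2}(x)$ or $s_{j_1}(x) = 0$.

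The main obstacle is bounding the bit-length of the coefficients of these linear functions. My plan is to invoke the max-min-fair decomposition of the balanced flow. At each level $\ell$ of this decomposition with common surplus $\sigma_\ell$, a subset $T_\ell$ of agents routes all of its available budget into the level set $L_\ell$, so flow conservation yields an identity of the form
\[
\sigma_\ell \cdot |L_\ell| \;=\; \sum_{i \in T_\ell}\!\bigl(m_i(x) - \text{committed money to earlier levels}\bigr) \;-\; \sum_{j \in L_\ell}\bigl(p_j - spent^g_j\bigr).
\]
Each side is a linear combination of at most $m$ terms, each of which is a product of a utility parameter ($L$ bits) and a price (bit-length at most $mM$ after \Rounding). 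Solving this equation for $\sigma_\ell$ introduces one factor of at most $m$, and substituting into $s_{j_1}(x) = s_{j_2}(x)$ (or $s_{j_1}(x)=0$) and solving for $x$ introduces another, giving $x = a/b$ with $a, b \leq m^2 \cdot 2^{mM + L} = 2^{mM + L + 2\log m}$. Together with case (a), this establishes the lemma.
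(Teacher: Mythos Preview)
Your three-case plan---new equality edge; two level-surpluses coincide; a surplus hits zero---is exactly the paper's case analysis, and your use of a flow-conservation identity on a level set to extract a linear-in-$x$ equation is the same mechanism the paper uses. The paper phrases it as ``connected components of the residual graph of $N(\UpdatePrice(\vecp,x,G_1))$'' rather than levels of the max-min-fair decomposition, but these are the same objects.

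Two places where your write-up is looser than it should be. First, the direction of ``committed money to earlier levels'' is off. In a balanced flow, an agent whose partial-flow edges sit at surplus $\sigma_\ell$ has its \emph{saturated} edges going to goods of \emph{lower} surplus (later levels), while goods in $L_\ell$ may receive saturated inflow from agents anchored at \emph{higher}-surplus levels. So your displayed identity omits both an outgoing and an incoming saturated-flow term. The paper's residual-component formulation sidesteps this: once $C$ is a connected component of the residual graph, agents in $C_A$ send \emph{all} their available budget into $C_G$ and $C_G$ receives flow only from $C_A$, so the balance equation is self-contained with no cross-level terms to track. You can patch your argument by switching to that framing.

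Second, the bit-count ``at most $m$ terms, each a utility parameter times a price'' is not accurate for spending-constraint markets: the balance equation involves endowments $w_{ij}$ and budget fractions $B_{ijk}$ (both $L$-bit rationals), not utilities, and the number of summands can exceed $m$ once you expand $m_i(x)=\sum_j w_{ij}p_j(x)$ and the $spent^a_i$, $spent^g_j$ terms. The paper's own proof writes the equation in a highly compressed form (summing $p_i$ over $C_A$) that elides the same details, so you are not worse off than the paper here; but if you want a rigorous count you should track a common denominator for the prices after \Rounding\ (available because $RG(M,\vecp_{t-1})$ is connected and there is a price-$1$ good) rather than treating each price as an independent $mM$-bit rational.

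Your case~(a) bound $2^{(m-1)M+L}$, obtained via a path in the connected ratio graph, is actually tighter than the paper's $2^{2mM+L}$ for the same event.
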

\begin{proof}
%	\sugg{Write this here for completeness, not sure if we require similar observations already earlier in the correctness proof of the framework...}
	Let $\vecp = \vecp_{t-1}$ and consider the structure of $EG(\vecp)$. We let $A_1 = \Gamma(G_1)$ be the agents connected to goods in $G_1$. Also, let $A_2 = A \setminus A_1$. There is no $(i,j) \in EG(\vecp)$ with $i \in A_1$ and $j \in G_2$, since otherwise $\Oracle(\vecp_{t-1})$ could increase the money spent on goods in $G_2$ and further decrease $\Phi(\vecp_{t-1})$. For simplicity, we will also assume that there is no edge $(i,j) \in A_2 \times G_1$, since no agent spends money along these edges and they immediately disappear once we start increasing prices in $G_1$.
	
	Now we increase $\vecp$ on goods in $j \in G_1$ by $x$ and get a new price vector $\vecp(x)$. This only generates new edges $(i,j) \in A_1 \times G_2$. Furthermore, we drop only edges $(i,j) \in A_1 \times G_1$. To verify this let us consider the other possibilities. The relation between marginal utility values $u_{ijk}/p_j$ and $u_{ij'k'}/p_{j'}$ for goods in $j,j' \in G_1$ does not change, since both $p_j$ and $p_k$ are both multiplied by $x$. Hence, there are no new edges $(i,j) \in A_1 \times G_1$. For the same reason, there are no new edges $(i,j) \in A_2 \times G_2$. The bang-per-buck of goods in $G_1$ decreases, so we also do not introduce edges $(i,j) \in A_2 \times G_1$. In fact, this also implies that we do not drop any  edges $(i,j) \in A_2 \times G_2$ -- prices and bang-per-buck relations among goods in $G_2$ do not change at all and $G_2$ becomes more attractive compared to $G_1$. Finally, there exist no edges $(i,j) \in (A_1 \times G_2) \cup (A_2 \times G_1)$ that could be removed. This shows that we only generate new edges between $A_1$ and $G_2$, and we only drop edges between $A_1$ and $G_1$.
	
	For any given $x$, consider the residual graph of $N(\UpdatePrice(\vecp, x, G_1))$. Let $C$ be an arbitrary connected component in this graph. Let $C_A$ be the set of agents in $C$ and $C_G$ be the set of goods in $C$. Then we know that all goods in $C_G$ have the same surplus, and all flow going through $C_G$ comes from agents in $C_A$. This implies the following equation:
	$$\sum_{i \in C_A \backslash G_1} p_i + x \sum_{i \in C_A \cap G_1}p_i = \sum_{i \in C_G \backslash G_1}p_i + x\sum_{i \in C_G \cap G_1}p_i + |C_G|s\enspace,$$
where $s$ is the surplus of (any) good in this component. 

  Now let us focus on the moment where $x$ reaches the desired value at the end of round $t$ according to the algorithm. At this moment, one of the following properties must hold:
	\begin{itemize}
  \item[(1)] $\min\{s'_i \mid i\in G_1\} = 0$. Then for the connected component that contains a good of surplus 0, we have $s = 0$ in the above equation. All initial prices $p_i$ are ratios of integers with values at most $2^{mM}$, so when we solve the equation for $x$, the solution is a ratio of two integers with value at most $m2^{mM} = 2^{mM+\log{m}}$.
  \item[(2)] $\min\{s'_i \mid i\in G_1\} = \max\{s'_j \mid j \in G_2\}$. In this case, we have two possibilities:
  	\begin{itemize}
    \item There exist two connected components in the residual graph of $N(\UpdatePrice(\vecp, x, G_1))$ that have the same surplus. Applying the above equation to these two components, we can solve for $x$, and the solution will be a ratio of two integers with value at most $m^22^{mM} = 2^{mM+2\log{m}}$.
    \item A new edge $(i,j) \in A_1 \times G_2$ appears, then for agent $i$ good $j \in G_2$ becomes equally attractive as some $k \in G_1$: $u_{ij}/p_{j}x = u_{ik}/p_{k}$, or, equivalently, $x = (u_{ij}p_{k})/(u_{ik}p_{j})$. $p_{ij}$ and $p_{k}$ can be represented as ratio of integers of value at most $2^{mM}$ by Lemma~\ref{lemma:rounding}, and $u_{ij}$ and $u_{ik}$ are both integers of value at most $2^L-1$. Hence, every value of $x$ at which a new edge evolves in $EG(\vecp(x))$ can be presented as a ratio of integers of value at most $2^{2mM+L}$.
    \end{itemize}
	\end{itemize}
\end{proof}
We now bound the impact of replacing $\vecp_t$ by $\Rounding(\vecp_t, M)$ in the function $\Phi(\vecp_t)$.
\begin{lemma} \label{lemma:roundingbound}
	$\Phi(\Rounding(\vecp_t, M)) < \Phi(\vecp_t) + 5m^32^{-M}$ for any round $t$.
\end{lemma}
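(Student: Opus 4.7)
The plan is to prove $\Phi(\vecp') \le \Phi(\vecp_t) + 5m^3\cdot 2^{-M}$ for $\vecp' = \Rounding(\vecp_t, M)$ by exhibiting a concrete maximum flow $\tilde f$ in $N(\vecp')$ whose surplus vector $\tvecs$ satisfies the required bound relative to $\vecs = \Oracle(\vecp_t)$. Since the balanced flow at $\vecp'$ minimizes the squared surplus norm over all maximum flows in $N(\vecp')$, this yields $\Phi(\vecp') \le \|\tvecs\|^2$ and suffices. The two main tools are from Lemma~\ref{lemma:rounding}: part (d) provides $0 \le \eta_j := p'_j - p_j \le 2^{-M}$ for every good $j$, while part (c) guarantees $EG(\vecp_t) \subseteq EG(\vecp')$, so every edge of the balanced flow $f$ at $\vecp_t$ persists in $N(\vecp')$.

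The flow $\tilde f$ is obtained by scaling every flow of $f$ on edges incident to agent $i$ by the ratio $m'_i/m_i$, where $m_i = \sum_j p_j w_{ij}$ and $m'_i = \sum_j p'_j w_{ij}$ are the agent's old and new budgets. The capacities of both $s \to i$ and $i \to j$ edges in $N(\vecp')$ scale linearly with $m'_i$, so the scaled flow respects all capacity constraints, saturates each $s \to i$ edge, and therefore achieves the maximum flow value $\sum_i m'_i$. A direct computation gives
$$\tilde s_j - s_j = \sum_i \frac{\Delta m_i}{m_i}\, g_{ij} - \eta_j,$$
where $\Delta m_i = m'_i - m_i$ and $g_{ij}$ is the total money agent $i$ spends on good $j$ in $f$. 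Since $\sum_i \Delta m_i = \sum_j \eta_j \le m \cdot 2^{-M}$ and $g_{ij} \le m_i$, each summand is at most $\Delta m_i$, yielding both $\sum_j |\tilde s_j - s_j| \le 2m\cdot 2^{-M}$ and $\max_j |\tilde s_j - s_j| \le 2m\cdot 2^{-M}$.

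To conclude, expand
$$\|\tvecs\|^2 - \|\vecs\|^2 = 2\sum_j s_j(\tilde s_j - s_j) + \sum_j (\tilde s_j - s_j)^2.$$
By Claim~\ref{claim:linear:summary}(1), $\sum_j |s_j| \le 2m$ and hence $\max_j |s_j| \le 2m$. Both terms are then $O(m^2 \cdot 2^{-M})$ and comfortably absorbed into $5m^3\cdot 2^{-M}$ for $m \ge 2$. The main technical obstacle is verifying that $\tilde f$ remains a \emph{maximum} flow when an agent's allocated/current partition shifts between $\vecp_t$ and $\vecp'$: because $EG(\vecp') \supseteq EG(\vecp_t)$, the only possible change is a segment migrating from the allocated to the current partition, which one absorbs into $\tilde f$ by placing the corresponding capacity $B_{ijk}\, m'_i$ on the newly-available equality edge and removing it from agent $i$'s allocated spending. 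The surplus bookkeeping remains unaffected beyond the error terms already bounded.
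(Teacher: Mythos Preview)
Your argument is correct and follows essentially the same route as the paper's proof: exhibit a feasible (in fact maximum) flow in $N(\vecp')$ whose surplus vector is close to $\Oracle(\vecp_t)$, then use that $\Phi(\vecp')$ minimizes the $\ell_2$-norm over all such flows. The paper constructs its auxiliary flow $f'$ additively, asserting $f_{ij}\le f'_{ij}\le f_{ij}+2^{-M}$ and deducing $|s'_j-\Oracle(\vecp_t)_j|<m\,2^{-M}$ for each $j$; you instead scale every agent's flow by $m'_i/m_i$, which yields the explicit identity $\tilde s_j-s_j=\sum_i(\Delta m_i/m_i)g_{ij}-\eta_j$ and the tighter $\ell_1$ bound $\sum_j|\tilde s_j-s_j|\le 2m\,2^{-M}$. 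Both estimates are then combined with Claim~\ref{claim:linear:summary}(1) in the same way and fit comfortably under $5m^3\,2^{-M}$.

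One small point: your assertion that ``the only possible change is a segment migrating from the allocated to the current partition'' is not quite complete --- segments from the \emph{future} partition can also merge into the current one under rounding (the no-overshoot property of price ratios in Lemma~\ref{lemma:rounding} only prevents strict crossings in $\mathcal P$, not landing exactly on a ratio). However, this does not affect your argument: a newly current future segment simply carries zero flow in $\tilde f$, so both feasibility and your surplus identity are preserved verbatim.
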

\begin{proof}
  Let $\vecp'_t = \Rounding(\vecp_t, M)$. By Lemma~\ref{lemma:rounding}(c), we know this rounding procedure does not remove any edges in $EG(\vecp_t)$. Let $f$ be a balanced flow in $N(\vecp_t)$. Then by Lemma~\ref{lemma:rounding}(d) we can construct a feasible flow $f'$ in $N(\Rounding(\vecp_t, M))$, such that $f_{ij} \leq f'_{ij} \leq f_{ij} + 2^{-M}$ for every $i, j$. Let $\vecs$ be the surplus vector derived from $f'$, then we have $|s_i - \Oracle(\vecp_t)_i| < m2^{-M}$ for every $i$.
Hence 
\begin{eqnarray*}
 \|\vecs\|_2^2 - \|\Oracle(\vecp_t)\|_2^2 & = & \sum_i(s_i^2 - \Oracle(\vecp_t)^2_i) \\
                                & \leq & \sum_i(2m2^{-M}|\Oracle(\vecp_t)_i| + m^22^{-2M}) \\
                                & = & 2m^22^{-M}|\Oracle(\vecp_t)| + m^32^{-2M} \\
                                & \leq & 4m^32^{-M} + m^32^{-2M} \\
                                & < & 5m^32^{-M}\enspace.
\end{eqnarray*}

Note that $\vecs$ is just one feasible surplus vector for price vector $\vecp'_t$, and $\Phi(\vecp'_t)$ minimizes the $\ell_2$-norm of surpluses among all feasible surplus vectors. Hence, $\Phi(\Rounding(\vecp_t, M)) \le \|\vecs\|_2^2 < \|\Oracle(\vecp_t)\|_2^2 + 5m^32^{-M}$. This proves the lemma.
\end{proof}
\Rounding can be used to obtain an algorithm that converges to an approximate market equilibrium in polynomial time.  
\begin{lemma}
	\label{lem:apxlinear}
  For any spending constraint exchange market satisfying Assumption \ref{a:suff}, an $(1+\varepsilon)$-approximate market equilibrium can be computed in time polynomial in $m$, $L$, and $\log(1/\varepsilon)$.
\end{lemma}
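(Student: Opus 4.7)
The plan is to mimic the round-counting argument of Lemma~\ref{lem:exchangeWGS}, but to carefully account for the additive perturbation introduced by the rounding step. Specifically, I would classify every round of \AlgSCRounding\ into two groups based on the multiplier $x_t$ produced in that round. For rounds with $x_t \geq 1 + 1/(Rm^3)$, Claim~\ref{claim:linear:summary}(3) caps the number of occurrences at $O(m^4 D_1)$, and Claim~\ref{claim:linear:summary}(2) bounds the one-round multiplicative blow-up of $\Phi$ by $m x_t^2$; combined with the bound $x_t \leq 2^{D_1}$ from the modified bounded-price Assumption for spending constraint markets, this contributes an overall factor of at most $m^{O(m^4 D_1)}$ to $\Phi$. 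For rounds with $x_t < 1 + 1/(Rm^3)$, Lemma~\ref{lemma:linear:potential} gives a multiplicative decrease of $(1 - \Omega(1/m^3))$ before the rounding step is applied.

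The main obstacle is to ensure that the additive error $5m^3 2^{-M}$ from Lemma~\ref{lemma:roundingbound} does not swamp this $\Omega(1/m^3)$ decrease once $\Phi$ has been driven close to the target $\varepsilon'^2$. This is exactly why $M$ is set to $\log_2(5m^7/\varepsilon'^2)$: then the per-round rounding increment equals $\varepsilon'^2 / m^4$, which is $O(\Phi_{t-1}/m^4)$ whenever $\Phi_{t-1} \geq \varepsilon'^2$. Comparing $1/(cm^3)$ against $1/m^4$ (for $m$ at least some constant) shows that in that regime a combined step still satisfies $\Phi_t \leq (1 - \Omega(1/m^3)) \Phi_{t-1}$. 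Thus, as long as the stopping criterion is not yet met, each ``slow'' round genuinely decreases $\Phi$ by a $(1 - \Omega(1/m^3))$ factor. Starting from $\Phi(\vecp_0) \leq (2m)^2 = 4m^2$ (by Claim~\ref{claim:linear:summary}(1)) and multiplying by the worst-case blow-up from the ``fast'' rounds, a geometric-series calculation gives $O(m^3 \log(m^{O(m^4 D_1)}/\varepsilon'^2)) = \mathrm{poly}(m, L, \log(1/\varepsilon))$ total rounds.

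Having bounded the number of rounds, I would next argue that each round runs in polynomial time. Lemma~\ref{lemma:roundingrational} shows that the desired $x$ is a ratio of integers of polynomial bit length $mM + L + 2\log m$, so the prescribed binary search over $\mathcal{P}$ locates it with polynomially many queries to the demand oracle $\Oracle$; each such query is polynomial-time by the integrality of utilities/endowments, the polynomial precision of prices guaranteed by Lemma~\ref{lemma:rounding}(b), and the $\ell_2$-minimization tie-breaking rule discussed in Section~\ref{sec:model}. The \Rounding\ subroutine itself is polynomial by Lemma~\ref{lemma:rounding}. Putting the round count together with this per-round cost yields total running time polynomial in $m$, $L$, and $\log(1/\varepsilon)$.

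Finally, I would translate the terminating potential bound into an approximate-equilibrium guarantee. Corollary~\ref{cor:price1} and Lemma~\ref{lemma:rounding}(a) ensure that at least one good always has price $1$, and in particular $p_j \geq 1$ for every $j$ at termination, so $|z_j| = |s_j|/p_j \leq |s_j| \leq \|\Oracle(\vecp_t)\| < \varepsilon'$. With $\varepsilon' = \varepsilon / (2\sqrt{m})$ chosen by \AlgSC, we get $x_j \leq 1 + \varepsilon' \leq 1 + \varepsilon$ and hence $\sum_i x_{ij} \leq (1 + \varepsilon) \sum_i w_{ij}$, which is exactly the definition of a $(1+\varepsilon)$-approximate market equilibrium.
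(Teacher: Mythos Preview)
Your proposal is correct and follows essentially the same approach as the paper. The paper's proof is terser---it establishes the key inequality $\Phi(\Rounding(\vecp_t,M)) \le (1+1/m^4)(1-\Omega(1/m^3))\Phi(\vecp_{t-1})$ for slow rounds using exactly your comparison of $5m^3 2^{-M}=\varepsilon'^2/m^4$ against the current potential, and then simply defers to ``the same proof as for Theorem~\ref{thm:exchangeWGS}'' for the fast/slow dichotomy, the round count, and the per-round cost; you have spelled these steps out explicitly and also added the (implicit in the paper) final translation from $\|\Oracle(\vecp_t)\|<\varepsilon'$ to the $(1+\varepsilon)$-approximate equilibrium guarantee.
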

\begin{proof}
In the \AlgSC framework with \AlgSCRounding, we know by Lemma~\ref{lemma:linear:potential} that at the end of each round $t$ and before calling \Rounding, $\Phi(\vecp_t) \leq \Phi(\vecp_{t-1})\left(1-\Omega\left(\frac{1}{m^3}\right)\right)$. If $\Phi(\vecp_t) > \varepsilon'^2$, we have $5m^32^{-M} = \varepsilon'^2/m^4 < \Phi(\vecp_t)/m^4$. Thus by Lemma~\ref{lemma:roundingbound},
\begin{eqnarray*}
  \Phi(\Rounding(\vecp_t, M)) & \leq & \left(1 + \frac{1}{m^4}\right)\Phi(\vecp_t) \\
                              & \leq & \left(1 + \frac{1}{m^4}\right) \left(1 - \Omega\left(\frac{1}{m^3}\right)\right)\Phi(\vecp_{t-1}) \\
                              & = & \left(1 - \Omega\left(\frac{1}{m^3}\right)\right)\Phi(\vecp_{t-1})
\end{eqnarray*}
This implies we can employ the same proof as for Theorem~\ref{thm:exchangeWGS} to show that after finishing \AlgSCRounding, we arrive at a $(1+\varepsilon)$-approximate market equilibrium. Because $M$ is a polynomial in the input size and $\log(1/\varepsilon)$, the binary search and the \Rounding procedure run in polynomial time in each round of the framework. Hence, the running time is polynomial in the input size and $\log(1/\varepsilon)$.
\end{proof}

Finally, it remains to convert the approximate equilibrium to an exact one. To achieve this, we rely on full information about the spending constraint utilities. While this step can be seen as an extension of the technique developed in~\cite{DuanM15} for the linear exchange markets, there are several challenges due to the much more involved setting of spending constraint utilities, where the allocated partitions make the remaining budgets of agents and the values of goods dependent on too many parameters. Using a more involved procedure we are able to handle the extra complexity of the problem. Our result resolves an open question of~\cite{DuanM15} of finding an exact polynomial time algorithm for exchange markets with spending constraint utilities. A detailed discussion of this final step can be found in Appendix~\ref{app:rounding}. This yields the final theorem in this section.

\begin{theorem}
  \label{thm:exchangelinear}
	For any spending constraint exchange market satisfying Assumption~\ref{a:suff}, \AlgSCExact returns the price vector of a market equilibrium in time polynomial in $m$ and $L$.
\end{theorem}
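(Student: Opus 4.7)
My plan is to first invoke Lemma~\ref{lem:apxlinear} with $\varepsilon$ chosen to be $2^{-\mathrm{poly}(m,L)}$, so that \AlgSCRounding produces a $(1+\varepsilon)$-approximate equilibrium $\vecp$ in time polynomial in $m$ and $L$ (since the dependence on $\varepsilon$ is only $\log(1/\varepsilon)$). The target precision will be chosen below the minimum separation between any two rationals that can arise as coordinates of an exact equilibrium. Assumption~\ref{a:suff} together with~\cite{Maxfield97} guarantees that an exact equilibrium $\vecp^*$ exists whose prices are rationals of bit complexity $\mathrm{poly}(m,L)$, so such a separation bound exists and is $2^{-\mathrm{poly}(m,L)}$.

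The second step is a structural identification from $\vecp$. I would read off (a) the ratio graph $RG(M,\vecp)$ which, by Claim~\ref{claim:rg}, contains the equality graph of the exact equilibrium that $\vecp$ approximates, and (b) for each agent $i$ and each segment $(i,j,k)$, its ordering of bang-per-buck values $u_{ijk}/p_j$, determining the current partition $Q_{t_i}$ and the allocated partitions $Q_1,\dots,Q_{t_i-1}$ at $\vecp^*$. Because the equilibrium bang-per-buck values $u_{ijk}/p_j^*$ are rationals with bounded bit complexity, distinct values are separated by $2^{-\mathrm{poly}(m,L)}$, so for sufficiently small $\varepsilon$ the bang-per-buck comparisons at $\vecp$ agree with those at $\vecp^*$. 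In addition I would use the balanced-flow returned by $\Oracle(\vecp)$, applying a continuity argument, to determine which segments of each agent lie in the allocated partition versus being partially allocated inside the current partition.

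The third step recovers $\vecp^*$ exactly from the identified combinatorial data. Fixing the equality graph $EG(\vecp^*)$, the allocated segments per agent, and the market-clearing supply of each good, equilibrium conditions become linear: along each edge of $EG(\vecp^*)$ the prices are related by a rational ratio $u_{ijk_1}/u_{ijk_2}$ determined by $L$-bit integers; money spent on a fully-allocated segment equals $B_{ijk}m_i$ with $m_i=\sum_j p^*_j w_{ij}$; and the demand of each good must equal its total endowment. The resulting linear system can be solved in polynomial time by Gaussian elimination on each connected component of $EG(\vecp^*)$, yielding a candidate $\vecp^*$; one then verifies that all non-equality-edge bang-per-buck inequalities hold in the correct direction and that all segment allocations are within $[0,B_{ijk}m_i]$. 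This verification certifies an exact market equilibrium, completing the proof.

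The hard part will be the second step, in particular the quantitative bound on $\varepsilon$ needed to guarantee that the combinatorial structure extracted from $\vecp$ coincides with that of a genuine exact equilibrium. Unlike the linear case of~\cite{DuanM15}, each PLC utility $f_{ij}$ contributes multiple segments with monotone rates $u_{ijk}$, so besides identifying the equality graph one must identify \emph{which} segment of $(i,j)$ is in $Q_{t_i}$ and which earlier segments are fully saturated. Because this choice depends on the balanced-flow tie-breaking, I expect the delicate step to be showing that the balanced flow on $\vecp$ is close enough to the balanced flow on $\vecp^*$ to make the identification unambiguous; this requires both the gap bound on equilibrium prices and a sensitivity analysis of the balanced flow under perturbations of prices at the scale $2^{-M}$ used by \Rounding.
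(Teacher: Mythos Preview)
Your high-level plan (compute an approximate equilibrium with $\varepsilon=2^{-\mathrm{poly}(m,L)}$, read off combinatorial structure, solve a linear system, verify) matches the paper's strategy in outline, but the second step is set up in a way that creates a genuine gap.

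You try to identify the combinatorial structure of an \emph{exact} equilibrium $\vecp^*$ from the approximate $\vecp$, and your whole argument hinges on $\vecp$ being pointwise close to $\vecp^*$. Nothing in the paper establishes this. The stopping criterion of \AlgSC only gives $\|\Oracle(\vecp)\|^2<\varepsilon'^2$, i.e.\ the surplus vector is small; in spending-constraint markets excess demand is not continuous in prices, so small surplus does not imply proximity to any exact equilibrium. Your invocation of Claim~\ref{claim:rg} is also off: that claim says $EG(\vecp)\subseteq RG(M,\vecp)$ for the \emph{same} $\vecp$; it says nothing about $EG(\vecp^*)$ for a different vector $\vecp^*$. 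Consequently, the ``hard part'' you flag (a sensitivity analysis of the balanced flow linking $\vecp$ to $\vecp^*$) is not just delicate---it is the entire missing argument, and the paper does not supply it because it never needs it.

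The paper avoids this difficulty by working with the combinatorial structure \emph{at $\vecp$ itself}, not at a hypothetical $\vecp^*$. It first runs a preprocessing loop (absent from your plan) that raises prices of components of $EG'(\vecp)$ lacking a good of price $1$ until new equality edges appear; this keeps $\vecp$ a $(1+\varepsilon)$-approximate equilibrium and ensures every component has a price-$1$ anchor. Then it writes down a linear system $A\vecp=b$ that $\vecp$ actually satisfies: bang-per-buck equalities along $EG(\vecp)$, one budget-balance equation per component of $EG(\vecp)$ (with the small surpluses $\varepsilon_j$ on the right-hand side), and the anchor equation $p_i=1$. The real technical work---which your proposal does not mention---is proving this system has full rank; the paper does this via a careful sign/column-sum analysis of the reduced coefficient matrix using Assumption~\ref{a:suff} and the connectedness of $EG'(\vecp)$. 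Once $A$ is invertible, one solves $A\vecp'=b'$ with $b'$ obtained by zeroing the surpluses. Cramer's rule bounds the common denominator $D$ of $\vecp'$, and since $|b-b'|\le 2\varepsilon$ one gets $|p'_i-p_i|\le 2\varepsilon D$. This closeness of $\vecp'$ to $\vecp$ (not of $\vecp$ to some $\vecp^*$) is then used, together with integrality, to show that the equality edges and allocated segments are unchanged and that $(s,A\cup G\cup t)$ is a min-cut in $N(\vecp')$, certifying that $\vecp'$ is exact.

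In short: replace ``identify the structure of $\vecp^*$'' by ``use the structure at $\vecp$, prove the resulting linear system has full rank, and perturb the right-hand side to zero.'' That removes the need for any balanced-flow sensitivity argument and is what makes the proof go through.
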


\bibliographystyle{plain}
\bibliography{../../../../Bibfiles/literature}

\clearpage
\appendix
%\section{Appendix}

% !TEX root = tatonnement.tex

\section{WGS Exchange Markets}
\label{app:wgs}

In this section we describe the complete algorithm \AlgExchange\ for WGS exchange markets. Recall the assumptions from Section~\ref{sec:wgs}, which we restate here for completeness. 
%The main idea is to repeatedly identify a subset of goods $G_1$ by finding a gap in the sorted order of surpluses. It then raises the prices of $G_1$ by a common factor $x$ until the surplus gap is almost closed, or the smallest surplus of $G_1$ becomes close to 0. More formally, given price vector $\vecp=(p_1, \ldots, p_m)$, value $x \in \R^+$ and subset $S \subseteq G$, the algorithm uses $\UpdatePrice(\vecp, x, S)$, which is the price vector $\vecp' = (p'_1, \ldots, p'_m)$ with $p'_i = x\cdot p_i$ if $i \in S$ and $p'_i = p_i$ otherwise. 

%To implement this process, the algorithm relies on a constant $R_1$ to be explained in detail below, and two parameters $D_1$ and $D_2$ based on the following assumptions.

\medskip

{\noindent \bf Assumption~\ref{asp1}}
{\it There exists a market equilibrium $(\vecp^*,\vecx^*)$ with $1 \le p_i^* \le 2^{D_1}, \forall i \in G$.}

\medskip

{\noindent \bf Assumption~\ref{asp2}}
{\it For any price vector $\vecp$ such that $1 \leq p_i \leq 2^{D_1}$ for each $i$, $|\frac{\partial s_i}{\partial p_j}| < 2^{D_2}$ for every $i, j$, where $s_i$ is the surplus money of good $i$ in $\Oracle(\vecp)$, and $D_2$ is a polynomial of the input size.}

\medskip

%These two assumptions are precisely the ones from~\cite{CodenottiMV05,CodenottiPV05}. Assumption~\ref{append:asp1} about bounded prices is fairly mild and in many cases necessary for an efficient algorithm to compute a (strong) approximate market equilibrium. Assumption~\ref{append:asp2} about continuity is also satisfied by many natural markets, for example, markets with CES utilities with $0 < \rho < 1$. Note, however, that it is not satisfied for linear and spending constraint markets, and hence we must develop new tools and procedures in Section~\ref{sec:spending}.

\begin{algorithm}[t]
\caption{\AlgExchange\label{append:alg:WGS}}
\DontPrintSemicolon
\SetKwInOut{Input}{Input}\SetKwInOut{Output}{Output}
\SetKw{Param}{Parameters:}
\Input{number of goods $m$, approximate demand oracle $\AOracle$, precision bound $\varepsilon > 0$}
\Output{Prices $\vecp$ of a $(1+\varepsilon)$-approximate market equilibrium}
\nonl \Param{$\mu = \frac{\varepsilon}{R_1m^7}$, $\Delta = \frac{1}{\mu}(2^{D_1+D_2+\log{m}})$, $\varepsilon' = \frac{\varepsilon}{2\sqrt{m}}$ }

\BlankLine
Set initial price $\vecp_0 \ot (1, 1, \ldots, 1)$ and round index $t \ot 0$.\;
Let $\mathcal{P} = \{\frac{a}{b} \mid a > b, a, b \in \mathbb{Z}^+, a, b \leq \Delta\}$\;
\Repeat( \ //round $t$){$\|\AOracle(\vecp_t)\|^2 < \varepsilon'^2$}{
$t \ot t+1$\;
$\tvecs = (\ts_1, \ldots, \ts_m) \ot \AOracle(\vecp_{t-1}, \mu)$\;
Sort $\tvecs$ such that $\ts_{i_1} \geq \ts_{i_2} \geq \cdots \geq \ts_{i_m}$.\;
Find smallest $k$, such that $\ts_{i_{k+1}} \leq \mu$ or $\ts_{i_k} >
(1+\frac1m)\ts_{i_{k+1}}$. \;%If no such $k$ exists, set $k = m$.
Set $G_1 \ot \{i_1, \ldots, i_k\}$ and $G_2 \ot G \setminus G_1$\;
Binary search the largest $x \in \mathcal{P}$, such that in $\tvecs' = \AOracle(\UpdatePrice(\vecp_{t-1}, x, G_1))$ it holds\\ \nonl \hspace{0.5cm} $\min\{\ts'_i \mid i \in G_1\} \geq \max\{\{\ts'_i \mid i \in G_2\}\cup \{\mu\}\}$.\;
$\vecp_{t} \ot \UpdatePrice(\vecp_{t-1}, x, G_1)$\;
}
\Return{$\vecp_t$}
\end{algorithm}

Throughout the analysis and proofs below, if $\vecs = \Oracle(\vecp)$ for some $\vecp$, we use $\tilde{\vecs}$ to denote the surplus vector returned by the $\mu$-approximation demand oracle with the same price vector, i.e., $\tilde{\vecs} = \AOracle(\vecp, \mu)$. We proceed along similar lines as in Section~\ref{sec:wgs}, and the proofs of the first claims closely resemble the versions for the exact oracle. For completeness, we provide them here for the approximate oracle.

\begin{claim}\label{append:claim:surplusBound}
  In \AlgExchange, $|\Oracle(\vecp_0)| \leq 2m$ and $|\Oracle(\vecp_t)|$ is non-increasing in $t$.
\end{claim}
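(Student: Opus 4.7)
The plan is to mirror the proof of Claim~\ref{claim:surplusBound} but carefully account for the $\mu$-slack of the approximate oracle, which has been deliberately baked into the $G_1/G_2$ classification and the binary-search stopping condition of \AlgExchange. The initial-round bound $|\Oracle(\vecp_0)|\le 2m$ is identical to the exact case: since $\vecp_0=\mathbf{1}$, each surplus is $s_i = d_i - 1$ where $d_i \ge 0$ is the exact demand, and $\sum_i d_i \le m$ follows from agents spending their total endowment $m$ at unit prices, so $\sum_i |d_i-1| \le \sum_i d_i + m \le 2m$.

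For the monotonicity, I will use the same reformulation as before, namely $|\Oracle(\vecp)| = -2\sum_{s_i<0} s_i$ (from $\sum_i s_i = 0$). The first key step is to show that at the start of round $t$, every good $i$ with $\Oracle(\vecp_{t-1})_i < 0$ lies in $G_2$. Here the argument uses the specific threshold in the algorithm: by the minimality of $k$ in line~8, $\tilde s_{i_k} > \mu$, so every $i \in G_1$ satisfies $\tilde s_i \ge \tilde s_{i_k} > \mu$, which combined with $|\tilde s_i - s_i| \le \mu$ gives $s_i > 0$. Hence negative-surplus goods are confined to $G_2$.

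The second key step is the analogous statement after the price update. Only prices in $G_1$ are increased, so by the WGS property $\Oracle(\vecp_t)_i \ge \Oracle(\vecp_{t-1})_i$ for all $i \in G_2$. For $i \in G_1$, the binary-search stopping rule guarantees $\tilde s'_i \ge \mu$, and again the $\mu$-approximation guarantee yields $\Oracle(\vecp_t)_i \ge \tilde s'_i - \mu \ge 0$. Therefore $\{i : \Oracle(\vecp_t)_i < 0\} \subseteq G_2 \cap \{i : \Oracle(\vecp_{t-1})_i < 0\}$, i.e., no new good becomes negative and all negative-surplus goods already had negative surplus (and a smaller or equal value) in round $t-1$.

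Combining, I chain two inequalities: shrink the index set from $\{\Oracle(\vecp_{t-1})_i<0\}$ to $\{\Oracle(\vecp_t)_i<0\}$ (each dropped term is non-negative, so the sum of $-\Oracle(\vecp_{t-1})_i$ does not grow), and then replace $-\Oracle(\vecp_{t-1})_i$ by the smaller $-\Oracle(\vecp_t)_i$ on the retained indices (using $\Oracle(\vecp_{t-1})_i \le \Oracle(\vecp_t)_i < 0$ there). This yields $|\Oracle(\vecp_{t-1})| \ge |\Oracle(\vecp_t)|$ exactly as in the exact-oracle proof. The only subtlety, and the place I would double-check carefully, is that the thresholds $\mu$ in the $G_1$-selection rule and in the binary-search stopping condition are \emph{both} necessary to close the approximation gap on both endpoints of the round; the rest of the argument is a routine transcription of Claim~\ref{claim:surplusBound}.
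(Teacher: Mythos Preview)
Your proof is correct and follows essentially the same route as the paper: both use the identity $|\Oracle(\vecp)| = -2\sum_{s_i<0}s_i$, argue via the $\mu$-threshold that $G_1$ contains only goods with strictly positive exact surplus (so negative-surplus goods lie in $G_2$), use WGS on $G_2$ together with the binary-search stopping rule $\min_{i\in G_1}\ts'_i\ge\mu$ to conclude no new negative surpluses appear, and then chain the two inequalities on the negative-index sums. Your write-up is slightly more explicit than the paper's about why $\ts_{i_k}>\mu$ and about the containment $\{i:\Oracle(\vecp_t)_i<0\}\subseteq\{i:\Oracle(\vecp_{t-1})_i<0\}$, but the argument is the same.
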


\begin{proof}
  Let $d_i$ be the exact demand for good $i$ under price $\vecp_0$, then $|\Oracle(\vecp_0)| = \sum_i|d_i - 1| \leq \sum_id_i + m = 2m$. Next, by the criteria to define $G_1$ and $G_2$ in each round, we have $\{i \mid \Oracle(\vecp_{t-1})_i < 0\} \subseteq G_2$: To see this, observe that the surplus resulting from the approximate $\AOracle(\vecp_{t-1})$ differs by at most an additive $\mu = \varepsilon/(R_1m^7)$, so a good $i$ with $\Oracle(\vecp_{t-1})_i < 0$ will always be classified in $G_2$ with respect to $\AOracle(\vecp_{t-1})_i$.
  
  During round $t$, only prices of goods in $G_1$ are increased. By the WGS property, we know $\Oracle(\vecp_{t})_i \geq \Oracle(\vecp_{t-1})_i$ for every $i \in G_2$. Further, note that $\min\{\Oracle(\vecp_t)_i \mid i \in G_1\} \geq 0$ since $\min\{\AOracle(\vecp_t)_i \mid i \in G_1\} \geq \mu$. Hence, we do not introduce any new negative surplus in $\Oracle(\vecp_{t})$. Thus, we have 
{\small
$$|\Oracle(\vecp_{t-1})| = -2\sum_{\Oracle(\vecp_{t-1})_i < 0}\Oracle(\vecp_{t-1})_i \geq -2\sum_{\Oracle(\vecp_{t})_i < 0}\Oracle(\vecp_{t-1})_i \geq
-2\sum_{\Oracle(\vecp_{t})_i < 0}\Oracle(\vecp_{t})_i = |\Oracle(\vecp_{t})|\enspace.$$}
\end{proof}

The next two claims bound the range of prices we encounter, which is important for showing that we approach the unique market equilibrium. 

\begin{claim}\label{append:claim:price1}
  Throughout the run of \AlgExchange, every good with negative surplus has price 1. Hence, there will be at least one good whose price remains 1.
\end{claim}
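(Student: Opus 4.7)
The plan is to mirror the proof of Claim~\ref{claim:price1} by induction on the round index $t$, but thread the oracle slack $\mu$ carefully through each step. The heart of the argument is to re-establish the three ``simple facts'' of the exact-oracle proof in their $\AOracle$-analogues. First, every good $j$ with $\Oracle(\vecp_{t-1})_j < 0$ is forced into $G_2$ in round $t$: since $|\ts_j - s_j|\leq \mu$ one has $\ts_j<\mu$, and a brief unpacking of line~7 of \AlgExchange shows that any good whose approximate surplus is below $\mu$ must lie after the cut-off. Concretely, if $\ts_{i_\ell}<\mu$ then by monotonicity of the sort every later coordinate is also below $\mu$, so the criterion $\ts_{i_{k+1}}\leq \mu$ is already satisfied at $k=\ell-1$, forcing $k\leq \ell-1$ and therefore $i_\ell\in G_2$. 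Second, for $j\in G_2$ the price is untouched, so $p_{t,j}=p_{t-1,j}$, and WGS applied to the update (which only raises prices on $G_1$) gives $\Oracle(\vecp_{t-1})_j \leq \Oracle(\vecp_t)_j$. Third, every $j\in G_1$ satisfies $\ts'_j\geq\mu$ by the binary-search termination condition in line~9, and so $\Oracle(\vecp_t)_j \geq \ts'_j - \mu \geq 0$.

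With these three facts the induction runs exactly as in Claim~\ref{claim:price1}. The base case $t=0$ is immediate since $\vecp_0=(1,\ldots,1)$. Assuming the statement through round $t-1$, pick any $j$ with $\Oracle(\vecp_t)_j<0$. The third fact rules out $j\in G_1$ in round $t$, hence $j\in G_2$ and $p_{t,j}=p_{t-1,j}$ by the second fact, which also yields $\Oracle(\vecp_{t-1})_j\leq \Oracle(\vecp_t)_j<0$. The induction hypothesis then gives $p_{t-1,j}=1$, so $p_{t,j}=1$.

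For the ``hence'' half I would invoke Walras' identity $\sum_i \Oracle(\vecp_t)_i = 0$ together with the loop condition. Whenever the algorithm has not terminated, $\|\AOracle(\vecp_t)\|^2 \geq \varepsilon'^2$. Since each coordinate of $\AOracle(\vecp_t)$ is within $\mu$ of $\Oracle(\vecp_t)$, one has $\|\AOracle(\vecp_t)\|^2 \leq (\|\Oracle(\vecp_t)\|_2 + \sqrt m\,\mu)^2$, and the parameter choices $\mu=\varepsilon/(R_1 m^7)$ against $\varepsilon'=\varepsilon/(2\sqrt m)$ then force $\Oracle(\vecp_t)\neq 0$. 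Walras' identity thus produces some coordinate with strictly negative surplus, and the first half of the claim pins its price at $1$. The only step requiring real care is the threshold argument in the first fact above, since everything else transfers from the exact-oracle proof essentially verbatim; getting that classification argument fully rigorous is where I would spend the bulk of the writing effort.
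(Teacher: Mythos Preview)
Your proposal is correct and follows the same approach as the paper: both rest on the same three facts (goods with negative exact surplus land in $G_2$; $G_2$-surpluses weakly increase by WGS; $G_1$-surpluses end non-negative via the $\ts'_i \geq \mu$ termination condition), and both finish the ``hence'' clause via Walras' identity. The paper's proof merely asserts these facts as ``direct consequences of the conditions used to classify goods based on $\AOracle(\vecp_t)$'' without spelling out the $\mu$-slack threshold argument or the induction, so your more detailed treatment is a faithful elaboration rather than a different route.
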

\begin{proof}
  Observe the following three simple facts about the surplus $\Oracle(\vecp_t)$ resulting from exact demands: (1) Throughout the algorithm we never increase the price of any good with negative surplus. (2) The surplus of any good does not change from non-negative to negative. (3) For any non-equilibrium price vector, there will always be a good with negative surplus. These facts are direct consequences of the conditions used to classify goods based on $\AOracle(\vecp_t)$ in the algorithm. Together they prove the claim.
\end{proof}

\begin{claim}\label{append:claim:priceUpper}
  In \AlgExchange, for any $t \geq 0$, all prices in $\vecp_t$ are bounded by $2^{D_1}$.
\end{claim}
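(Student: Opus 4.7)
The plan is to follow the proof of Claim~\ref{claim:priceUpper} closely, replacing exact quantities by their approximate counterparts and using the $\mu$-guarantee of $\AOracle$ where needed. Let $\vecp^*$ be the equilibrium guaranteed by Assumption~\ref{asp1}, so that $1 \leq \vecp^*_i \leq 2^{D_1}$ for every $i$. By induction on $t$, I would prove $\vecp_t \leq \vecp^*$ pointwise; the desired bound then follows. For the inductive step, suppose $t$ is the smallest index violating the bound, and set $S = \{i : (\vecp_t)_i > \vecp^*_i\}$. Since $\vecp_{t-1} \leq \vecp^*$ and round $t$ only scales the prices of $G_1$ by some $x > 1$, we have $S \subseteq G_1$.

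The first step is to derive non-negativity of the exact surplus on $G_1$, the approximate analogue of the strict positivity used in Claim~\ref{claim:priceUpper}. The binary search terminates with $\ts'_i \geq \mu$ for every $i \in G_1$, and combining this with the oracle guarantee $|\ts'_i - \Oracle(\vecp_t)_i| \leq \mu$ yields $\Oracle(\vecp_t)_i \geq 0$ for every $i \in G_1$, and in particular for every $i \in S$. Note that this is only a weak inequality rather than the strict one available in the exact setting; the loss of strictness will be the main technical obstacle at the end.

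I then mirror the WGS chain of price modifications of Claim~\ref{claim:priceUpper}. First, raise each $(\vecp_t)_j$ to $\vecp^*_j$ for $j \notin S$ (a legitimate increase by the definition of $S$), obtaining an intermediate vector $\vecp$. WGS gives $\Oracle(\vecp)_i \geq \Oracle(\vecp_t)_i \geq 0$ for $i \in S$, and Walras' law then yields $\sum_{j \notin S} \Oracle(\vecp)_j \leq 0$. Next, lower $\vecp_i$ back down to $\vecp^*_i$ for each $i \in S$; WGS prevents surpluses on $G \setminus S$ from increasing, so $\sum_{j \notin S} \Oracle(\vecp^*)_j \leq 0$. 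Since $\vecp^*$ is an exact equilibrium, this sum equals $0$.

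The hard part is to close the argument without a strict inequality anywhere in the chain. If any of the above inequalities is in fact strict the contradiction is immediate; otherwise every step is tight, which combined with $\Oracle(\vecp)_i \geq 0$ on $S$ and Walras' law forces $\Oracle(\vecp) = \mathbf{0}$, so $\vecp$ is itself an equilibrium. I would then invoke the standard uniqueness (up to positive scaling) of equilibrium prices in WGS markets with unique demands: since $\vecp$ and $\vecp^*$ agree on the nonempty set $G \setminus S$---nonempty because Claim~\ref{append:claim:price1} furnishes a good with price $1 \leq \vecp^*_j$ which cannot lie in $S$---the scaling factor must be $1$, so $\vecp = \vecp^*$. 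This contradicts $\vecp_i > \vecp^*_i$ for $i \in S$, completing the induction.
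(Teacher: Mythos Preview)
Your argument follows exactly the paper's WGS chain of price modifications. The only divergence is that the paper simply asserts $\Oracle(\vecp_t)_i > 0$ strictly for $i \in G_1$ (``by our classification based on $\AOracle$, it is easy to see\ldots'') and obtains the contradiction directly, whereas you correctly note that $\ts'_i \geq \mu$ together with the oracle guarantee only yields $\Oracle(\vecp_t)_i \geq 0$, and you patch the borderline case via uniqueness of WGS equilibrium prices up to scaling. Your patch is sound: the per-coordinate WGS bound you already invoke when lowering prices on $S$, combined with $\Oracle(\vecp^*)_j=0$ for $j\notin S$, gives $\Oracle(\vecp)_j\geq 0$ for all $j$, and Walras' law then forces $\Oracle(\vecp)=\mathbf{0}$; moreover $G\setminus S\neq\emptyset$ already follows from $S\subseteq G_1\subsetneq G$, so the appeal to Claim~\ref{append:claim:price1} is not even needed. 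In short, your proof is the paper's proof with the strictness issue handled more carefully.
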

\begin{proof}
  Let $\vecp^*$ be equilibrium prices according to Assumption~\ref{asp1}. We show that for any $t \geq 0$, $\vecp_t$ is always pointwise smaller than $\vecp^*$. Assume that this is not true, let $t$ be the smallest value such that there exists $(\vecp_{t})_i > \vecp^*_i$ for some $i$. Note that according to the algorithm, we have $\vecp_t = \UpdatePrice(\vecp_{t-1}, x, G_1)$ for some $x > 1$ and $G_1 \subseteq [m]$. Further, by our classification based on $\AOracle$, it is easy to see that $\Oracle(\vecp_t)_i > 0$ for any $i \in G_1$. This means from $\vecp_{t-1}$ to $\vecp_t$, only prices of goods in $G_1$ are increased. Let $S = \{i \mid (\vecp_t)_i > \vecp^*_i\}$, then we have $S \subseteq G_1$.

  Next, we apply a sequence of price changes to $\vecp_t$. First, for every $j \notin S$ we increase $(\vecp_t)_j$ to $\vecp^*_j$. Let $\vecp$ be the new price vector and consider the surplus $\Oracle(\vecp)$ resulting from exact demands. By the WGS property of the market, the surplus of any good in $S$ will not decrease, hence we still have $\Oracle(\vecp)_i > 0$ for every $i \in S$. The sum of all surpluses in an exchange market is always 0, so $\sum_{j \notin S}\Oracle(\vecp)_j < 0$.

  Now we decrease the price of every good $i \in S$ from $\vecp_i$ to $\vecp^*_i$. Then $\vecp$ becomes exactly $\vecp^*$. This process will not increase surplus of any good $j \not\in S$. Thus we still have $\sum_{j \notin S}\Oracle(\vecp^*)_j < 0$. This contradicts the assumption that $\vecp^*$ are prices of a market equilibrium.
\end{proof}
At this point, let us recall Claim~\ref{claim:xp} to establish the relation between the surplus of a good before and after a multiplicative price update step. It does not involve the approximate oracle.

\medskip

{\noindent \bf Claim~\ref{claim:xp}}
{\it For any price vector $\vecp$, $x > 1$ and $S \subseteq [m]$, $\Oracle(\UpdatePrice(\vecp, x, S))_i \leq x \cdot \Oracle(\vecp)_i$ for any $i \in S$.}

\medskip

Next, we establish a statement about the surpluses at the end of each round, which was not necessary for the version with exact oracles and precision. Intuitively, we increase the prices of goods in $G_1$ until the minimum surplus in $G_1$ reaches the maximum surplus in $G_2$ or 0. Note that $\mu$ is very small and can be thought of as 0. The main complication here is that we need to work with $\mu$-approximation demands in the algorithm and the resulting surpluses $s'$.

\begin{claim} \label{append:claim:minG1}
%  If $x < \xmax$ at the end of round $t$ in \AlgExchange, then
  At the end of each round in \AlgExchange, $\min\{\ts'_i \mid i \in G_1\} \le \max\{\{\ts'_i \mid i \in G_2\}\cup
  \{\mu\}\} + 6\mu$.
\end{claim}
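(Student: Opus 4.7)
The plan is to argue by contradiction. Suppose that at the end of some round the returned $x \in \mathcal{P}$ satisfies $\min\{\ts'_i \mid i \in G_1\} > \max\{\{\ts'_i \mid i \in G_2\} \cup \{\mu\}\} + 6\mu$. The goal is to exhibit some $x' \in \mathcal{P}$ with $x' > x$ that still passes the binary-search test in \AlgExchange, contradicting the choice of $x$ as the largest such value.

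First I would identify a candidate $x' > x$ in $\mathcal{P}$ whose distance to $x$ is bounded by $1/\Delta$. By Claim~\ref{append:claim:priceUpper} any $x$ considered by the algorithm is at most $2^{D_1}$, and with the chosen $\Delta = 2^{D_1+D_2+\log m}/\mu$ one can show, via a Farey-type density argument for the grid $\mathcal{P} = \{a/b : 1 \le b < a \leq \Delta\}$, that such an $x'$ is available in the relevant price range. Equivalently, one can maintain a matching upper candidate during the binary search that provably fails the condition and sits within $1/\Delta$ of the returned $x$. This is the main technical obstacle, since the successor gap in $\mathcal{P}$ can a priori be large near fractions with small denominators; the proof must exploit $\Delta \gg 2^{D_1}$ to rule this out in the range of interest.

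With such an $x'$ in hand, Assumption~\ref{asp2} controls the change of the true surpluses. Between $x$ and $x'$ only prices of goods in $G_1$ change, each by at most $(x'-x)\cdot 2^{D_1} \leq 2^{D_1}/\Delta$. Hence, for every good $i$,
\[
|s'_i(x) - s'_i(x')| \;\leq\; \sum_{j \in G_1} \left|\frac{\partial s_i}{\partial p_j}\right| \cdot |\Delta p_j| \;\leq\; m \cdot 2^{D_2} \cdot \frac{2^{D_1}}{\Delta} \;=\; \mu.
\]
Combining with the $\mu$-accuracy of $\AOracle$ at both $x$ and $x'$, a triangle inequality then yields $|\ts'_i(x) - \ts'_i(x')| \leq 3\mu$ for every $i$.

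Under the assumed gap exceeding $6\mu$ at $x$, this approximate-surplus bound forces
\[
\min_{i \in G_1} \ts'_i(x') \;\geq\; \min_{i \in G_1} \ts'_i(x) - 3\mu \;>\; \max\left(\{\ts'_i(x) \mid i \in G_2\} \cup \{\mu\}\right) + 3\mu \;\geq\; \max\left(\{\ts'_i(x') \mid i \in G_2\} \cup \{\mu\}\right),
\]
so the binary-search condition holds at $x' > x$, contradicting the maximality of $x$ in $\mathcal{P}$. The hard part of the full proof is the successor-gap analysis in $\mathcal{P}$ alluded to above; once that is handled, the continuity-plus-oracle-error chain used in the second and third paragraphs is routine.
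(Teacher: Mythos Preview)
Your approach is essentially the paper's: pass to the immediate successor $x^+\in\mathcal{P}$ of the returned $x$, bound the change in the true surpluses via Assumption~\ref{asp2}, add the $\mu$ oracle error at both endpoints to get $|\ts^+_i-\ts'_i|\le 3\mu$, and chain the inequalities. The paper does this directly (using that the binary-search condition fails at $x^+$) rather than by contradiction, but the logic is identical.

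The one notable difference is emphasis. You flag the successor-gap bound in $\mathcal{P}$ as ``the main technical obstacle'' requiring a Farey-type argument; the paper simply asserts $x^+ < x + 1/\Delta$ and moves on. Your caution is not misplaced---that literal inequality is false for $x>1$ (near $x$ the gap in $\mathcal{P}$ is $\Theta(x/\Delta)$, not $1/\Delta$)---but it is also not the hard part you make it out to be. Since $x\cdot(\vecp_{t-1})_j=(\vecp_t)_j\le 2^{D_1}$ by Claim~\ref{append:claim:priceUpper}, the extra factor of $x$ in the gap cancels against the $1/x$ hidden in $(\vecp_{t-1})_j$, and the price-change bound $(x^+-x)\,(\vecp_{t-1})_j = O(2^{D_1}/\Delta)$ that both you and the paper use goes through up to a harmless constant. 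So the ``hard part'' dissolves once you track the right quantity; no genuine Farey analysis is needed.
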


\begin{proof}
  According to the binary search procedure, we know that if we increase prices in $G_1$ by a factor of $x$, then $\ts'$ satisfies the condition $\min\{\ts'_i \mid i \in G_1\} \geq \max\{\{\ts'_i \mid i \in G_2\}\cup \{\mu\}\}$. Furthermore, an increase by $x^+= \min\{ y \in \mathcal{P} \mid y > x\} < x+\frac1{\Delta}$ would result in a surplus vector that does not satisfy this condition. Let $\vecs^+ = \Oracle(\UpdatePrice(\vecp_{t-1}, x^+, G_1))$. By Assumption~\ref{asp2}, we have
  $$|\ts^+_i - \ts'_i| \leq |s^+_i - s'_i|+2\mu < 2^{D_2}\cdot (x^+-x)|\vecp_{t-1}|+2\mu \le \frac{2^{D_2}|\vecp_{t-1}|}{\Delta}+2\mu \leq \frac{2^{D_2+D_1+\log{m}}}{\Delta}+2\mu = 3\mu$$
  for every $i$, where the last inequality is derived by Claim~\ref{append:claim:priceUpper}. Thus
\begin{eqnarray*}
   \min\{\ts'_i \mid i \in G_1\}  <  \min\{\ts^+_i \mid i \in G_1\} + 3\mu 
    & < & \max\{\{\ts^+_i \mid i \in G_2\} \cup \{\mu\}\} + 3\mu \\
    & < & \max\{\{\ts'_i \mid i \in G_2\} \cup \{\mu\}\} + 6\mu
  \end{eqnarray*}
\end{proof}

We are now ready for the key lemma in the proof of the main result -- the multiplicative decrease of the potential function at the end of each round. Let $R_2$ be a sufficiently large constant to be explained in the end of the proof of the following lemma.

\begin{lemma}\label{append:lemma:potential}
  If $x < 1+\frac{1}{R_2m^3}$ at the end of round $t$ in \AlgExchange, 
  then $\Phi(\vecp_t) \leq \Phi(\vecp_{t-1}) \left(1-\Omega\left(\frac{1}{m^3}\right)\right)$.
\end{lemma}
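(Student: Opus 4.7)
The plan is to adapt the proof of Lemma~\ref{lemma:potential} to the approximate-oracle setting, using Claim~\ref{append:claim:minG1} to control the gap left by the binary search and the $\mu$-approximation guarantee to translate between the algorithm's observed $\tvecs,\tvecs'$ and the exact $\vecs,\vecs'$ that define $\Phi$. Throughout I would set $\vecs = \Oracle(\vecp_{t-1})$ and $\vecs' = \Oracle(\vecp_t)$, so that $|s_i-\ts_i|\le\mu$ and $|s'_i-\ts'_i|\le\mu$ for every $i$.

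First I would reprove the two structural inequalities that drive Lemma~\ref{lemma:potential} at the exact-surplus level. The partition rule operates on $\tvecs$ and yields $\ts_{i_j} > (1+1/m)\ts_{i_{j+1}}$ within $G_1$, which chains to $\ts_{i_1} < e\,\ts_{i_k}$ and $\ts_{i_k}-\ts_{i_{k+1}} > \ts_{i_1}/(e(m+1))$. Bounding $\|\tvecs\|^2$ against $\|\vecs\|^2 = \Phi(\vecp_{t-1})$ with additive slack $O(m\mu\|\vecs\|+m\mu^2)$ yields the analogues $s_{i_k}^2 = \Omega(\Phi(\vecp_{t-1})/m)$ and $(s_{i_k}-s_{i_{k+1}})^2 = \Omega(\Phi(\vecp_{t-1})/m^3)$, provided $\mu \ll \sqrt{\Phi(\vecp_{t-1})/m}$. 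This smallness is forced by the stopping criterion (the algorithm is still running, so $\|\tvecs\|^2 \ge \varepsilon'^2 = \varepsilon^2/(4m)$) together with $\mu = \varepsilon/(R_1 m^7)$, once $R_1$ is sufficiently large relative to constants.

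Next I would split into the same two cases as before, now triggered by whether $\max\{\ts'_i \mid i \in G_2\} \le \mu$ or $>\mu$. In Case~1, Claim~\ref{append:claim:minG1} together with $\mu$-closeness yields $s'_{i_k} \le O(\mu)$, so the contribution of good $i_k$ to $\Phi(\vecp_t)$ collapses to $O(\mu^2)$; meanwhile Claim~\ref{claim:xp} bounds every other $s'_i$ ($i \in G_1$) by $x\,s_i$ and WGS gives $|s'_j|\le|s_j|$ for $j\in G_2$ with $s_j<0$. Combining these as in Lemma~\ref{lemma:potential} Case~1 gives $\Phi(\vecp_t) \le (1+1/(R_2 m^3))^2(\Phi(\vecp_{t-1})-s_{i_k}^2)+O(m\mu^2)$, which simplifies to $(1-\Omega(1/m))\Phi(\vecp_{t-1})$. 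In Case~2, I define $\delta_i = x s_i - s'_i$ for $i\in G_1$ and $\delta_j = s'_j - s_j$ for $j\in G_2$, and follow (\ref{eq:line1})--(\ref{eq:line7}); the only change is that the weakened closing inequality $\min\{s'_i \mid i \in G_1\} \le \max\{s'_j \mid j \in G_2\} + O(\mu)$ forces $\sum_{i\in G_1}\delta_i \ge \tfrac{1}{2}(s_{i_k}-s_{i_{k+1}}) - O(m\mu)$ in place of the exact half-gap bound.

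The main obstacle is ensuring that none of the new $\mu$-error terms -- $O(m\mu^2)$, $O(m\mu\,s_{i_k})$, or $O(m\mu(s_{i_k}-s_{i_{k+1}}))$ -- overwhelms the $\Omega(\Phi(\vecp_{t-1})/m^3)$ gain established in each case. This is bookkeeping that is satisfied uniformly when $R_1$ is chosen large enough relative to $R_2$: because $\Phi(\vecp_{t-1}) \ge \varepsilon'^2$ and $\mu = \Theta(\varepsilon/(R_1 m^7))$, every such error term is at most $O(\Phi(\vecp_{t-1})/(R_1 m^4))$, dwarfed by the target $\Omega(\Phi(\vecp_{t-1})/m^3)$ decrease once $R_1$ is fixed. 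Choosing $R_1, R_2$ accordingly yields the stated $(1-\Omega(1/m^3))$ factor.
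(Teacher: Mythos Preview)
Your approach is correct and follows the same two-case skeleton as the paper. The one methodological difference is that the paper carries out the entire computation in terms of the \emph{approximate} surpluses $\tvecs,\tvecs'$ (effectively treating $\Phi(\vecp_t)=\|\tvecs'\|^2$), absorbing the oracle error through the relations $\ts'_i \le x\ts_i+(1+x)\mu$, $\ts'_j\ge\ts_j-2\mu$, and $|\sum_i\ts_i|\le m\mu$. You instead translate once to the exact $\vecs,\vecs'$ and run the clean $\sum_i s_i=0$ argument of Lemma~\ref{lemma:potential}, paying $O(\mu)$ slack only at the points where you invoke a $\tvecs$-based fact (the partition rule, Claim~\ref{append:claim:minG1}). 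Both routes produce the same $O(\Phi/m^3)$ gain versus $O(\mu\sqrt{\Phi})$ error terms; yours is arguably cleaner because the $\delta$-balance identity holds exactly.

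One small imprecision: in Case~1 you write ``$s'_{i_k}\le O(\mu)$'', but Claim~\ref{append:claim:minG1} only gives $\min_{i\in G_1}s'_i\le O(\mu)$, and the minimizer need not be $i_k$ after the price update. This does not affect your stated bound $\Phi(\vecp_t)\le(1+1/(R_2m^3))^2(\Phi(\vecp_{t-1})-s_{i_k}^2)+O(m\mu^2)$: if $i_\ell\in G_1$ is the actual minimizer, drop its square (which is $O(\mu^2)$) and bound the rest by $x^2\sum_{j\ne\ell}s_{i_j}^2\le x^2(\sum_{j\le k}s_{i_j}^2-s_{i_\ell}^2)$; then use $s_{i_\ell}\ge s_{i_k}-2\mu$ (since the ordering is on $\tvecs$) to recover $-s_{i_k}^2$ up to an $O(\mu s_{i_k})$ term, which is again negligible.
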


\begin{proof}
  We use the following notation. Let $\vecs = \Oracle(\vecp_{t-1})$, $\tvecs = \AOracle(\vecp_{t-1},
  \mu)$ and $\vecs' = \Oracle(\vecp_t)$, $\tvecs' = \AOracle(\vecp_t,
  \mu)$.
	The intuition of the proof is similar to the version with exact precision. By the conditions used to define $G_1$ and $G_2$, we always have $\tilde{s}_{i_k} \geq \tilde{s}_{i_1} /e$ and $\tilde{s}_{i_k} - \tilde{s}_{i_{k+1}} > \tilde{s}_{i_k}/(m+1) \geq \tilde{s}_{i_1} / e(m+1)$. Hence, roughly speaking, every good in $G_1$ has reasonably large surplus, and there is a reasonably large gap between the surpluses in $G_1$ and $G_2$. Next, at the end of the current round, we decreased the minimum surplus of a good in $G_1$ to either $\min\{\tilde{s}'_i \mid i \in G_1 \} \approx \mu$ (Case (1) below) or $\min\{\tilde{s}'_i \mid i \in G_1 \} \approx \max\{\tilde{s}'_i \mid i \in G_2\}$ (Case (2) below). In both cases, the total value of $\Phi$ must decrease by a factor of $1 - \Omega(1/m^3)$.

  More formally, if the algorithm proceeds to round $t$, then $\|\tvecs\| > \varepsilon'^2$. By the definition of set $G_1$, we have $\ts_{i_1} \leq (1+\frac1m)\ts_{i_2} \leq \cdots \leq (1+\frac1m)^{k-1}\ts_{i_k} < e\cdot \ts_{i_k}$. Hence, $\ts^2_{i_k} > (\ts_{i_1}/e)^2 \geq \Phi(\vecp_{t-1})/(me^2) > (\varepsilon'/(\sqrt{m}e))^2$, so the surpluses of goods in $G_1$ are similar up to a factor of $e$ and bounded from below. Also we have $(\tilde{s}_{i_k} - \tilde{s}_{i_{k+1}})^2 > (\tilde{s}_{i_1} / e(m+1))^2 \geq \Phi(\vecp_{t-1})/(e^2(m+1)^2m)$.

Since we rely on an approximate demand oracle, the surpluses of goods in $G_1$ might not change in a monotone fashion when increasing their prices. Nevertheless, we can relate the surplus in the beginning and the end of a round as follows. For every $i \in G_1$, by Claim~\ref{claim:xp}, the surplus from exact demands satisfies $s'_i \leq x\cdot s_i$. Thus 
  $$\ts'_i \; \leq \; s'_i + \mu \; \leq \; xs_i + \mu \; \leq \; x(\ts_i + \mu) + \mu \; = \; x\ts_i + (1+x)\mu.$$
Since $x < 1+\frac1{R_2m^3}$, it holds that $(1+x)\mu < 3\mu \leq \ts_i / (R_2m^3)$. This means the increase within a round is bounded by $\ts'_i < (1+\frac{2}{R_2m^3})\ts_i$. 
Since we do not touch the price of any good $j \in G_2$, the WGS property implies for exact demands $s'_j \geq s_j$. Hence $\ts'_j \geq \ts_j - 2\mu$. 
  
Now, in order to bound the change of $\Phi(\vecp_t)$, we consider $\tvecs'$ according to $G_1$ and $G_2$. We distinguish two cases.

\paragraph{Case 1: $\max\{\ts'_i \mid i \in G_2\} < \mu$} 
Intuitively, in this case the algorithm has decreased the surplus of some good in $G_1$ to approximately 0 (recall that $\mu$ is sufficiently small). This decrease alone brings down the potential function $\Phi$ by a factor of $1 - \Omega(1/m)$. All other surpluses will cause an increase by a factor of at most $1+O(1/m^3)$.

More formally, Claim~\ref{append:claim:minG1} gives us $\mu < \min\{\ts'_i \mid i \in G_1\} < 7\mu$. Hence, the contribution of goods of $G_1$ to $\Phi(\vecp_t)$ can be upper bounded by
{\small
$$\sum_{j=1}^k\ts'^2_{i_j} < \sum_{j=1}^{k-1}\left(1+\frac2{R_2m^3}\right)^2\ts^2_{i_j}+ 49\mu^2.$$}
Furthermore, for every $i \in G_2$, if $-m^3\mu \leq \ts'_i \leq \mu$, we have $\ts'^2_i \leq m^6\mu^2$, and if $\ts'_i < -\mu$, by the WGS property of the market, we know $s_i \leq s'_i \leq \ts'_i + \mu < 0$. Thus, since $\ts'_j \geq \ts_j - 2\mu$,
{\small
$$ \sum_{\substack{j \in G_2\\ \ts'_j < -m^3\mu}} \ts'^2_j  
%\leq \sum_{\substack{j \in G_2\\ \ts'_j < -m^3\mu}} (s'_j-\mu)^2 
%\leq \sum_{\substack{j \in G_2\\ \ts'_j < -m^3\mu}} (s_j-\mu)^2 
\leq \sum_{\substack{j \in G_2\\ \ts'_j < -m^3\mu}} (\ts_j-2\mu)^2 
\leq \sum_{\substack{j \in G_2\\ \ts'_j < -m^3\mu}} \left(1+\frac{2}{R_2m^3}\right)^2\ts^2_j. $$}
Hence, the contribution of goods of $G_2$ to $\Phi(\vecp_t)$ can be upper bounded by
{\small
\begin{eqnarray*}
 \sum_{j=k+1}^m \ts'^2_{i_j}& \leq & \sum_{j=k+1}^m\max\left\{\left(1+\frac{2}{R_2m^3}\right)^2\ts^2_{i_j}, m^6 \mu^2\right\} \; < \; \sum_{j=k+1}^m\left(1+\frac{2}{R_2m^3}\right)^2\ts^2_{i_j} + m^6\mu^2.
\end{eqnarray*}}
Combining the two parts 
{\small
\begin{eqnarray*}
	\Phi(\vecp_t) = \sum_{j=1}^m\ts'^2_{i_j} & < & \sum_{j\neq k}\left(1+\frac{2}{R_2m^3}\right)^2\ts^2_{i_j}+(m^6+49)\mu^2\\
    &=&\left(1+\frac{2}{R_2m^3}\right)^2(\Phi(\vecp_{t-1})-\ts^2_{i_k})+(m^6+49)\mu^2\\
    &<&\left(1+\frac{2}{R_2m^3}\right)^2\left(1-\frac{1}{e^2m}\right)\Phi(\vecp_{t-1})+\left(\frac{4}{R_1^2m^7} + \frac{196}{R_1^2m^{13}}\right)\varepsilon'^2\\
    &<&\left(1-\frac{1}{2e^2m}\right)\Phi(\vecp_{t-1})
\end{eqnarray*}}
where the last inequality holds for any $m \ge 2$ with sufficiently large constants $R_1, R_2$.

\paragraph{Case 2: $\max\{\ts'_i \mid i \in G_2\} \ge \mu$} 
Intuitively, in this case the gap between surpluses in $G_1$ and $G_2$ decreases to approximately 0. Below we show that the closing this gap yields a decrease of the potential function $\Phi$ by a factor of $1 - \Omega(1/m^3)$. All other surpluses will increase by a factor of at most $1+O(1/m^3)$. In combination, it turns out that $\Phi$ will decrease by a factor of $1 - \Omega(1/m^3)$.

More formally, in this case $\min\{\ts'_i \mid i \in G_1\} \geq \max\{\ts'_j \mid j \in G_2\}$. Let $s_{G_1} = \min\{\ts'_i \mid i \in G_1\}$ and $s_{G_2}=\max\{\ts'_j \mid j \in G_2\}$.
For every $i \in G_1$, let $\ts'_i = x'\ts_i - \delta_i$ where $x'=(1+\frac2{R_2m^3})$, and for every $j \in G_2$, let $\ts'_j = \ts_j - 2\mu + \delta_j$. Hence $\delta_i, \delta_j \geq 0$ for all $i, j$. Further, we have $|\sum_{i=1}^m \ts_i| \leq m\mu$ and $|\sum_{i=1}^m \ts'_i| \leq m\mu$ , hence
\begin{eqnarray*}
\sum_{i \in G_1}\delta_i & = & \sum_{i \in G_1} x'\ts_i - \sum_{i \in G_1}\ts'_i \\
& \geq & \sum_{i \in G_1} \ts_i + \sum_{j \in G_2}\ts'_j - m\mu = \sum_{i \in G_1} \ts_i + \sum_{j \in G_2} (\ts_j + \delta_j - 2\mu) - m\mu \\
& \geq & \sum_{j \in G_2}\delta_j -4m\mu
\end{eqnarray*}
and 
\vspace{-0.4cm}
$$\sum_{i\in G_1}\delta_i \geq \frac12(\ts_{i_k}-\ts_{i_{k+1}}-4m\mu) \geq \frac1{4}(\ts_{i_k}-\ts_{i_{k+1}}).$$
Now we have 
{\small 
\begin{eqnarray}
	\Phi(\vecp_t) & = & \sum_i\ts'^2_i = \sum_{i \in G_1}(x'\ts_i - \delta_i)^2 + \sum_{j \in G_2}(\ts_j - 2\mu + \delta_j)^2 \nonumber\\
    & = & \left(\sum_{i \in G_1}x'^2\ts^2_i + \sum_{j \in G_2}(\ts_j-2\mu)^2\right) + \left(\sum_{j \in G_2}\delta_j(\ts_j-2\mu+\delta_j)-\sum_{i \in G_1}\delta_i(x'\ts_i-\delta_i)\right) - \sum_{i \in G_1}x'\ts_i\delta_i \nonumber \\
    & & \quad + \sum_{j \in G_2}\delta_j(\ts_j-2\mu) \label{append:eq:line1}\\
    & < & \left(x'^2\Phi(\vecp_{t-1}) -4\mu\sum_{j\in G_2}\ts_j + 4m\mu^2\right) + \left(s_{G_2}\sum_{j \in G_2}\delta_j - s_{G_1}\sum_{i \in G_1}\delta_i\right) - \ts_{i_k}\sum_{i \in G_1}\delta_i \nonumber \\
    & & \quad + (\ts_{i_{k+1}}-2\mu)\sum_{j \in G_2}\delta_j \label{append:eq:line2}\\   %
    & < & x'^2\Phi(\vecp_{t-1}) -4\ts_mm\mu + 4m\mu^2 + 4s_{G_1}m\mu + 4\ts_{i_{k+1}}m\mu - (\ts_{i_k} - \ts_{i_{k+1}})\sum_{i \in G_1}\delta_i\label{append:eq:line3}\\
    %
    % since s_{G_1} >= \ts_i_k+1:
    %
    % & < & x'^2\Phi(\vecp_{t-1}) + 4m\mu^2 + 8s_{G_1}m\mu - (\ts_{i_k} - \ts_{i_{k+1}})\sum_{i \in G_1}\delta_i \label{append:eq:line4}\\
    %
    % since s_{G_1} < 2m:
    %
    & < & x'^2\Phi(\vecp_{t-1}) + 4m\mu^2 + 24m^2\mu - \frac{1}{4}(\ts_{i_k}-\ts_{i_{k+1}})^2 \label{append:eq:line5}\\
    & < & \left(1+\frac4{R_2m^3}+\frac4{R_2^2m^6} + \frac{16}{R_1^2m^{12}} + \frac{96}{R_1m^4} -\frac1{4e^2(m+1)^2m}\right)\Phi(\vecp_{t-1}) \label{append:eq:line6}\\
    & = & \left(1-\Omega\left(\frac1{m^3}\right)\right)\Phi(\vecp_{t-1}) \label{append:eq:line7}
\end{eqnarray}
}
Here \ref{append:eq:line1} can be derived by expanding the quadratic formula and appropriately reorganizing the terms. For the step from \eqref{append:eq:line1} to \eqref{append:eq:line2}, in the first bracket we overestimate the quadratic terms of $\ts$ into $x'^2\Phi(\vecp_{t-1})$ and $|G_2|$ by $m$. In the second bracket, we return to $\ts_i'$ and $\ts_j'$, which in turn are bounded correctly using $s_{G_1}$ for all $i \in G_1$ and $s_{G_2}$ for all $j \in G_2$. For the final two terms in~\eqref{append:eq:line1} and \eqref{append:eq:line2} we use the definition of $\ts_{i_k}$ and $\ts_{i_{k+1}}$ and the fact that $x' > 1$. 
For the step from \eqref{append:eq:line2} to \eqref{append:eq:line3}, for the first bracket of \eqref{append:eq:line2} we use $\ts_j \geq \ts_m$ for every $j \in G_2$. For the second bracket of \eqref{append:eq:line2} we note $\ts_{G_1} \ge \ts_{G_2}$ and the difference between the sums of $\delta$-terms is bounded by $4m\mu$ as noted above. By the same argument, we can transform the last two terms of \eqref{append:eq:line2} as shown. Note that we simply drop $-2\mu\sum_{j \in G_2} \delta_j < 0$. From \eqref{append:eq:line3} to \eqref{append:eq:line5} we use the fact that every surplus is bounded by $2m$ in its absolute value by Claim~\ref{append:claim:surplusBound}. For the last term we use the bound for $\sum_{i \in G_1}\delta_i$ as noted above. From \eqref{append:eq:line5} to \eqref{append:eq:line6}, we just replace $\mu$ by its definition and use the bound $\Phi(\vecp_{t-1}) > \varepsilon'^2$ and $(\tilde{s}_{i_k} - \tilde{s}_{i_{k+1}})^2 > \Phi(\vecp_{t-1})/(e^2(m+1)^2m)$ as shown above.

Finally, using sufficiently large constants $R_1, R_2$, the multiplicative term in \eqref{append:eq:line6} can be decreased to strictly less than 1 for every $m \ge 2$. The final expression in~\eqref{append:eq:line7} captures the asymptotics and proves the lemma.
\end{proof}

The previous lemma shows a decrease in the potential only for rounds in which the $x$ determined by binary search is rather small. Lemma~\ref{lemma:xmax} continues to hold here and bounds the number of rounds with a larger value of $x$. The following variant differs only in the constant $R_2$, and its proof is literally the same as for Lemma~\ref{lemma:xmax}.

\begin{lemma}
	\label{append:lemma:xmax}
	During a run of \AlgExchange, there can be only $O(m^4 D_1)$ many rounds that end with $x \geq 1+\frac1{R_2m^3}$.
\end{lemma}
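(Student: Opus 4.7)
The plan is to essentially reproduce the proof of Lemma~\ref{lemma:xmax} but with the constant $R$ replaced by $R_2$ as required here, since all the ingredients already carry over to the approximate-oracle setting. The key observation is that prices in \AlgExchange\ only increase and stay in the range $[1, 2^{D_1}]$. The lower bound comes from the fact that prices start at $1$ and the algorithm only multiplies them by factors $x > 1$, while the upper bound is precisely Claim~\ref{append:claim:priceUpper}.

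First I would note that in every round the set $G_1$ is nonempty and every good in $G_1$ has its price multiplied by the chosen factor $x$. So if we are counting a round in which $x \geq 1 + \tfrac{1}{R_2 m^3}$, then at least one good experiences such a multiplicative price increase during that round. Next, I would fix any single good $i \in G$ and count how many times its price can grow by a factor of at least $1 + \tfrac{1}{R_2 m^3}$. Since its price is bounded above by $2^{D_1}$ (Claim~\ref{append:claim:priceUpper}) and below by $1$ (Claim~\ref{append:claim:price1}, or simply by the initial value and monotonicity), the number of such multiplications is at most
\[
\log_{1+1/(R_2 m^3)} 2^{D_1} \;=\; O(m^3 D_1),
\]
using $\ln(1 + 1/(R_2 m^3)) = \Theta(1/m^3)$.

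Summing over all $m$ goods yields a total of at most $m \cdot O(m^3 D_1) = O(m^4 D_1)$ rounds in which $x \geq 1 + \tfrac{1}{R_2 m^3}$, which is the desired bound. There is no real obstacle here since Claim~\ref{append:claim:priceUpper} already gives the price range and the approximate oracle does not affect the argument: the counting is purely about how many times prices can be multiplicatively inflated within a bounded interval. The only small care needed is to observe that even with approximate demands, the algorithm only increases the prices of goods in $G_1$ (it never decreases prices), so the monotone-growth counting argument is valid verbatim.
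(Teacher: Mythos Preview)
Your proposal is correct and follows essentially the same approach as the paper: the paper's proof of Lemma~\ref{lemma:xmax} (to which Lemma~\ref{append:lemma:xmax} explicitly defers) also bounds, via Claim~\ref{append:claim:priceUpper}, the number of times any single price can be multiplied by $1+\tfrac{1}{R_2 m^3}$ by $O(m^3 D_1)$ and then sums over the $m$ goods. The only cosmetic difference is that you spell out the charging argument and the lower bound on prices a bit more explicitly than the paper does.
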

%\begin{proof}
%  By Claim~\ref{append:claim:priceUpper}, every price can be increased by a factor of $1+\frac1{R_2m^3}$ at most $O(\log_{1+1/R_2m^3}2^{D_1}) = O(m^3D_1)$ times. 
%  Hence there can be at most $O(m^4D_1)$ many rounds with $x \geq 1+\frac1{R_2m^3}$.
%\end{proof}

Finally, we can assemble the properties to show the main result.

\medskip

{\noindent \bf Theorem~\ref{thm:exchangeWGS}}
{\it For any market that satisfies Assumptions~\ref{asp1} and~\ref{asp2}, and for any $\varepsilon > 0$, \AlgExchange
  returns the price vector of an $(1+\varepsilon)$-approximate market equilibrium in time polynomial in the input size and $\log(1/\varepsilon)$.}

\smallskip

\begin{proof}
  Let $x_t$ be the value of $x$ we find in round $t$ of \AlgExchange. First because at least one price will increase by a factor of $x_t$ in round $t$, by Claim~\ref{append:claim:priceUpper} we have $\prod_tx_t \leq 2^{mD_1}$.
  At the end of round $t$, if $x_t \geq 1+\frac1{R_2m^3}$, let $s=\max\{s_i \mid s_i \in \AOracle(\vecp_{t-1})\}$ 
  and $s'=\max\{s_i \mid s_i \in \AOracle(\vecp_{t})\}$, 
  then by Claim~\ref{append:claim:xp} we have $\Phi(\vecp_t) \leq ms'^2 \leq mx_t^2s^2 \leq mx_t^2\Phi(\vecp_{t-1})$.
  Moreover, by Lemma~\ref{append:lemma:xmax} there will be at most $O(m^4D_1)$ such rounds.
  Hence the total increase of $\Phi(\vecp_t)$ in these rounds will be no more than a factor of 
  $\prod_{x_t \geq 1+1/R_2m^3}{mx_t^2} \leq m^{O(m^4D_1)}2^{2mD_1} = m^{O(m^4D_1)}$.

  For all other rounds, we have $x < 1+\frac1{R_2m^3}$, and by Lemma~\ref{append:lemma:potential}, the potential function is decreased by a factor of $1/(1-\Omega(\frac1{m^3}))$. Therefore the total number of rounds before $\Phi(\vecp_t) \leq \varepsilon'^2$ will be at most 
{\small
  $$O\left(\log_{1/(1-\Omega(\frac1{m^3}))}\frac{m^{O(m^4D_1)}}{\varepsilon'^2}\right) = O\left(D_1m^7\log{m}+m^3\log{\frac1\varepsilon}\right)$$
  }
  In each round, the number of queries to the oracle is no more than $O(\log{\Delta}) = O(D_1+D_2+\log{m}+\log{\frac1{\varepsilon}})$. We conclude that the total number of queries during the algorithm is $O(\left(D_1m^7\log{m}+m^3\log{\frac1\varepsilon}\right)(D_1+D_2+\log{m}+\log{\frac1\varepsilon}))$, which is a polynomial in the input size and $\log(1/\varepsilon)$.
\end{proof}

\section{Exact Equilibrium for Exchange Markets with Spending Constraint Utilities}
 \label{app:rounding}
Here we show how to obtain an exact market equilibrium in exchange spending constraint markets. Using the \AlgSCExact framework we convert the approximate market equilibrium obtained in Section~\ref{subsec:precision} into an exact equilibrium. To achieve this, we rely on the full information of the spending constraint utilities. This step is an extension of the technique developed in~\cite{DuanM15} for the linear exchange markets. However there are several challenges due to the much more involved setting of spending constraint utilities, where the allocated partitions make the remaining budgets of agents and the values of goods dependent on too many parameters. In the following we present how to handle the extra complexity of the problem and this result resolves an open question of~\cite{DuanM15} of finding an exact polynomial time algorithm for exchange markets with spending constraint utilities.

Let $\vecp$ be the price vector of an $(1+\varepsilon)$-approximate equilibrium. We first construct a bipartite graph $EG'(\vecp) = (A \cup G, E)$ where the edges $E$ is a union of equality edges in $EG(\vecp)$, edges due to positive endowments, and edges due to allocated segments. The main idea here is to construct a set of components of agents and goods such that there is no interaction across the components. In addition, we want at least one good with price 1 in every such component. To achieve this latter condition, whenever there is a component $C$ of $EG'(\vecp)$ without a good with price 1, we raise the prices of goods in $C$ by a common factor $x>1$ until a new equality edge appears. By Assumption \ref{a:suff} a new equality edge will always emerge in during this procedure because prices of goods in $C$ are increasing, which makes goods outside $C$ more and more attractive to the agents in $C$. 

The next lemma shows that the updated price vector after the price increase still remains a $(1+\varepsilon)$-approximate equilibrium. 

\begin{lemma}
The price vector $\vecp$ at the end of while loop in \AlgSCExact is a $(1+\varepsilon)$-approximate equilibrium. 
\end{lemma}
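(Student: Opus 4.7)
The plan is to prove the statement by induction on the iterations of the while loop in \AlgSCExact. Assuming that at the start of an iteration $\vecp$ is a $(1+\varepsilon)$-approximate equilibrium witnessed by some allocation $\vecx$, I will argue that the updated price vector $\vecp'$ obtained by scaling the prices of goods in the chosen component $C$ by the common factor $x$ is a $(1+\varepsilon)$-approximate equilibrium witnessed by the \emph{same} allocation $\vecx$. This yields the claim at the end of the loop.

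The first step I would carry out is to establish a separation property for components of $EG'(\vecp)$. Since $EG'$ contains equality edges, endowment edges ($w_{ij}>0$), and allocated-segment edges (positive allocation on a segment of $f_{ij}$ in $i$'s allocated partition), I would verify that for every agent $i \in C$: (i) $w_{ij}=0$ for all $j\notin C$, (ii) $x_{ij}=0$ for all $j\notin C$, and (iii) every segment in $i$'s current partition lies on a good in $C$. Moreover, by the definition of the allocated/current partitions, any segment of $f_{ij'}$ for $i \in C$ and $j'\notin C$ has bang-per-buck \emph{strictly} less than $i$'s current bang-per-buck at $\vecp$; otherwise $(i,j')$ would be an equality or allocated-segment edge and $j'$ would lie in $C$. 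Symmetric statements hold for agents outside $C$ with respect to goods in $C$, so agents and goods inside and outside $C$ are completely decoupled with respect to $\vecx$.

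Next I would analyze the effect of multiplying prices in $C$ by a factor $x' \in [1,x]$. For $i \in C$, property (i) gives $m_i' = x' m_i$; by (iii) the bang-per-buck values on all relevant segments of $i$ scale uniformly by $1/x'$, while those on goods outside $C$ are unchanged. Combined with the strict inequality above, the structural ordering of $i$'s allocated, current, and unallocated segments is preserved for every $x'$ strictly smaller than the stopping value. Hence $\vecx_i$, whose money spent on each good scales exactly by $x'$ together with $m_i$, remains budget feasible and satisfies the demand-bundle characterization at $\vecp'$. For $i \notin C$, none of the prices, budget, or bang-per-buck values relevant to $\vecx_i$ change, so $\vecx_i$ is trivially still a demand. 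Since $\vecx$ itself is untouched, the market-clearing bound $\sum_i x_{ij} \le (1+\varepsilon)\sum_i w_{ij}$ is preserved verbatim for every good.

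The main obstacle I foresee is the boundary case at $x' = x$, when a new equality edge $(i,j^*)$ with $j^*\notin C$ first appears. Here I must confirm that the new segment merely joins $i$'s current partition $Q_{t_i}$ at equality of bang-per-buck (not strict preference), so $\vecx_i$ with zero mass on that segment is still a valid demand under the partition characterization. I would also invoke Assumption~\ref{a:suff}, applied to the set $S$ of agents in $C$ (whose collective endowment support is $G\cap C \neq G$ because $C$ contains no good of price $1$ while some good outside $C$ has price $1$ by Corollary~\ref{cor:price1}), to guarantee that such a finite stopping value $x$ exists at all. Finally, termination of the outer while loop follows because each iteration strictly merges two connected components of $EG'(\vecp)$, so it halts after at most $|A|+|G|$ iterations, at which point every component contains a good of price $1$ while $\vecp$ has remained a $(1+\varepsilon)$-approximate equilibrium throughout.
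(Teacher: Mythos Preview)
Your argument is correct and follows the same high-level idea as the paper: the witnessing allocation $\vecx$ at $\vecp$ remains a valid demand allocation after the uniform price scaling on $C$, so the $(1+\varepsilon)$-bound on excess demand is preserved verbatim.

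The execution differs in one noteworthy respect. The paper first invokes Corollary~\ref{cor:connected} (goods with price $>1$ and negative surplus are connected in $EG(\vecp)$ to a price-$1$ good) together with the component isolation of $C$ in $EG'(\vecp)$ to conclude that every good in $C$ has surplus exactly zero; from this it asserts that ``the old allocation will still be feasible after the price change and the surpluses remain the same.'' You bypass the zero-surplus observation entirely and instead verify directly, via the bang-per-buck partition characterisation, that each $\vecx_i$ stays a demand bundle at $\vecp'$: budgets and segment caps for agents in $C$ scale by $x'$, relative bang-per-buck inside $C$ is preserved, and segments on goods outside $C$ only catch up to the current partition at the stopping value $x$. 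Your route is a bit more self-contained (it does not need the zero-surplus fact) and also supplies details the paper's three-sentence proof omits, namely the boundary case at $x'=x$, the finiteness of $x$ via Assumption~\ref{a:suff}, and termination of the while loop by component merging.
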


\begin{proof} 
Note that we increase prices of goods in a component $C$ when each good has price greater than one. Corollary \ref{cor:connected} implies that the surplus of each good in $C$ is zero. Hence, the old allocation will still be feasible after the price change and the surpluses remain the same. Therefore the updated $\vecp$ will remain a $(1+\varepsilon)$-approximate equilibrium. 
\end{proof}

At this stage we can assume that each component of $EG'(\vecp)$ has a good with price 1. We then work on each component of $EG'(\vecp)$ separately. We assume for convenience that $EG'(\vecp)$ is a single component. 

\begin{algorithm*}[t]
\caption{\AlgSCExact \label{alg:AlgSCExact}}
\DontPrintSemicolon
\SetKwInOut{Input}{Input}\SetKwInOut{Output}{Output}
\SetKw{Param}{Parameters:} 
\Input{Exchange market with a set $A$ of agents and a set $G$ of goods; $w_{ij}, u_{ijk}, B_{ijk}$ are market parameters as defined in Section
\ref{sec:model}}
%\Input{Prices $\vecp$ of a $(1+\varepsilon)$-approximate market equilibrium}
\Output{Prices $\vecp$ of an exact market equilibrium}
%\nonl \Param{$\varepsilon = 1/m^{4m} 2^{4m^2noL}$}
\BlankLine
%$\vecs = (s_1, \ldots, s_m) \ot \Oracle(\vecp_{t-1})$\;
$m \ot |G|;\ n \ot |A|;$ %o \ot |\{(i,j,k)\ |\ B_{ijk} > 0\}|$\; 
$L \ot $ total bit length of all input parameters; 
$\varepsilon \ot 1/m^{4m} 2^{4m^2L}$\;
$\vecp \ot $ $(1+\varepsilon)$-approximate equilibrium using \AlgSC with \AlgSCRounding\;
$\vecs \ot \Oracle(\vecp)$.  %/* we have $|s_i| \le \varepsilon,\ \forall i\in [m]$*/ \; 
If $\vecs = (0, 0, \ldots, 0)$ then \Return{$\vecp$}. \; 
$EG(\vecp) \ot$ (undirected) equality graph at prices $\vecp$ /*as defined at the beginning of Section \ref{sec:spending}*/\;
$F \ot \{(i,j,k)\ | \ (i,j,k) \mbox{ is an allocated segment}\}$\;
$EG'(\vecp) \ot EG(\vecp) \cup \{(i,j)\  |\ w_{ij} > 0\} \cup \{(i,j)\ |\ (i,j,k)\in F \mbox{ for some $k$}\}$.\;
\While{$EG'(\vecp)$ contains a connected component $C$ that does not has a good with price $1$}{
Find the smallest $x>1$ such that $EG(\vecp) \subset EG(\UpdatePrice(\vecp,x,C))$.\;
%/* Note that new edges will be between an agent $i$ in $C$ to a good $j$ not in $C$.*/\; 
$\vecp \ot \UpdatePrice(\vecp, x, C)$.\;
Recompute $EG(\vecp)$ and $EG'(\vecp)$.\;
}
%/* We show that $\vecy$ remains a valid demand at the updated prices $\vecp$.*/\;
/* Wlog we assume $EG'(\vecp)$ consists of only one connected component. If there are more than one, then apply the procedure below individually to each component */ \;
Let $C_1, \ldots, C_K$ be the connected components of $EG(\vecp)$\;
\label{AlgSCExact:eqn}Set up the following system of linear equations in price variable
{
\begin{enumerate}
\item $p'_i = 1$ for a good $i$ whose price is $1$ %// Due to the above while loop, there exists such a $i$.\;
\item For each component $C_l, 1 \le l \le K$
  \begin{enumerate}
  \item $|C_l| -1$ linearly independent equations of the form $u_{ijk}p'_{j'} = u_{ij'k'}p'_j$, where $(i, j, k)$ and $(i, j', k')$ are the current segments.
  \item $\sum_{j \in C_l\cap G} p'_j - \sum_j R_{lj}p'_j = 0$, where $R_{lj} = \sum_{i\in C_l\cap A} w_{ij}(1-\sum_{(i,j',k) \in F; j'\not\in C_l\cap G}B_{ij'k}) + \sum_{i\not\in C_l\cap A}w_{ij}\sum_{(i,j',k)\in F; j'\in C_l\cap G} B_{ij'k}$
  \end{enumerate}
\end{enumerate}
}

%\For{each good $j$ and component $l$}{
%$R_{jl} \ot \sum_i w_{ij}\sum_{j'\in G \cap C_l; (i,j',l) \in F} B_{ij'l} + \sum_{i\in B\cap C_l} w_{ij} (1 - \sum_{(i,j',l) \in F}B_{ij'l})$\;
%}

% There is essentially one price variable for each component because using bang-per-buck equations of current segments we can write $|C\cap G| -1$ linear equations in the price variables. Put $u_{ij_1k_1}p_{j_2} = u_{ij_2k_2}p_{j_1}$*/ \; Designate a good in each component as a 
%/* We show that the above linear equations in $p'_j$s are linearly independent.*/\;
$\vecp' \ot$ the solution of above system of equations\;
\Return{$\vecp'$}
\end{algorithm*}

Next we set up a system of linear equations in price variables of the form $A\vecp = b$, and show that the matrix $A$ has full rank. Finally, we will show that by perturbing the vector $b$ slightly we can get an exact equilibrium.  Consider the components of $EG(\vecp)$, i.e., after removing edges due to endowment and allocated segments from $EG'(\vecp)$. Let $C_1, \dots, C_K$ be the set of components of $EG(\vecp)$. In each $C_l, 1 \le l \le K$, all goods are connected with each other through a set of equality edges. Whenever there are two current segments $(i,j,k)$ and $(i,j',k')$ of the same agent $i$, we have the following relation between the prices of goods $j$ and $j'$: 
\begin{eqnarray}\label{eqn:mbb}
u_{ijk} p_{j'} = u_{ij'k'}p_j. 
\end{eqnarray}
This implies that for each component $C_l$, $|C_l \cap G| - 1$ of these equations are linearly independent, and there is essentially one free price variable. Further, since there is no money flow across components with respect to the current allocations, we have the budget balance condition for each component:
\begin{eqnarray}
\begin{aligned}
\text{ Remaining worth of goods - remaining budgets of agents (after allocated segments) } \\
\text{ = sum of surpluses}
\end{aligned}\notag
\end{eqnarray}
For component $C_l$, the condition reads
\[\sum_{j \in C_l \cap G} (p_j - \sum_{(i,j,k)\in F} B_{ijk} \sum_{j'} w_{ij'}p_{j'}) - \sum_{i\in C_l\cap A} (\sum_{j'} w_{ij'}p_{j'} - \sum_{(i,j,k) \in F} B_{ijk} \sum_{j'} w_{ij'}p_{j'}) = \sum_{j\in C_l\cap G}\varepsilon_j,\]
where $F$ is the set of allocated segments. 
%and $i\in C_l$ and $j\in C_l$ denote the summation over the set of agents and set of goods in
%$C_l$ respectively. Further the first summation is total revenue of remaining goods in $C_l$ after subtracting the revenue from the allocated
%segments, and the second summation is total remaining budgets of agents in $C_l$ after subtracting the expenditure on the allocated segments. 
%
Rearranging the above equation, we get:
%
%\begin{eqnarray}
%\sum_{j \in C_l} p_j \left(1 - (\sum_{i \in C_l} w_{ij}(1 - \sum_{(i,j',k) \in F; j'\not\in C_l} B_{ij'k})) - \sum_{i\not\in C_l} w_{ij}\sum_{(i,j',k)\in F;
%j'\in C_l} B_{ij'k}\right)\nonumber \\ 
%- \sum_{j \not\in C_l} p_{j} \left(\sum_{i\in C_l} w_{ij}(1-\sum_{(i,j',k) \in F; j'\not\in C_l}B_{ij'k}) + \sum_{i\not\in C_l}w_{ij}\sum_{(i,j',k)\in F; j'\in C_l}
%  B_{ij'k}\right) = \varepsilon_l
%\end{eqnarray}
%
%Equivalently, for each component $l$ we have 
%
\begin{eqnarray}\label{eqn:mc}
\sum_{j \in C_l\cap G} p_j - \sum_{j\in G} p_j R_{lj} = \sum_{j\in C_l\cap G}\varepsilon_j, \hspace{7cm} \\
\mbox{ where }  R_{lj} = \sum_{i\in C_l\cap A} w_{ij}(1-\sum_{(i,j',k) \in F; j'\not\in C_l\cap G}B_{ij'k}) + \sum_{i\not\in C_l\cap A}w_{ij}\sum_{(i,j',k)\in F; j'\in
C_l\cap G} B_{ij'k}\notag
%\text{ and } \varepsilon_l = \sum_{j \in C_l} \varepsilon_j
\end{eqnarray}
Each $R_{lj}$ is a rational number with denominator at most $2^{2L}$, where $L$ is the total bit length of all input parameters $w_{ij}, u_{ijk}$ and $B_{ijk}$.%, and $o$ is the total number of segments in the spending constraint utility functions of all agents. 
\begin{lemma}
	\label{lem:rlj} 
	For every $1 \leq l \leq K$ and $j \in C_l \cap G$, $0 \le R_{lj} \le 1$. For every $j \in G$, $\sum_l R_{lj} = 1$. 
\end{lemma}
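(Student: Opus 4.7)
The plan is to establish the identity in (2) first by a double-counting / swap-of-summation argument, and then derive both bounds in (1) by combining (2) with a simple non-negativity argument that uses the budget-feasibility of the allocated segments.

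For part (2), I would rewrite $\sum_l R_{lj}$ by exchanging the outer sum over components with the inner sum over agents. Every agent $i$ belongs to exactly one component $C_{l(i)}$ of $EG(\vecp)$, so for each $i$ and each $l$, $w_{ij}$ appears in $R_{lj}$ in exactly one of the two sums: via the first sum (with coefficient $1 - \sum_{(i,j',k)\in F;\, j'\notin C_{l(i)}\cap G} B_{ij'k}$) when $l = l(i)$, and via the second sum (with coefficient $\sum_{(i,j',k)\in F;\, j'\in C_l\cap G} B_{ij'k}$) when $l \neq l(i)$. Summing the second type of coefficient over all $l \neq l(i)$ gives exactly $\sum_{(i,j',k)\in F;\, j'\notin C_{l(i)}\cap G} B_{ij'k}$, and this cancels the subtracted piece in the $l = l(i)$ contribution. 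Hence agent $i$'s total contribution is $w_{ij}$, and summing over $i$ gives $\sum_i w_{ij} = w_j = 1$ by the normalization $w_j = 1$ assumed in Section~\ref{sec:model}.

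For part (1), non-negativity of $R_{lj}$ for any $j$ (in particular for $j \in C_l \cap G$) is handled termwise. The second sum $\sum_{i\notin C_l\cap A} w_{ij}\sum_{(i,j',k)\in F;\, j'\in C_l\cap G} B_{ij'k}$ is a sum of products of non-negative quantities, hence non-negative. For the first sum, I need $1 - \sum_{(i,j',k)\in F;\, j'\notin C_l\cap G} B_{ij'k} \ge 0$, which follows from the stronger inequality $\sum_{(i,j',k)\in F} B_{ij'k} \le 1$. The latter holds because, by definition, the total money agent $i$ spends on allocated segments equals $m_i\sum_{(i,j',k)\in F} B_{ij'k}$ and cannot exceed the total budget $m_i$. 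With $R_{lj} \ge 0$ established for all $l$, the upper bound $R_{lj} \le 1$ is immediate from part (2): $R_{lj} = 1 - \sum_{l'\ne l} R_{l'j} \le 1$.

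I do not expect a substantial obstacle here; the argument is a routine algebraic verification. The only care needed is the bookkeeping to ensure each allocated segment $(i,j',k)\in F$ is accounted for exactly once when swapping sums — specifically verifying that for a fixed $i$, the sets $\{(i,j',k)\in F : j' \in C_l \cap G\}$ for $l \neq l(i)$ partition $\{(i,j',k)\in F : j' \notin C_{l(i)}\cap G\}$, which is true because the components $C_1,\dots,C_K$ partition $G$.
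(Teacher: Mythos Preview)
Your argument is correct and follows essentially the same approach as the paper: both use that the components $C_1,\dots,C_K$ partition $A\cup G$ together with the budget-feasibility bound $\sum_{(i,j',k)\in F} B_{ij'k}\le 1$ to obtain the coefficient bounds, and both reduce $\sum_l R_{lj}$ to $\sum_i w_{ij}=1$ via the same swap-of-summation. The only cosmetic difference is that you deduce $R_{lj}\le 1$ from part (2), whereas the paper obtains it directly from the fact that each coefficient multiplying $w_{ij}$ lies in $[0,1]$, so $R_{lj}\le \sum_i w_{ij}=1$.
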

\begin{proof}
	For each good $j$, we have $\sum_i w_{ij} = 1$. Further both $\sum_{(i,j',k) \in F; j'\not\in C_l\cap G}B_{ij'k}$ and $\sum_{(i,j',k)\in F; j'\in C_l} B_{ij'k}$ take values in $[0, 1]$, hence the first claim of the lemma follows. For the second claim, $\sum_l R_{lj} = \sum_i w_{ij} = 1$.
\end{proof} 
Let $M$ be the coefficient matrix of the system of equations \eqref{eqn:mc}. Then 
\[
M_{lj} = \left\{\begin{tabular}{cc} $1 - R_{lj}$ & if $j\in C_l\cap G$ \\
$-R_{lj}$, & \mbox{ otherwise } \end{tabular}\right.
\]
From Lemma~\ref{lem:rlj}, each column $j$ of $M$ has at most one positive entry, namely $M_{lj}$ where $j \in C_l\cap G$, and each column of $M$ sums to zero. There are in total $L$ equations of type~\eqref{eqn:mc}, one for each component. 
%
%\begin{lemma}
%For any $K-1$ components, the equations of type \eqref{eqn:mc} are linearly independent. 
%\end{lemma} 
%
%\begin{proof}
%\end{proof} 
%
Next we eliminate the equations of type~\eqref{eqn:mbb}. Then there will be only one price variable per component. We designate a representative good for each component, say good $l$ for $C_l$. %This can be achieved by renumbering of goods. 
Then each price $p_j$ in $C_l$ is a constant multiple of price $p_l$ of good $l$. Let $p_j = \alpha_jp_l$, where $\alpha_j$ is a rational number whose numerator and denominator are products of at most $m$ $u_{ijk}$'s. Now we can rewrite the budget balance equation~\eqref{eqn:mc} for $C_l$ in terms of $L$ price variables as follows: 
\[
p_l \sum_{j \in C_l\cap G} \alpha_j  - \sum_{l'}p_{l'}\sum_{j \in C_{l'}\cap G} \alpha_j R_{lj} = \sum_{j\in C_l\cap G}\varepsilon_j\notag
\]
Let $T_{l} = \sum_{j \in C_l\cap G} \alpha_j, S_{ll'} = \sum_{j \in C_{l'}\cap G} \alpha_j R_{lj}$, and $\varepsilon_l = \sum_{j\in C_l\cap G} \varepsilon_j$. The above equation becomes: 
\begin{equation}
\label{eqn:mconep}
p_l T_l  - \sum_{l'}p_{l'}S_{ll'} = \varepsilon_l
\end{equation}
Let $N$ be the coefficient matrix of the system of equations \eqref{eqn:mconep}. Then 
\[
N_{ll'} = \left\{\begin{tabular}{cc} $T_l - S_{ll}$ & if $l = l'$ \\
$-S_{ll'}$, & \mbox{ otherwise } \end{tabular}\right.
\]
Since both $T_l$ and $S_{ll'}$ are rational numbers with denominator at most $2^{mL}$ and $2^{2mL}$ respectively, each $N_{ll'}$ is a rational number with denominator at most $2^{3mL}$. 
\begin{lemma}
	\label{lem:N}
	$0 \le N_{ll}$ for every $l$; $N_{ll'} \le 0$ for every $l \neq l'$; and $\sum_{l} N_{ll'} = 0$ for every $l'$. 
\end{lemma}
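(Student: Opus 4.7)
The plan is to prove all three claims by direct computation, deriving them from Lemma~\ref{lem:rlj} and from the expressions for $T_l$, $S_{ll'}$ introduced just before the statement. Since $p_j = \alpha_j p_l$ with both prices positive, every $\alpha_j$ is a positive rational, which together with $R_{lj} \in [0,1]$ will make the sign analysis straightforward.

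For the off-diagonal claim ($l \neq l'$), I would simply observe that $N_{ll'} = -S_{ll'} = -\sum_{j \in C_{l'} \cap G}\alpha_j R_{lj}$; each term is nonnegative by Lemma~\ref{lem:rlj} and positivity of the $\alpha_j$, so $N_{ll'} \le 0$. For the diagonal claim, I would rewrite
\[
N_{ll} \;=\; T_l - S_{ll} \;=\; \sum_{j \in C_l \cap G}\alpha_j\bigl(1 - R_{lj}\bigr),
\]
which is nonnegative by the bound $R_{lj} \le 1$ from Lemma~\ref{lem:rlj}.

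The column-sum identity is the one that actually uses both halves of Lemma~\ref{lem:rlj}. The plan is to write
\[
\sum_l N_{ll'} \;=\; (T_{l'} - S_{l'l'}) - \sum_{l \neq l'} S_{ll'} \;=\; T_{l'} - \sum_l S_{ll'},
\]
then swap the order of summation in $\sum_l S_{ll'} = \sum_{j \in C_{l'} \cap G}\alpha_j \sum_l R_{lj}$ and invoke $\sum_l R_{lj} = 1$ from Lemma~\ref{lem:rlj} to collapse the inner sum to $T_{l'}$, yielding $0$.

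There is no real obstacle here; the lemma is essentially a bookkeeping consequence of Lemma~\ref{lem:rlj}. The one point worth verifying carefully is that each good $j$ lies in exactly one component $C_{l'}$, so that the exchange of summations in the last step is valid and that $\alpha_j$ is well defined (since $\alpha_j$ depends on the designated representative good of $j$'s component, which is unique). This is part of the setup of the system of equations, so it can be cited rather than reproved.
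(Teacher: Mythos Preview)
Your proposal is correct and essentially identical to the paper's own proof: the paper also says the first two claims follow directly from Lemma~\ref{lem:rlj}, and for the column-sum identity it performs exactly the same swap of summations, $\sum_l S_{ll'} = \sum_{j \in C_{l'}\cap G}\alpha_j \sum_l R_{lj} = T_{l'}$. Your extra remark that each good lies in a unique component (so $\alpha_j$ is well defined and the summation interchange is valid) is a helpful clarification that the paper leaves implicit.
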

\begin{proof}
	The proof essentially follows using Lemma~\ref{lem:rlj}. The first two claims are straightforward, and for the last claim we have 
\begin{eqnarray}
\begin{aligned}\notag
 \sum_l S_{ll'} = \sum_l \sum_{j \in C_{l'}\cap G} \alpha_j R_{lj}  = \sum_{j \in C_{l'}\cap G} \alpha_j \sum_l R_{lj} = \sum_{j \in C_{l'}\cap G} \alpha_j = T_{l'}
\end{aligned}
\end{eqnarray}
\end{proof} 
Since there is a good $i$ with price $1$, we assume with loss of generality that good $i$ belongs to component $K$, hence $\alpha_K p_K = 1$. The next lemma is an adaptation of a result of~\cite{DuanM15}. 
%Next we show that the first $K-1$ equations of \eqref{eqn:mconep} and the equation $\alpha_K p_K = 1$ are linearly independent. This is essentially
%same proof as in \cite{DM}. For the sake of completeness, we are reproducing it here. 
%
\begin{lemma}
The $K$ equations consisting of equation~\eqref{eqn:mconep} for components $1, \ldots, K-1$ and the equation $\alpha_K p_K = 1$ are linearly independent. 
\end{lemma}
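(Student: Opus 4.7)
The plan is to form the $K \times K$ coefficient matrix $A$ whose first $K-1$ rows are rows $1,\ldots,K-1$ of $N$ (from equation~\eqref{eqn:mconep}) and whose last row is $\alpha_K e_K^T$ (from the normalization $\alpha_K p_K = 1$), and to show $\det A \neq 0$. Since this last row has a single nonzero entry in column $K$, a Laplace expansion gives $\det A = \pm\,\alpha_K \det N'$, where $N'$ is the $(K-1)\times(K-1)$ principal submatrix of $N$ obtained by deleting row $K$ and column $K$. Because $\alpha_K > 0$, the task reduces to proving that $N'$ is non-singular.

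Next I would exploit the Laplacian-like structure given by Lemma~\ref{lem:N}: $N$ has non-negative diagonal entries, non-positive off-diagonals, and every column sums to $0$. Thus $N$ is (up to transpose) the weighted Laplacian of a directed graph $H$ on vertices $\{1,\ldots,K\}$ with arc weights $S_{ll'} \ge 0$ that represent money flowing between components of $EG(\vecp)$ via endowments and allocated segments. Non-singularity of $N'$ then follows from a standard connectivity criterion for such matrices: it suffices that every vertex in $\{1,\ldots,K-1\}$ can reach vertex $K$ along a directed path of positive-weight arcs in $H$. This can be deduced either from the classical theorem on weakly chained diagonally dominant matrices (the rows of $N'^T$ have non-negative sums equal to $S_{Kl}$, and the chain condition is exactly reachability of a row with strictly positive sum), or equivalently from the matrix-tree theorem for directed Laplacians, which expresses $\det N'$ as a positive sum over spanning in-arborescences of $H$ rooted at $K$.

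The concluding step is to verify this connectivity using our standing assumption that $EG'(\vecp)$ is connected. Each cross-component edge of $EG'(\vecp)$—either a positive-endowment edge $w_{ij} > 0$ with $i \in C_l$ and $j \in C_{l'}$ for $l \neq l'$, or an allocated-segment edge $(i,j,k) \in F$ crossing components—contributes a strictly positive term to some $R_{lj}$ and hence a positive entry of the flow matrix, producing an arc of $H$. Chaining these arcs along a path in $EG'(\vecp)$ from any component to the component $C_K$ containing the price-$1$ good provides the required path in $H$.

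The main obstacle I anticipate is the case analysis in this last step: one must check that every cross-component edge in $EG'(\vecp)$ actually yields a nonzero $S_{ll'}$, despite the fact that factors of the form $1 - \sum B_{ij'k}$ appearing in the first term of $R_{lj}$ could in principle vanish. When that happens, one must argue that the second term in $R_{lj}$ (capturing allocated spending that crosses back into $C_l$) provides the needed positivity, possibly yielding an arc in the reverse direction in $H$; one then leverages the column-sum-zero structure of $N$ to convert undirected connectivity of $H$ into the directed reachability of $C_K$ that the weakly-chained-diagonal-dominance / matrix-tree argument requires.
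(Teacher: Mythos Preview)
Your reduction to the non-singularity of the $(K-1)\times(K-1)$ principal minor of $N$ is fine, and the matrix-tree / weakly-chained-diagonally-dominant viewpoint is natural.  The gap is in the last step.  You propose to ``leverage the column-sum-zero structure of $N$ to convert undirected connectivity of $H$ into the directed reachability of $C_K$.''  That conversion is false in general: take $K=2$ and
\[
N=\begin{pmatrix}0&-S_{12}\\ 0& S_{12}\end{pmatrix},\qquad S_{12}>0,\ S_{21}=0.
\]
This $N$ has zero column sums, the underlying undirected graph on $\{1,2\}$ is connected, yet vertex $1$ cannot reach vertex $2$ along positive-weight arcs, and the $1\times 1$ principal minor $N_{11}=0$ is singular.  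Such a configuration is perfectly consistent with connectivity of $EG'(\vecp)$: an agent in $C_1$ may own goods in $C_2$ (giving $S_{12}>0$) while no agent in $C_2$ owns goods in $C_1$ and no allocated spending crosses back (giving $S_{21}=0$), even after you chase the ``second term'' of $R_{lj}$.  So neither the column-sum property nor a case analysis on $1-\sum B_{ij'k}$ alone can supply the missing direction.

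What rules out this obstruction is precisely the ingredient you never invoke: that $\vecp$ is a $(1+\varepsilon)$-approximate equilibrium with very small $\varepsilon$.  In the $2\times 2$ example the budget-balance equation for $C_1$ reads $-S_{12}p_2=\varepsilon_1$, forcing $|\varepsilon_1|\ge S_{12}\ge 2^{-3mL}$, which contradicts $|\varepsilon_1|\le m\varepsilon$.  The paper's proof uses exactly this mechanism: from a putative left-kernel vector it first extracts (via the column-sum property) a set $\{1,\dots,K'\}$ with no \emph{outgoing} arcs, and then sums the budget-balance equations~\eqref{eqn:mconep} over that set to show that any \emph{incoming} arc would create total surplus at least $2^{-3mL}$, impossible for small $\varepsilon$.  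Only after both directions are killed does connectivity of $EG'(\vecp)$ give the contradiction.  Your plan can be repaired along the same lines, but as written the directed-reachability step is unsupported.
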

\begin{proof}
Let $N'$ be the coefficient matrix of this system of equations. It is easy to check that $N'$ is the same as $N$ except that $N'_{Li} = 0,\ 1 \le i \le K-1$. Assume by contradiction that there is a non-zero vector $a = (a_1, \dots, a_K)$ such that $a^TN' = 0$. Let $a_{l_0}$ be the entry in $\{a_1, \dots, a_{K-1}\}$ that has the largest absolute value, and with loss of generality we assume that $a_{l_0} > 0$ and the first $K'$ entries of $a$ are equal to $a_{l_0}$, i.e., $a_{1} = \dots = a_{K'} = a_{l_0}$. 

For each $l \le K'$ we have
\begin{eqnarray}
\begin{aligned}\notag
 0 & = \sum_{1 \le h < K} a_h N_{hl}  + a_K \cdot 0 \\ 
   & = a_{l_0} \sum_{1 \le h \le K} N_{hl} - a_{l_0}N_{Kl} + \sum_{K' < h < K} (a_h - a_{l_0}) N_{hl}\\
   & = - a_{l_0}N_{Kl} + \sum_{K' < h < K} (a_h - a_{l_0}) N_{hl}\\
\end{aligned}
\end{eqnarray}
Using Lemma~\ref{lem:N}, the above implies that $N_{hl} = 0$ for $K' < h \le K$ and $l \le K'$. Next we show that $N_{lh} = 0$ for $1 \le l \le K'$ and $K'< h \le K$ as well. By summing up the equations~\eqref{eqn:mconep} for $1\le l\le K'$, we get
\begin{eqnarray}
\begin{aligned}\notag
 \sum_{l \le K'} \varepsilon_l %& = \sum_{l \le K'} N_{ll}p_l  + \sum_{l \le K'} \sum_{h; h\neq l} N_{lh}p_h\\ 
   & = \sum_{l \le K'} N_{ll}p_l  + \sum_{h} \sum_{l \le K'; l \neq h} N_{lh}p_h\\ 
   & = \sum_{l \le K'} N_{ll}p_l  + \sum_{h \le K'} \sum_{l \le K'; l \neq h} N_{lh}p_h + \sum_{h > K'} \sum_{l \le K'} N_{lh}p_h\\ 
   & = \sum_{h \le K'} N_{hh}p_h  + \sum_{h \le K'} \sum_{l \neq h} N_{lh}p_h + \sum_{h > K'} \sum_{l \le K'} N_{lh}p_h\\ 
   & = \sum_{h \le K'} p_h \sum_l N_{lh}  + \sum_{h > K'} \sum_{l \le K'} N_{lh}p_h\\ 
   & = \sum_{h > K'} \sum_{l \le K'} N_{lh}p_h.\\ 
\end{aligned}
\end{eqnarray}
Since $p_h \ge 1$ for every $h$, if some $N_{lh}$ is non-zero for $l \le K'$ and $h > K'$, then the right-hand side is at most $-1/2^{3mL}$, which is a contradiction. It implies that $N_{lh} \neq 0$ iff both $l$ and $h$ are either less than or equal to $K'$ or larger than $K'$. Let $A_1$ and $G_1$ denote the set of agents and goods of components $C_1, \ldots, C_{K'}$ respectively. Let $A_2 = A \setminus A_1$ and $G_2 = G \setminus G_1$. We can further conclude that 
\begin{itemize}
\item $w_{ij} = 0$ for every $i \in A_1,\ j \in G_2$, otherwise $N_{lh}\neq 0$ for $l\le K'$ and $h> K'$. Similarly $w_{ij} = 0$ for every $i \in A_2,\ j \in G_1$. 
\item Agents in $A_1$ have no allocated goods in $G_2$, otherwise budget balance equations of components $C_l, l> K'$ will have some non-zero $p_j, 1
\le j \le K'$ and that will make $N_{lh} \neq 0$ for $l> K'$ and $h \le K'$. Similarly agents in $A_2$ have no allocated goods in $G_1$. 
\end{itemize}
This is impossible since we have assumed that $EG'(\vecp)$ consists of one single component. Therefore $N'$ must have full rank. 
\end{proof}
Overall, we have established that equations of~\eqref{eqn:mbb},~\eqref{eqn:mc} and $p_i = 1$ are linearly independent. We can write this system in the matrix form as $A\vecp = b$, where $A$ is invertible and all entries are rational numbers with common denominator at most $2^{2L}$. 

Consider the system $A\vecp' = b'$ for a price vector $\vecp'$, where $b'$ is a unit vector with a one in the row corresponding to the equation $p_i = 1$. Next we show that $\vecp'$ gives an exact equilibrium. For that we need to show the following:
\begin{itemize}
\item[(a)] Equality edges with respect to $\vecp'$ and $\vecp$ are the same. This will imply that all allocated segments remain allocated. 
\item[(b)] $(s, A\cup G\cup t)$ is a min-cut in $N(\vecp')$. Combining the equation(2b) in step~\ref{AlgSCExact:eqn} of \AlgSCExact, this will imply that there is a feasible allocation on current
segments which gives surplus of each good. 
\end{itemize}
By Cramer's rule and Lemma~\ref{lem:rlj}, the solution of $A\vecp' = b'$ is a vector of rational numbers with common denominator $D \le m^m2^{2m(m+1)L}$. That is, all $p_i'$ are of form $q_i/D$, where $q_i, D$ are integers. Since $||b| - |b'|| \le 2\varepsilon$, we have $|p_i'- p_i| \le 2\varepsilon D$ for every $i$. Let $\varepsilon' = 2D^2\varepsilon$, then $|D p_i - q_i| = D |p_i - p_i'| \le 2 \varepsilon D^2 = \varepsilon'$. For part (a), suppose $u_{ijk}p_{j'} \le u_{ij'k'}p_j$ then 
\begin{eqnarray}
\begin{aligned}\notag
u_{ijk}q_{j'}  \le u_{ijk} (Dp_{j'} + \varepsilon') \le D u_{ij'k'}p_j + u_{ijk}\varepsilon' & \le u_{ij'k'}q_j + (u_{ij'k'} + u_{ijk})\varepsilon'\\
& < u_{ij'k'}q_j + 1
\end{aligned}
\end{eqnarray}
Since both $u_{ijk}q_{j'}$ and $u_{ij'k'}q_j$ are integers, we have $u_{ijk}p'_{j'} \le u_{ij'k'}p'_j$. This implies that all equality edges with respect to $\vecp'$ and $\vecp$ are the same. That further implies that all allocated segments remain allocated. 

For the part (b), consider the network $N(\vecp')$ with respect to prices $\vecp'$. Recall that in $N(\vecp')$
\begin{itemize}
\item The capacity of edge from source node $s$ to agent $i$ is $\sum_{j'} w_{ij'}p'_{j'} (1 - \sum_{j,k: (i,j,k)\in F} B_{ijk})$. 
%\item The capacity of edge from good $j$ to sink node $t$ is $p'_j - \sum_{i,k: (i,j,k) \in F} B_{ijk}\sum_{j'}w_{ij'}p'_{j'}$.
\item The capacity of a MBB edge from agent $i$ to good $j$ is $B_{ijk}\sum_{j'}w_{ij'}p'_{j'}$. 
\end{itemize}
%Similarly, the capacity of the edge from a good node $j$ to the sink $t$ is $p'_j - \sum_{i,k: (i,j,k) \in F} %B_{ijk}\sum_{j'}w_{ij'}p'_{j'}$. 

Since $p'_{j}$'s are all rational numbers with a common denominator $D \le m^m2^{2m(m+1)L}$, all capacities in
$N(\vecp')$ are rational numbers with a common denominator no more than $D^2$. Also because $|p'_{i} - p_i| \le 2\varepsilon D$, the capacity of each edge $e$ in $N(\vecp)$ is at most the capacity of $e$ in $N(\vecp')$ plus $4\varepsilon D$. 

%Note that the capacities of cuts $(s, A\cup G\cup t)$ and $(s\cup A\cup G, t)$ are same. 
Let $c$ be the capacity of cut $(s, A\cup G\cup t)$ in $N(\vecp')$. Suppose there is a min cut in $N(\vecp')$ with value less than $c$. Then that value is at most $c-1/D^2$. This same cut in $N(\vecp)$ will have value at most $c-1/D^2 + (m+n+mn)4D\varepsilon$. Also the capacity of the cut $(s, A\cup G\cup t)$ in $N(\vecp)$ is at least $c - 4n\varepsilon D$. Therefore the total surplus of goods in $N(\vecp)$ is at least 
\[c - 4n\varepsilon D - \left(c-\frac{1}{D^2}\right) - (m+n+mn)4\varepsilon D = \frac{1}{D^2} - 4(m+2n+mn)\varepsilon D > \varepsilon,\]
which is a contradiction. Hence condition (b) also holds.

We conclude with our main theorem.

\vspace{2ex}

{\noindent \bf Theorem~\ref{thm:exchangelinear}}
{\it  For any spending constraint exchange market satisfying Assumption \ref{a:suff}, \AlgSCExact returns the price vector of a market equilibrium in time polynomial in $m$ and $L$.}

\begin{proof}
By Lemma~\ref{lem:apxlinear} we know that we arrive at a $(1+\varepsilon)$-approximate equilibrium in time polynomial in $m$, $L$, and $\log(1/\varepsilon)$.
%Note that \Rounding can generate a small increase in the potential function $\Phi(\vecp)$ each time. Hence, by the time we have $\Phi(\vecp) < \varepsilon'^2$, the bit length to represent a price in $\vecp$ may be related to $\varepsilon$. Therefore, we need extra work to 
For the exact equilibrium, \AlgSCExact solves a system of $K\le m$ linear equations whose entries are polynomially bounded in $m$ and $L$, hence can be done in polynomial time. Further, $\log(1/\varepsilon)$ is a polynomial of $m$ and $L$ as $\varepsilon = 1/m^{4m} 2^{4m^2L}$. Since $M$ is a polynomial of $m$, $L$, binary search and \Rounding run in polynomial time in each round of the framework. Hence, the total running time of the algorithm is polynomial in $m$ and $L$.
\end{proof}

\end{document}